\documentclass[10pt]{elsart}               % The use of LaTeX2e is preferred.
\usepackage[papersize={8.5in,11in}, margin=1in]{geometry} % letter
\usepackage{tabularx}                 % For tables
\usepackage{bm}                       % Bold math
\usepackage{import}
\usepackage{amsmath,amssymb,amsfonts} % Math stuff
\usepackage{booktabs}                 % Nicer tables
\usepackage{makecell}                 % Better table cells
\usepackage{tikz}                     % Drawing
\usepackage{blkarray}                 % Block arrays
\usepackage{mathtools}                % Math extensions
\usepackage{cuted}                    % Stripped equations
\usepackage{cite}                     % Citations
\usepackage{array}                    % Table columns
\usepackage{stackengine}              % Stacked symbols
\usepackage[linktocpage=true]{hyperref} % Only ONCE, and LAST
\usepackage{cleveref}                 % Clever referencing
\usepackage{cases,empheq}             % Cases in equations
\usepackage{mdframed}                 % Framed environments
\usepackage{xcolor,soul,framed}       % Colors, highlighting
\colorlet{shadecolor}{yellow}
\usepackage[linktocpage=true]{hyperref} % Only ONCE, and LAST
\newcommand{\tr}{\textcolor{black}}
\newcommand{\tb}{\textcolor{black}}
\newcommand{\mt}{\mathtt}

\newcommand{\mb}{\mathbf}

\newcommand{\N}{\mathsf{N}}
\newcommand{\x}{\bm{x}}
\newcommand{\1}{\mathtt{A}}
\newcommand{\2}{\mathtt{B}}
\newcommand{\Q}{\mathtt{Q}}
\newcommand{\s}{\mathtt{s}}

\newcommand{\p}{\mathsf{p}}
\newcommand{\pp}{\mathsf{P}}

\newcommand{\y}{\bm{y}}

\newcommand{\V}{\mathcal{V}}
\newcommand{\Z}{\mathbb{Z}}

\newcommand{\X}{\mathcal{X}}
\def\mathbi#1{\textbf{\em #1}}
\newtheorem{theorem}{Theorem}
\newtheorem{lemma}{Lemma}
\newtheorem{assumption}{Assumption}
\newtheorem{remark}{Remark}
\newtheorem{corr}{Corollary}
\newtheorem{definition}{Definition}
\newtheorem{proposition}{Proposition}
\newtheorem{example}{Example}
\newtheorem{revisit}{Revisiting Example}

\newtheorem{lemmaAppendix}{Lemma}
\newenvironment{proof}[1][Proof]{\par\noindent\textit{#1.} }{\hfill$\square$\par}
\newenvironment{proofref}[1]
{\par\noindent\textit{Proof of {#1}.} }
{\hfill$\square$\par}

\begin{document}
	
	\begin{frontmatter}
		%\runtitle{Insert a suggested running title}  % Running title for regular 
		% papers but only if the title  
		% is over 5 words. Running title 
		% is not shown in output.
		
		\title{From Discrete to Continuous Mixed Populations of Conformists, Nonconformists, and Imitators} % Title, preferably not more 
		% than 10 words.

		\author[Rome]{Azadeh Aghaeeyan}    % Add the 
		\author[Cal]{Pouria Ramazi}          % e-mail address 
		% (ead) as shown
				\address[Rome]{Department of Mathematics and Statistics, Brock University, Canada.}  
		\address[Cal]{Department of Biological Sciences and Department of Mathematics and Statistics, University of Calgary, Canada.}  % Please supply                                              
           % full addresses
		% here.

		% \begin{keyword}                           % Five to ten keywords,  
			% Game Theory, Population Dynamics, Convergence Analysis               % chosen from the IFAC 
			% \end{keyword}                             % keyword list or with the 
		% help of the Automatica 
		% keyword wizard

		\begin{abstract}                          % Abstract of not more than 200 words.
			In two-strategy decision-making problems, individuals often imitate the highest earners or choose either the common or rare strategy.
			Individuals who benefit from the common strategy are \emph{conformists}, whereas those who profit by choosing the less common one are called \emph{nonconformists}.
			The population proportions of the two strategies may undergo perpetual fluctuations
			in finite, discrete, heterogeneous populations of imitators, conformists, and nonconformists.
			How these fluctuations evolve as population size increases was left as an open question and is addressed in this paper.
			We show that the family of Markov chains describing the discrete population dynamics forms a generalized stochastic approximation process for a differential inclusion—the continuous-time dynamics.
			Furthermore, we prove that the continuous-time dynamics always equilibrate. 
			Then, by leveraging results from the stochastic approximation theory, we show that the amplitudes of fluctuations in the proportions of the two strategies in the population approach zero with probability one when the population size grows to infinity.
			Our results suggest that large-scale perpetual fluctuations are unlikely in large, well-mixed populations consisting of these three types, particularly when imitators follow the highest earners.
		\end{abstract}
		
	\end{frontmatter}
	
	\section{Introduction}
	Many everyday decisions, such as whether to follow a fashion trend, join a volunteer group, or get a flu shot, involve choosing between two competing strategies \cite{zhang2018fashion, kawagoe2023asymmetric, chapman2012using}.
	\emph{Imitation} and \emph{best response} are two main rules commonly used to explain individual decision making, either in two-strategy scenarios \cite{bauch2004vaccination, 10172285,10416815} or in situations involving multiple strategies \cite{comoImitation, replicator, replicator2, barreiro2018constrained}.
	Under \tb{one of the flavors of} imitation update rules, individuals copy the strategies of the highest-earning peers, whereas according to the best-response update rule, individuals choose the strategy that maximizes their payoff.
	In particular, \emph{conformist} best-responders benefit from strategies already adopted by a portion of the population greater than their \emph{threshold}. 
	\emph{Nonconformist} best-responders, on the other hand, favor strategies that are uncommon relative to their thresholds \cite{kaniovski2000adaptive,coordinationandanticoordination,arditt2,haslegrave2017majority}.
	These behaviors are captured by \emph{linear threshold models} and \emph{anti-threshold models}, respectively.
	
	Much effort has been devoted to characterizing the asymptotic behavior of populations composed solely of best responders \cite{7258336,10453658,ZHU2023110707,bestresponsePotential,arefizadeh2023robustness} or only imitators \cite{comoImitation,sandholm2010population}, in both structured and well-mixed settings. 
	Additionally, several studies have proposed control mechanisms to steer the population toward a desired distribution of strategies \cite{zino2025equilibrium,7458850,7438812}.
	
	In contrast, to the best of our knowledge, very few studies have investigated populations that include both best-responders and imitators \tb{with almost arbitrarily payoff matrices}.
	In \cite{Hien2}, the dynamics of finite discrete heterogeneous populations of imitators, conformists, and nonconformists were formulated, and their equilibria were characterized. 
	It has been shown that, in the long run, the population may settle into an  
	equilibrium, where each individual is satisfied with her strategy, or undergo perpetual fluctuations, where some individuals keep revisiting their strategies, resulting in fluctuations in the population proportion of one of the two strategy players.
	
	% Perpetual fluctuations as asymptotic behavior have also been reported for populations consisting entirely of imitators \cite{fu2024evolutionary} or entirely of conformists and nonconformists \cite{roohi}.
	% In \cite{aghaeeyan2023discrete, aghaeeyanCDC}, using results from stochastic approximation theory \cite{roth2013stochastic,benaim2005stochastic,benaim1998recursive}, it has been shown that the amplitudes of the fluctuations in these populations converge almost surely to zero as the population size approaches infinity.
	% Whether a similar conclusion holds for mixed populations of imitators, conformists, and nonconformists remains unsolved.
	
	\tb{But what would be the asymptotic behavior of such a population when the population size approaches infinity?
		To answer this question, we}
	adopt an approach similar to \cite{aghaeeyan2023discrete, aghaeeyanCDC}:
	We \emph{(i)} show that the discrete population dynamics form a Markov chain; \emph{(ii)}  show that the Markov chain, when indexed by population size, builds a \emph{generalized stochastic approximation process} for a differential inclusion; and \emph{(iii)}  analyze the corresponding differential inclusion to infer convergence properties.
	To achieve this, we expand the state space to include the dynamics of imitative subpopulations. 
	Additionally, to effectively model the imitators' dynamics, inspired by \cite{fu2024evolutionary}, we introduce the notions of maximum utilities and maximum active utilities.
	The coexistence of imitators, conformists, and anticonformists leads to new equilibria, specifically, continua of equilibrium points.
	
	In \cite{aghaeeyan2023discrete}, we studied the differential inclusion through the analysis of a scalar differential inclusion--\emph{the abstract dynamics}. 
	The abstract dynamics were equivalent to the dynamics of the population proportion of one of the two strategies (Strategy $\1$).
	Here, we apply a similar analytical framework, though significant differences arise due to the inclusion of imitators, so the abstract dynamics no longer align directly with the proportion dynamics of $\1$-players.
	Consequently, trajectories starting at boundary points require separate analysis. 
	A related scenario involving imitators with diagonal anticoordination payoff matrices was studied in \cite{aghaeeyanCDC}, which handled boundary points through restrictive conditions on population distributions. In contrast, our current model is more flexible, allowing payoff matrices and population distributions to be almost arbitrary.

	\tb{The long-run behavior of populations—whether trajectories converge to equilibria or exhibit persistent fluctuations—has been studied extensively in evolutionary game theory and learning in games. These studies mostly use homogeneous revision protocols, such as best response, imitation, and their variants, and examine their deterministic limits; see, e.g., \cite{sandholm2010population,arefizadeh2023robustness,9234634}. In parallel, conformist (threshold) and nonconformist/contrarian (anti-threshold) behaviors are central in social-influence models \cite{granovetter1978threshold,galam2004contrarian}. Most available asymptotic characterizations, however, focus on populations in which all agents use the same decision rule (e.g., \cite{ZHU2023110707,bestresponsePotential}), or treat only limited heterogeneous mixtures, such as conformists and nonconformists without imitation (e.g., \cite{grabisch2019model,arditt2}). In this context, our contribution is threefold. (i) We derive the differential inclusion corresponding to the discrete dynamics of a mixed population of imitators, conformists, and nonconformists. We characterize its equilibria, including new mixed/non-mixed nonconformist-driven and mixed/non-mixed conformist-driven equilibrium classes (Lemma \ref{lem_equilibriumPointofContinuous}). These classes were absent in \cite{Hien2} due to the exclusion of integer thresholds and, in the mixed case, form continua rather than isolated points. (ii) We analyze the asymptotic behavior of the differential inclusion and show that solution trajectories always converge to these equilibria (\Cref{lem:attraction-population-single-equilibrium}).  (iii) We show that, as the population size approaches infinity, the amplitude of perpetual fluctuations in the proportion of $\1$-players converges to zero almost surely (\Cref{thm_birkhoff_center_invariant_population} and \Cref{corollary}).}
	\subsection*{Notations.} 
	Denote the set of real numbers and non-negative integers by $\mathbb{R}$ and $ \Z_{\geq 0}$.
	\tb{Let} $\Z^{n}$ and $\mathbb{R}^n$ denote the sets of $n \times 1$ vectors with integer and real elements, respectively.
	For a positive integer $\N$, the notation $\frac{1}{\N}\Z^{n}$ denotes the set of  vectors of size $n$ with rational components with denominators dividing $\N.$
	A calligraphic font is used to denote sets, and the boundary of a set $\X$ is denoted by $\partial \X.$
	Vectors are denoted in boldface, and $x_p$ represents the $p^{\text{th}}$ element of vector $\x.$
	The notation $[k]$ for a positive integer $k$ denotes $\{1,2,\ldots,k \}$. % and for $k=0$ denotes the singleton set $\{0\}$.
	For a real number (resp. logical variable) $x$, $\mathtt{1}(x)$ equals one if $x>0$ (resp. $x$ is $\mathtt{TRUE}$) and $-\infty$ otherwise.
	A sequence of random variables $X_0, X_1, \ldots,$ is denoted by $\langle X_k\rangle_{k=0}^\infty.$
	A correspondence is denoted by $\rightrightarrows.$
	Sign function, denoted by $\operatorname{sgn}(x),$ returns one for a positive argument, negative one for a negative argument and zero otherwise. 
	For two strategies $\mathtt{x}$ and $\mathtt{y}$, $\mathtt{x}, \mathtt{y} \in \{\1, \2\}$, the statement $\mathtt{x}==\mathtt{y}$ is $\mathtt{TRUE}$ if  $\mathtt{x}$ and  $\mathtt{y}$ have the same value, and $\mathtt{FALSE}$ otherwise.
	%(Commented to save sapce)Let $m \in \mathbb{Z}_{\geq 0}$, and for each $j = 1,2,\ldots,m$, let $n_j \in \mathbb{Z}_{\geq 0}$ and $[a_i^j,b_i^j] \subset \mathbb{R}$ for $i = 1,2,\ldots, n_j$. 
	%Then, the generalized Cartesian product is $\prod_{j=1}^m (\prod_{i=1}^{n_j} [a^j_i,b^j_i])$ which is a subset of $\mathbb{R}^{n}$ with $n = \sum_{j=1}^m n_j.$
	Table \ref{table} summarizes the most frequently used notations throughout the paper.
	\begin{table}
		\caption{\textbf{TABLE OF NOTATION}}\label{table}
		\renewcommand{\arraystretch}{1.2}
		\begin{tabularx}{\linewidth}{@{}lX@{}}
			\toprule
			$\1$, $\2$ & possible strategies in the game\\
			$\mathtt{s}^i$ & strategy of agent $i$  \\
			$a^i$ (resp. $c^i$) & payoff of agent $i$ for playing strategy $\1$ (resp. $\2$) against an $\1$-player \\
			$b^i$ (resp. $d^i$) & payoff of agent $i$ for playing strategy $\1$ (resp. $\2$) against a $\2$-player \\
			%$c^i$ & payoff of agent $i$ for playing strategy $\2$ against an $\1$-player \\
			%$d^i$ & payoff of agent $i$ for playing strategy $\2$ against a $\2$-player \\
			$\mathtt{u}^{\1,i}(x)$ (resp. $\mathtt{u}^{\2,i}(x) $)&  utility of agent $i$ for playing strategy $\1$ (resp. $\2$) \\
			%$\mathtt{u}^{\2,i}(x^\N) $&  utility of agent $i$ for playing strategy $\2$ in a population of size $\N$ \\
			$\tau^i$&  threshold of agent $i$ \\
			%$\tau_{i}$&  threshold of type $i$ \\
			$\tau_{i}'$ (resp. $\tau_{i}$) &  threshold of conformist (resp. nonconformist) type $i$  \\
			$\p'$ (resp. $\p$) & total number of (conformist) nonconformist types \\
			%$\p'$ & total number of conformist types \\
			$\pp$ & $\p + \p'$ \\
			%$\rho_i$ &  population proportion of nonconformist type $i$ \\
			$\rho_i'$ (resp. $\rho_i$) &  population proportion of conformist (nonconformist) type $i$\\
			%$\zeta_i$ &  population proportion of imitators sharing the payoff matrix with nonconformist type $i$ \\
			$\zeta_i'$ (resp. $\zeta_i$) &  population proportion of imitators sharing the payoff matrix with conformist (resp. nonconformist) type $i$ \\
			$\x$ &  population state\\
			$u^{\1}_i(x)$ (resp. $u^{\2}_i(x) $)&  utility of type $i$ for playing strategy $\1$ (resp. $\2$)  \\
			%$u^{\2}_i(x) $&  utility of type $i$ for playing strategy $\2$ \\
			$\mb{X}^{\frac{1}{\N}}$ &  Markov chain state of population of size $\N$\\
			$\mathcal{T}$ & set of thresholds \\
			$\mathcal{C}$ & set of intersection points of  utilities\\
			$\bm{\mathcal{X}}_{ss}$ & state space of the continuous-time population dynamics\\
			\bottomrule
		\end{tabularx}
	\end{table}
	
	% \begin{figure}
		% \begin{center}
			% The printed column  
			% \caption{Gaius Julius Caesar, 100--44 B.C.}  % width is 8.4 cm.
			% \label{fig1}                                 % Size the figures 
			% \end{center}                                 % accordingly.
		% \end{figure}
	
	% OR
	
	%\begin{figure}
	%\begin{center}
	%\epsfig{file=jcaesar,width=7cm}
	%\caption{Gaius Julius Caesar, 100--44 B.C.}
	%\label{fig1}
	%\end{center}
	%\end{figure}
	
	\section{Problem Formulation} \label{sec_problemFormulation}
	\tb{
		We study the long-term behavior of a well-mixed population in which agents are activated one at a time; upon activation, an agent plays a two-strategy game against the entire population and receives the corresponding  payoff.
		The game is a \emph{matrix game}, meaning payoffs for all possible pairwise outcomes  can be summarized as a matrix.
		Agents are heterogeneous in their payoff matrices and are either a best responder or an imitator.
		We provide the formal model next.
	}
	
	Consider a well-mixed population of $\N$ agents, labeled by $1,2,\ldots,\mathsf{N}$, playing two-strategy games over time \tb{$t \in \frac{1}{\mathsf{N}}\mathbb{Z}_{\geq 0}$, indexed by $k$ where $k =\N t$,} and, accordingly, obtaining payoffs.
	A $2\times2$ matrix summarizes the four possible payoff gains to agent $i$ against another agent:
	\vspace{-10pt}
	\begin{equation*}
		\tb{\pi^i }=
		\raisebox{0.3ex}{$
			\begin{blockarray}{ccc}
				& \1 & \2\\
				\begin{block}{c(cc)}
					\1 & a^{i} & b^{i}\\
					\2 & c^i & d^i\\
				\end{block}
			\end{blockarray}
			$}
	\end{equation*}
	\vspace{-25pt}
	
	where $a^i,b^i,c^i,d^i \in \mathbb{R}$ and $a^i$ and $b^i$ (resp. $c^i$, and $d^i$) are the payoffs that agent $i$ obtains by playing strategy $\1$ (resp. $\2$) against an $\1$-player and a $\2$-player, respectively.   
	In a well-mixed population, each agent plays against all other agents, including herself, and obtains an accumulated payoff.
	The average payoff or \emph{utility} of agent $i$, when the population proportion of strategy-$\1$ players equals $x$,
	is
	\begin{align*} 
		\mathtt{u}^{\1,i}\tb{(k)} &= \tb{\begin{pmatrix} 1, & 0\end{pmatrix} \pi^i \begin{pmatrix}x(k),& 1 - x(k) \end{pmatrix}^\top}\\
		&=
		(a^i - b^i)x\tb{(k)} + b^i 
	\end{align*}
	if she plays strategy $\1$, and
	\begin{align*}
		\mathtt{u}^{\2,i}\tb{(k)} &= \tb{\begin{pmatrix} 0, & 1\end{pmatrix} \pi^i \begin{pmatrix}x(k),& 1 - x(k) \end{pmatrix}^\top}\\
		&= (c^i - d^i)x\tb{(k)} + d^i
	\end{align*}
	if she plays strategy $\2$.
	At each time index $k$, only one agent becomes active and can revise her strategy.
	
	Each agent is either an \emph{imitator}, a \emph{conformist} (also known as a \emph{coordinator}), or a \emph{nonconformist} (also known as an \emph{anticoordinator}).
	An imitator updates her strategy according to the \emph{imitation update rule}, i.e., she switches to the strategy played by the highest earners. 
	More specifically, at time index $k+1$, the strategy of the imitative agent $i$ active at time index $k$ will be
	\begin{equation}
		\scalebox{0.95}{$
			\begin{aligned} \label{eq:imitation}
				&\s^i(k+1)  
				\! =\! \begin{cases}\!
					\! \1, 
					\text{ if } \displaystyle \max_{j\in \N} \mathtt{u}^{\1,j}\tb{(k)}\mt{1}\big(\s^j(k) \mathrel{\!==\!}\1\big)
					\\
					\quad \geq \displaystyle \max_{j\in \N} \mathtt{u}^{\2,j}\tb{(k)}\mt{1}\big(\s^j(k) \!\mathrel{==\!}\2\big),   \\
					\! \2,   \text{ otherwise},
				\end{cases}
			\end{aligned}$}
	\end{equation}
	where $\s^j(k)$ is the strategy of agent $j$ at time index $k.$
	The  imitation update rule \eqref{eq:imitation} dictates the same strategy for any active imitator, regardless of her payoff matrix.
	\tb{
    According to the update rule (1), an imitator adopts the strategy currently used by the highest earners. We interpret this as a parsimonious model of success-/payoff-biased social learning: when agents cannot reliably evaluate their own payoff function, they may use observable success cues to decide whom to copy. 
    Such ``copy-the-successful'' behavior has been documented in humans \cite{henrich2001evolution},
     and other animals \cite{BarrettEtAl2017,BonoWhitenEtAl2018}
    and ``imitate-the-best'' is a standard update rule in evolutionary and spatial game models  \cite{nowak1992evolutionary}.
	}
	
	Upon a revision opportunity, both conformists and nonconformists switch to the strategy that maximizes their utilities based on the current status.
	That is,
	the active agent $i$, who is either conformist or nonconformist, at time index $k$, goes for strategy $\1$ if
	$\mathtt{u}^{\1,i}(x)$ is not less than 
	$\mathtt{u}^{\2,i}(x)$ and chooses strategy $\2$ otherwise.
	If agent $i$ is a conformist, then $a^i + d^i - b^i - c^i >0$, and her update rule becomes
	\begin{align}
		&\s^i(k+1)  
		= \begin{cases}
			\1,& \text{if } x(k) \geq {\tau^i},   \\
			\2,
			& \text{otherwise},
		\end{cases}
		\label{eq:scor}
	\end{align}
	where $\tau^i \triangleq (d^i-b^i)/(a^i + d^i - b^i - c^i)$ is the \emph{threshold} of agent $i$.
	Thus, a conformist adopts the commonly played strategy, based on her perceived prevalence, which is determined by her threshold.
	If $a^i + d^i - b^i - c^i <0,$ agent $i$ is a nonconformist playing the uncommon strategy according to the following update rule:
	\begin{align}
		&\s^i(k+1)  
		= \begin{cases}
			\1,& \text{if } x(k) \leq \tau^i,   \\
			\2,
			& \text{otherwise.}
		\end{cases}
		\label{eq:santi}
	\end{align}
	If $a^i + d^i - b^i - c^i =0,$ agent $i$ is always an $\1$-player (resp. a $\2$-player) whenever $b^i \geq d^i$ (resp.  $b^i < d^i$).
	Conformists (resp. nonconformists) sharing the same payoff matrix and, in turn, threshold form a \emph{type}. 
	In total, we assume there are $\p'$ (resp. $\p$) distinct and non-empty types of conformists (resp. nonconformists). 
	Conformist types are labeled by $1,\ldots,\p'$ with increasing thresholds as $\tau'_1 < \tau'_2 < \dots < \tau'_{\p'}$, whereas nonconformist types are labeled by $1,\ldots,\p$ with decreasing thresholds as $\tau_1 > \tau_2 > \dots > \tau_{\p}$. 
	Here, $\tau'_p$ (resp. $\tau_p$) denotes the threshold associated with the $p^{\text{th}}$ conformist (resp. nonconformist) type.
	We assume that the thresholds of the types are unique and strictly between zero and one and denote the set of all thresholds by $\mathcal{T} =  \{\tau_1,\ldots, \tau_{\p}, \tau'_1,\ldots, \tau'_{\p'} \}$.
	We assume that for each imitator, there exists exactly one conformist or nonconformist type that shares the same payoff matrix.
	The heterogeneity of the population is hence captured by the distribution of population proportions over the total $2\pp$ types, with $\pp = (\p + \p'),$ as
	$$\bm{\theta} = (\underbrace{{\zeta}_1, \ldots, {\zeta}_{\p}, {\zeta}'_{\p'}, \ldots, {\zeta}'_1}_{\pp\text{ imitative types}}, \underbrace{{\rho}_1, \ldots, {\rho}_{\p}}_{\substack{\p\text{ nonconformist}\\\text{types}}}, \underbrace{{\rho}'_{\p'}, \ldots, {\rho}'_1}_{\substack{\p'\text{ conformist}\\\text{types}}})^\top,$$
	where  
	${\zeta}'_j$ (resp. ${\zeta}_j$) represents the ratio of the number of imitators sharing the payoff matrix with type $j$ conformists (resp. nonconformists) to the population size $\N$.
	Also, ${\rho}'_j$ (resp. ${\rho}_j$) represents the ratio of the number of agents in conformist (resp. nonconformist) type $j$  to the population size $\N$.
	For the convenience of vector indexing, for $\p < j \leq \pp$, we define 
	${\zeta}_j = {\zeta}'_{\pp +1-j}$ and $ 
	{\rho}_j = {\rho}'_{\pp +1-j}$. 
	%and  ${\tau}_j = {\tau}'_{\pp +1-j}$. 
	For example, $\theta_{2\pp+1-p}=\rho_{\pp+1-p}=\rho'_p$ is the population proportion of  conformist type $p$.
	\tb{The utility of an agent in type $p$, $p \in \pp$, for playing $\1$ (resp. $\2$) is denoted by $u^\1_i(x)$ (resp. $u^\2_i(x)$).}
	At the population level, the focus is on collective behavior within each type rather than on individual agents.
	We then define the \emph{population state} $\x$ as 
	a $(2\pp)$-dimensional vector where the first $\pp$ elements denote the distribution of $\1$-players among imitative types, followed by $\p$ elements for nonconformist types, and finally $\p'$ elements for conformist types.
	Equivalently,
	\begin{equation*}\scalebox{0.9}{$
			\x
			\! =\! (\underbrace{{x}_1,
				\ldots, 
				{x}_{{\p}},
				{x}_{\p+1},
				\ldots,
				{x}_{{\pp}}}_{\pp\text{ imitative types}},
			\underbrace{{x}_{\pp+1},
				\ldots, 
				{x}_{{\pp + \p}}}_{\substack{\p\text{ nonconformist}\\\text{types}}},
			\underbrace{{x}_{\pp+\p+1},
				\ldots,
				{x}_{{2\pp}}}_{\substack{\p'\text{ conformist}\\\text{types}}})^\top,$}
	\end{equation*}
    \tb{where} the resulting state space then equals 
	$\bm{\mathcal{X}}_{ss} \cap \frac{1}{\mathsf{N}}\mathbb{Z}^{{2\pp}}$ with
	% The $p^\text{th}$ element of vector $\x^\N$ is denoted by $x^\N_p$.
	$$\bm{\mathcal{X}}_{ss}\!=\! \prod_{j=1}^{{\p}}[0,{\zeta}_j]\! \times\! \prod_{j=1}^{{\p'}}[0,\zeta'_{\p'-j+1}] \times \prod_{j=1}^{{\p}}[0,{\rho}_j]\! \times\! \prod_{j=1}^{{\p'}}[0,\rho'_{\p'-j+1}],$$
	\tb{		\begin{definition} \label{def_max_utilities}
			Define the \emph{maximum $\1$-utility} and \emph{maximum $\2$-utility} respectively by 
			\begin{subequations} \label{eq:UmaxABscalar} \begin{align} u^\1(x) &= \max_{i\in[\pp]} u^\1_i(x),\\ u^\2(x) &= \max_{i\in[\pp]} u^\2_i(x), \end{align}
				\label{eq:unconditional-max-utilities}
			\end{subequations}
			for $x\in[0,1]$.
			Correspondingly, the \emph{maximum active $\1$-utility}, 
			$u^\1(\x),$ and the \emph{maximum active $\2$-utility}, 
			$u^\2(\x),$ are defined by:
			\begin{subequations} \label{eq:UmaxAB} \begin{align} u^\1(\x) &= \max_{i\in[\pp]} \bigl(u^\1_i(x)\mt{1}(x_i + x_{\pp+i})\bigr), \label{eq:UmaxAB-A}\\ u^\2(\x) &= \max_{i\in[\pp]} \bigl(u^\2_i(x)\mt{1}(\theta_i + \theta_{\pp+i} - x_i -x_{\pp+i})\bigr), \label{eq:UmaxAB-B}
				\end{align} 
			\end{subequations}
			for $\x\in\bm{\mathcal{X}}_{ss}$ and where $x = \bm 1^\top \x = \sum_{p=1}^{2\pp}x_p.$
		\end{definition}
	}
	\tb{
		In \Cref{def_max_utilities}, we abuse notation by using the same symbol $u^\1$ (resp. $u^\2$) to denote both the maximum $\1$-utility and the maximum active $\1$-utility (resp. $\2$-utility and maximum active $\2$-utility).
		The former is defined over the unit interval, while the latter is defined over $\bm{\mathcal{X}}_{ss}$.
		The maximum $\1$-utility (resp. $\2$-utility) and the maximum active $\1$-utility (resp. $\2$-utility) are equal if the maximum $\1$-utility (resp. $\2$-utility) is obtained by some agents.
		More specifically, whenever the maximum  $\1$-utility (resp. $\2$-utility)  equals $u^\1_i(x)$ (resp. $u^\2_i(x)$) for some $i \in [\pp]$ and
		state $\x$ satisfies $0<x_{i} + x_{\pp+i}$ (resp. $x_i + x_{\pp+i}< \theta_i + \theta_{\pp+i}$), then the maximum $\1$-utility and maximum active $\1$-utility (resp. maximum $\2$-utility and maximum active $\2$-utility) are equal.
		% If the maximum occurs exactly at, $0$ or $\theta_{\pp+i}$, the maximum active utility at $\x$ will differ from the maximum utility at $\sum_{p=1}^{2\pp}x_p(t)$.
		\tb{Also note the difference between the utility of agent $i$, $i \in [\N],$ denoted by $\mathtt{u}^{\1,i}$ (resp.  $\mathtt{u}^{\2,i}$) and the utility of type-$i$ agent, $i \in [\pp]$,  denoted by ${u}_i^{\1}$ (resp.  ${u}_i^{\2}$).}
        }

        \tb{
		We denote the set of points of intersection of the maximum $\1$-utility and maximum $\2$-utility by $\mathcal{C}$, i.e.,
		\begin{equation*}
			\mathcal{C} = \{x \in [0,1] \mid 
			u^\1(x) = u^\2(x) \}.
		\end{equation*}
	}
	\tb{
    In the following, we define additional notation required to present the results.
    The notation $\eta'_j$ represents the population proportions of conformists whose thresholds are at most $\tau'_j$, whereas $\eta_i$ corresponds to the population proportions of nonconformists whose thresholds are not less than $\tau_i$, i.e., 
		\begin{equation*}
			\eta'_j \!=\! \sum_{p=1}^{j} \rho'_p, \quad
			\eta_i = \sum_{p=1}^{i} \rho_p, 
		\end{equation*}
		where $j \in [\p']$, $i \in [\p]$,
		and we define $\eta'_0 = \eta_0 =  \eta_{-1} = \eta'_{-1} =  0$,
		$\eta'_{\p'+1} = \eta'_{\p'},\eta_{\p+1} = \eta_{\p}$,
		$\tau_{\p+1} = \tau'_0 = 0$, and $\tau_0 = \tau'_{\p'+1} = 1$.
		We additionally denote the proportion of the population that are imitators by
		$\zeta_0$, i.e.,
		$\zeta_0 = \sum_{p=1}^{\pp} \theta_p$.
	}
	\tb{
		\begin{assumption} \label{ass:ass1}
			The type proportions satisfy
			\begin{align*}
				&\forall k \in [\p]\cup\{0\}\forall l \in [\p']\cup\{0\} 
				\\
				&\qquad \qquad  \big(\{\eta_k + \eta'_l, \eta_k + \eta'_l + \zeta_0\} \cap (\mathcal{T} \cup \mathcal{C}) = \varnothing \big).
			\end{align*}
		\end{assumption}
	}
	\tb{		Given \Cref{ass:ass1}, the sums of any combinations of the population proportions types do not equal any of the thresholds or the intersection points of the maximum $\1$-utility and $\2$-utility.
	}
	\tb{
		\begin{assumption} \label{ass:ass2}
			\begin{align*}
				\forall x\in \mathcal{C}  \exists p \in [\pp](x= \tau_p)  \Rightarrow u^\1(x) = u^\1_p(x), u^\2(x) = u^\2_p(x).
			\end{align*}
		\end{assumption}
	}
	\tb{
		According \Cref{ass:ass2}, if the point $x=\sum_{p=1}^{2\pp}x_p$ is a point of intersection of the maximum $\1$-utility and $\2$-utility and equals a threshold $\tau_p$ for some type $p\in\mathcal{T}$, i.e., $x=\tau_p$, then these maximum utilities are the utilities of (an agent of) non-imitative type $p$.
		In other words, if both $\1$ and $\2$ result in the highest utility when $x$ equals the threshold of a type, then no other type earns more than that type at $x$. 
		% If both A and B are resulting in the highest utility when x equals the threshold of an agent, then no other agent earns more than that agent at x. 
		% Refer to \Cref{remak_contradictionOfAssumption2} for the implications of this assumption.
		We further assume that no $\1$-utility lines, i.e., $u^\1_i(x)$,  $i \in [\pp]$, coincide with any $\2$-utility lines, i.e., $u^\2_i(x)$,  $i \in [\pp]$:
		\begin{assumption} \label{ass:ass3}
			\begin{equation*}
				\{(i,j) \in [\pp]\times [\pp] \mid b_i = d_j, a_i-b_i+d_j - c_j =0 \} = \varnothing.
			\end{equation*}
	\end{assumption}}
	\vspace{-10pt}
	The evolution of the state over time characterizes the population dynamics, which we call the \textbf{\emph{discrete mixed population dynamics}}.
	The dynamics are determined by the \emph{activation sequence} of the agents and update rules \eqref{eq:imitation}, \eqref{eq:scor}, \eqref{eq:santi}.
	The activation sequence is a sequence of \tb{mutually independent} random variables $\langle {A}_k \rangle_{k=0}^{\infty}$, where ${A}_k$ denotes the active agent at time index $k$ and follows a uniform distribution over $[\N]$, i.e., 
	$\mathbb{P}[{A}_k = i] = \frac{1}{\N}$.
	
	%Azadeh: I deleted the definition of fluctuation set and persistent activation sequence, following our meeting.
	
	% Checked Pouria: Do the dynamics reach an equilibrium, where...? Otherwise, it means that the dynamics never converge to an equilibrium and then if we focus on the number of A-players over time, we observe that it never settles and exhbitis neven ending fluctuations. We know that the second case can happen. But the question is whether the amplitue of these fluctuations will vanish as the population size grows. 
	
	What is the asymptotic behavior of the population dynamics?
	Do the dynamics reach an equilibrium, where the population proportion of $\1$-players of each type remains fixed?
	Otherwise, it means that the dynamics never converge to an equilibrium, and then if we focus on the number of $\1$-players over time, we observe that it never settles and exhibits persistent fluctuations. 
	We know that the second case can happen. 
	But the question is whether the amplitudes of these fluctuations will vanish as the population size grows. 
	
	\begin{example} \label{example}
\tb{Consider a population of $\N$ individuals in a community who are repeatedly deciding  whether to
(i) remain \emph{enrolled/scheduled} for a free seasonal influenza shot (strategy $\1$) or
(ii) remain \emph{unenrolled} (strategy $\2$).
Enrollment is reversible. One can schedule, cancel, and reschedule during the registration period. 
Thus, switching between $\1$ and $\2$ over time is meaningful.
Individuals receive  \emph{revision opportunities}
via an appointment reminder.
We model these revision opportunities with a uniform random activation sequence.
Individuals are of three behavioral categories:
\emph{conformists}, \emph{nonconformists}, and \emph{imitators}.
Upon activation, conformists/nonconformists apply the update rules \eqref{eq:scor}-\eqref{eq:santi}, which capture compliance with social norms and free-riding tendencies, respectively.
Imitators, however, apply \eqref{eq:imitation}: they adopt the strategy currently used by the \emph{highest earners}.
We consider $\pp=4$ payoff-matrix types  with population proportions}
		\begin{equation*}
			\scalebox{0.9}{$
				\tb{ \bm \theta} = (\underbrace{0.25,0.13,0.14,0.17}_{\text{imitators}},\underbrace{\textcolor{red}{0.11},\textcolor{magenta}{0.05},\textcolor{purple}{0.03}}_{\text{nonconformists}},\underbrace{\textcolor{blue}{0.12}}_{\substack{\text{conformists}}})^\top,
				$}
        \end{equation*}
        and utility functions
		\begin{small}
			\begin{alignat*}{2}
				\textcolor{red}{u^\1_1}(x) &= \tb{-1.3x +1}, \quad
				&& \textcolor{red}{u^\2_1}(x) = \tb{2.5x -1.3}, \\
				\textcolor{magenta}{u^\1_2}(x) &= \tb{-0.7x + 0.4},  \quad
				&& \textcolor{magenta}{u^\2_2}(x) = \tb{1.5x -0.3}, \\
				\textcolor{purple}{u^\1_3}(x) &= \tb{-0.8x+ 1}, \quad
				&&  \textcolor{purple}{u^\2_3}(x)=\tb{2x + 0.2}, \\
				\textcolor{blue}{u^\1_4}(x) &= \tb{0.7}, \quad
				&&  \textcolor{blue}{u^\2_4}(x)=\tb{-0.556x + 0.856}.
			\end{alignat*}
		\end{small}
		The corresponding thresholds are $\textcolor{red}{\tau_1} = \tb{0.605}$,  $\textcolor{purple}{\tau_2}=  \tb{0.318}$, $\textcolor{magenta}{\tau_3} = \tb{0.286}$, and $\textcolor{blue}{\tau'_1} = \tb{0.281}$.
		\tb{Starting from the origin (initially, no one is enrolled), a population of 100 individuals may exhibit persistent fluctuations in the enrollment proportion $x(k)$ (\Cref{fig_finitepop}, Panel a).
If fluctuations in the enrollment proportion persist as $\N$ grows, health officials should expect uncertainty in demand in large population sizes. 
As the population size increases, however, the amplitude of fluctuations in the enrollment proportion decreases (\Cref{fig_finitepop}--Panels b and c). 
So large-amplitude oscillations in real enrollment data would likely reflect exogenous drivers (e.g., incidence shocks, policy changes, or supply disruptions) rather than endogenous social learning alone.}
	\end{example}
	\begin{figure*}
		\centering
		\includegraphics[width=1\linewidth, height=0.5\linewidth,trim=2.0cm 3.3cm 7cm 0.85cm, clip]{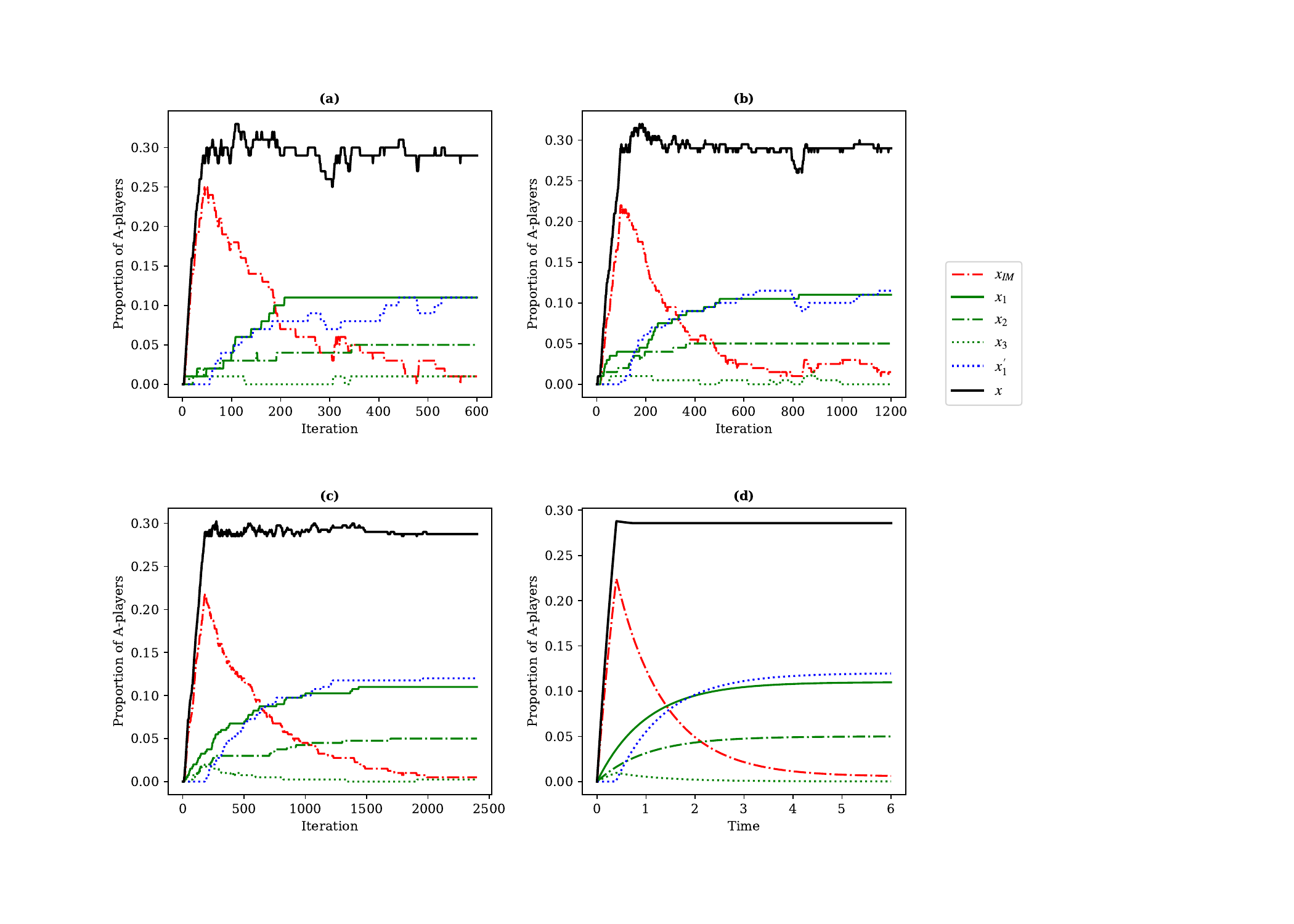}
		\caption{\textbf{Evolution of discrete mixed population dynamics for three population sizes in Example \ref{example} and that of the continuous-time.}
			Each panel shows the evolution of the population proportions over time, where $x_{IM}$ is the proportion of imitators playing strategy $\1$; $x_i$ for $i = 1,2,3$ (resp. $x'_1$) denotes the proportion of $\1$-playing nonconformists  (resp. conformists) of type $i$, and $x$ is the proportion of $\1$-players. 
			Panel (a) corresponds to  $\N = 100$  units over $600$ time steps; panels (b) and (c) are associated with $\N$ equal to $200$ and $400$, respectively. 
			As the $\N$ increases, the amplitude of fluctuations in $x$ (black curve) decreases.
			Panel (d) depicts the evolution of the corresponding continuous-time population dynamics.}
		\label{fig_finitepop}
	\end{figure*}
	
	In \Cref{example}, the amplitude of the fluctuations reduces with population size.
	Whether this observation generalizes to other mixed populations remains to be investigated.
	Stochastic approximation theory can help reveal the discrete dynamics' asymptotic behavior as the population size approaches infinity. 
	%PR: 
	We start with some background and definitions.
	
	\section{Background}
	A \emph{differential inclusion} on  $\mathbb{R}^n$ is defined by
	\begin{equation} \label{eq_differentialInclusion}
		\dot{\x} \in \bm{\mathcal{V}}(\x),
	\end{equation}
	where $\bm{\mathcal{V}}:\mathbb{R}^n \rightrightarrows \mathbb{R}^n$ is a set-valued map. A point $\x^* \in \mathbb{R}^n$ is called an \emph{equilibrium} if $\mb{0} \in \bm{\mathcal{V}}(\x^*)$.
	A \emph{selection} from $\bm{\mathcal{V}}(\x)$ is a singleton map $\bm{\nu}(\x) \in \mathbb{R}^n$ which satisfies $\bm{\nu}(\x) \in \bm{\mathcal{V}}(\x)$ for all $\x$ in $\mathbb{R}^n$ \cite{smirnov2002introduction}.
	
	A set $\bm{\mathcal{M}}$ is called \emph{attractive} from a set $\bm{\mathcal{U}}$ under differential inclusion $\bm{\mathcal{V}}$ if for each neighborhood  $\bm{\mathcal{M}}^\epsilon$ of set $\bm{\mathcal{M}}$ each solution trajectory starting from a point in $\bm{\mathcal{U}}$   enters and remains $\bm{\mathcal{M}}^\epsilon$ after some finite time \cite{MAYHEW20111045}.
	The union of all such sets $\bm{\mathcal{U}}$ forms the \emph{basin of attraction} of $\bm{\mathcal{M}}$ \cite{MAYHEW20111045}.
	
	The following definitions are adopted from \cite{sandholm2010population}.
	The \emph{tangent cone} to a closed convex set $\bm{\mathcal{X}}\subseteq \mathbb{R}^n$ at a point $\x \in \mathbb{R}^n$, denoted by $\mathrm{TC}(\x)$, is given by
	$$\mathrm{TC}(\x) = \text{cl}\left(\{\bm{z} \in \mathbb{R}^n : \bm{z} = \alpha(\y - \x), \alpha \geq 0, \y \in \bm{\mathcal{X}}\}\right),$$
	where $\text{cl}$ denote the closure of a set.
	The \emph{projection} of $\mathbb{R}^n$ onto $\bm{\mathcal{X}}$ is defined by $\Pi_{\bm{\mathcal{X}}}:\mathbb{R}^n\to\bm{\mathcal{X}}$, where
	$ 
	\Pi_{\bm{\mathcal{X}}}(\x) = \arg\min_{\y \in \bm{\mathcal{X}}}\|\y - \x\|.
	$
	Suppose the set-valued map $\bm{\mathcal{V}}:\bm{\mathcal{X}} \rightrightarrows \mathbb{R}^n$ satisfies: (i) $\bm{\mathcal{V}}(\x) \subseteq \mathrm{TC}(\x)$, (ii) $\bm{\mathcal{V}}(\x)$ is nonempty, convex, and bounded, and (iii) the graph of $\bm{\mathcal{V}}$, defined as $\{(\x,\y) \mid \y \in \bm{\mathcal{V}}(\x)\}$, is closed. 
	Extend $\bm{\mathcal{V}}$ to $\mathbb{R}^n$ by setting $\bm{\mathcal{V}}(\x) := \bm{\mathcal{V}}(\Pi_{\bm{\mathcal{X}}}(\x))$ for $\x \in \mathbb{R}^n \setminus \bm{\mathcal{X}}$. 
	Then the resulting differential inclusion \eqref{eq_differentialInclusion} for this extended $\bm{\mathcal{V}}$ is \emph{good upper semicontinuous}, and the set ${\bm{\mathcal{X}}}$ is forward invariant \cite[Theorem 6.A.2]{sandholm2010population}. 
	We use this projection-based extension throughout this paper.
	The following definitions are adopted from \cite{roth2013stochastic}. 
	We denote the set of solutions to the differential inclusion 
	$\dot{\x} \in \bm{\mathcal{V}}(\x)$ starting at $\x_0$ by $\bm{\mathcal{T}}_{\x_0}.$
	The set $\bm{\mathcal{T}}_{\x_0},$ is a subset of the space of continuous map from $\mathbb{R}_{\geq 0} \to \bm{\mathcal{X}}.$
	The set of all solutions to the differential inclusion 
	$\dot{\x} \in \bm{\mathcal{V}}(\x)$
	is denoted by $\bm{\mathcal{T}}_{\Phi}$. 
	The set-valued dynamical system induced by this differential inclusion is defined as $\bm{\Phi}:\mathbb{R}_+ \times \bm{\mathcal{X}} \rightrightarrows \mathbb{R}^n$, where $\bm{\Phi}_t(\x_0) = \{\x(t) \in \bm{\mathcal{T}}_{\x_0}\}$. The \emph{limit set} of a point $\x_0$, denoted by $\bm{\mathcal{L}}(\x_0)$, is
	$
	\bigcup_{\y \in \bm{\mathcal{T}}_{\x_0}} \bigcap_{t \geq 0}\text{cl}(\y[t, \infty]).
	$
	The set of \emph{recurrent points} is
	$
	\{\x_0 \mid \x_0 \in \bm{\mathcal{L}}(\x_0)\},
	$
	and the closure of the set of recurrent points is called the \emph{Birkhoff center} of $\bm{\Phi}$.
	For brevity, we drop ``the dynamical systems induced by.''
	
	\begin{definition}[\cite{roth2013stochastic}] \label{defGSAP}
		For a sequence of positive values $\epsilon$ approaching zero, let $\mathbi{U}^{\epsilon} = \langle \mathbi{U}_k^{\epsilon}\rangle_{k=0}^{\infty}$ and $\langle \bm{\mathcal{V}}^{\epsilon}\rangle$ be a sequence of random variables and a family of set-valued maps, respectively. Consider a good upper semicontinuous differential inclusion $\dot{\x} \in \bm{\mathcal{V}}(\x)$ over a compact, convex state space $\bm{\mathcal{X}}$. The sequence $\mb{X}^{\epsilon}= \langle \mb{X}_k^{\epsilon}\rangle_{k=0}^{\infty}$ is a \emph{generalized stochastic approximation process (GSAP)} for the differential inclusion if:
		\begin{enumerate}
			\item $\mb{X}_k^{\epsilon}\in \bm{\mathcal{X}}$ for all $k \geq 0$,
			\item $\mb{X}_{k+1}^{\epsilon}-\mb{X}_k^{\epsilon}-\epsilon\mathbi{U}_{k+1}^{\epsilon}\in \epsilon\bm{\mathcal{V}}^{\epsilon}(\mb{X}_k^{\epsilon})$,
			\item $\forall \delta >0 \exists \epsilon_0>0$ 
			$\forall \epsilon \leq \epsilon_0 \forall \x \in \bm{\mathcal{X}}$\\
            \vspace{-10pt}
            $$
				\bm{\mathcal{V}}^{\epsilon}(\x) \subset \{\bm{z} \in \mathbb{R}^n \mid \exists \y:\!\vert\x-\y\vert< \delta, \! 
				\inf_{\bm{v}\in \bm{\mathcal{V}}(\y)} \vert \bm{z} - \bm{v}\vert< \delta \},
            $$
            \vspace{-5pt}
			\item For all $T>0$ and all $\alpha>0$, the following holds uniformly in $\x \in \bm{\mathcal{X}}$: 
			$$
			\lim_{\epsilon \rightarrow 0} \mathbb{P} 
			\left[ \max_{k \leq \frac{T}{\epsilon}} \left \vert 
			\textstyle\sum_{i = 1}^{k} \epsilon \mathbi{U}^{\epsilon}_{i} \right \vert > \alpha \mid \mb{X}^{\epsilon}_0 = \x \right] = 0.
			$$
		\end{enumerate}
	\end{definition}
	\tr{
	For a discrete-time sequence
	$\langle  \mb{X}^{\frac{1}{\N}}_k\rangle_{k = 0}^{\infty}$
	defined over $\bm{\mathcal{X}}$, we define an associated affine
	\emph{interpolated process} $\bar{\bm{X}}^\frac{1}{\N}$ which is defined over continuous time $t$
	as follows
	$$
	\bar{\bm{X}}^\frac{1}{\N}(t) =  \mb{X}^{\frac{1}{\N}}_{l(t)} + \big(\N t - l(t) \big) \big(\mb{X}^{\frac{1}{\N}}_{l(t)+1} - \mb{X}^{\frac{1}{\N}}_{l(t)}\big),
	$$
	where $l(t) = \lfloor \N t \rfloor$.
	The next theorem relates the finite-horizon evolution of the GSAP to the corresponding differential inclusion as the population size approaches infinity.
	\begin{theorem} \cite{roth2013stochastic} \label{thm:shortTermSandholm}
		Let  $\mb{X}^{\epsilon}$  be a family of GSAPs for a differential inclusion 
		$\dot{\x} \in \bm{\mathcal{V}}(\x)$.
		Then for any $T>0$ and any $\alpha >0$, we have
		$$
		\lim_{\frac{1}{\N} \to 0} \mathbb{P}\! \left[ \!\inf_{\x \in \bm{\mathcal{T}}_{\Phi}} \sup_{0 \leq s \leq T} \left \vert \bar{\bm{X}}^{\frac{1}{\N}}(s) - \x(s) \right \vert  \geq \alpha \Bigm|\ \!\bar{\bm{X}}^{\frac{1}{\N}}(0) = \x_0 \right]\! = \!0
		$$
		uniformly in $\x_0 \in {\bm{\mathcal{X}}}$.
	\end{theorem}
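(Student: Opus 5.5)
The approach is the standard deterministic-approximation argument for stochastic approximation (Benaïm–Hofbauer–Sorin, as adapted by Roth and Sandholm). We split the interpolated process into a noise part and a drift part. The noise part is controlled by condition~4 of \Cref{defGSAP}; the drift part yields a path that is an approximate solution of the differential inclusion, and a compactness argument shows it is uniformly close to a true solution. Throughout, $\epsilon=\frac{1}{\N}$, and index $k$ corresponds to time $t=k\epsilon$.

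\emph{Step 1 (decomposition).} Fix $T>0$ and $\alpha>0$. By condition~2, write $\mb{X}^{\epsilon}_{k+1}-\mb{X}^{\epsilon}_k=\epsilon\bm{v}_k+\epsilon\mathbi{U}^{\epsilon}_{k+1}$ with $\bm{v}_k\in\bm{\mathcal{V}}^{\epsilon}(\mb{X}^{\epsilon}_k)$. Let $\bm{W}(t)$ be the affine interpolation of the partial sums $\sum_{i\le k}\epsilon\mathbi{U}^{\epsilon}_i$, and set $\bm{Y}(t)=\bar{\bm{X}}^{\epsilon}(t)-\bm{W}(t)$.
\begin{itemize}
\item By condition~4, for any $\beta>0$ the event $\{\sup_{t\le T}|\bm{W}(t)|>\beta\}$ has probability tending to $0$ as $\epsilon\to0$, uniformly in the initial point $\x_0$.
\item On the complementary event, $\bm{Y}$ is piecewise affine with $\bm{Y}(0)=\x_0$. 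On $[k\epsilon,(k+1)\epsilon)$ its derivative is $\bm{v}_k$.
\end{itemize}

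\emph{Step 2 ($\bm{Y}$ is an approximate solution).} By condition~3, for $\epsilon$ small each $\bm{v}_k$ lies within $\delta$ of $\bm{\mathcal{V}}(\y)$ for some $\y$ with $|\y-\mb{X}^{\epsilon}_k|<\delta$.
\begin{itemize}
\item For $t$ in the $k$-th interval, $|\bm{Y}(t)-\mb{X}^{\epsilon}_k|\le|\bm{W}(t)|+|\bm{W}(k\epsilon)|+\epsilon\,\sup|\bm{v}|$.
\item $\bm{\mathcal{V}}$ is bounded on compact $\bm{\mathcal{X}}$, so by condition~3 the sets $\bm{\mathcal{V}}^\epsilon$ are uniformly bounded. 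Hence the last term is $O(\epsilon)$.
\item Consequently $\dot{\bm{Y}}(t)\in\bm{\mathcal{V}}^{\delta'}(\bm{Y}(t))$ a.e., where $\bm{\mathcal{V}}^{\delta'}$ denotes the $\delta'$-enlargement of the graph of $\bm{\mathcal{V}}$, and $\delta'=2\delta+2\beta+O(\epsilon)$.
\end{itemize}
Since $\bm{W}$ is uniformly within $\beta$, it therefore suffices to prove the following deterministic lemma.

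\emph{Lemma.} For every $\alpha>0$ there is $\delta'>0$ such that every absolutely continuous path on $[0,T]$ with $\dot{\y}(t)\in\bm{\mathcal{V}}^{\delta'}(\y(t))$ a.e. lies within $\alpha/2$ in sup norm of some solution in $\bm{\mathcal{T}}_{\Phi}$.

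\emph{Step 3 (proof of the Lemma by contradiction).} Suppose it fails. Then there are $\delta_n\to0$ and $\delta_n$-approximate solutions $\y_n$, with arbitrary initial points in $\bm{\mathcal{X}}$, each at sup-distance $\ge\alpha/2$ from every solution.
\begin{itemize}
\item The $\y_n$ are equi-Lipschitz and take values in a compact set. By Arzelà–Ascoli a subsequence converges uniformly to some $\y$.
\item The derivatives $\dot{\y}_n$ are bounded in $L^2$, so a further subsequence converges weakly to $\dot{\y}$.
\item By Mazur's lemma, $\dot{\y}$ is a.e. a limit of convex combinations of the $\dot{\y}_n$.
\item Closedness of the graph and convexity of the values $\bm{\mathcal{V}}(\x)$ (good upper semicontinuity) then give $\dot{\y}(t)\in\bm{\mathcal{V}}(\y(t))$ a.e. Thus $\y\in\bm{\mathcal{T}}_{\Phi}$.
\end{itemize}
This contradicts $\sup_{s\le T}|\y_n(s)-\y(s)|\to0$. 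Because the initial points are allowed to vary over the compact set $\bm{\mathcal{X}}$, the conclusion is uniform in $\x_0$. Combining the Lemma with the probability estimate of Step~1 proves the theorem.

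The main obstacle is Step~3, specifically passing the inclusion to the limit. This step genuinely requires convex values and a closed graph, and care is needed because the approximation is relative to the \emph{graph} of $\bm{\mathcal{V}}$ (condition~3) rather than pointwise. The proof also must absorb the discrepancy between $\bm{Y}(t)$ and the grid point $\mb{X}^\epsilon_k$ into the enlargement $\delta'$, which is exactly what Step~2 arranges.
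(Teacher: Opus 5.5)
Your argument is correct and is essentially the standard Benaïm--Hofbauer--Sorin-type proof behind \cite{roth2013stochastic}. The paper does not reprove this theorem; it imports it from that reference, which likewise splits the interpolated process into a noise term controlled by Condition~4 and a $\delta$-perturbed solution, and then uses a compactness argument (Arzel\`a--Ascoli plus the convergence theorem for convex-valued, closed-graph inclusions) to show that perturbed solutions are uniformly close to genuine ones. The remaining loose ends are cosmetic: apply Condition~4 on a horizon slightly longer than $T$ to cover the last interpolation interval, and extend your limit path beyond $[0,T]$ (using existence of solutions for good upper semicontinuous inclusions and forward invariance of $\bm{\mathcal{X}}$) so that it genuinely belongs to $\bm{\mathcal{T}}_{\Phi}$.
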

	}
	As for asymptotic behavior, in \cite{roth2013stochastic}, it has been proved that
	if for each $\epsilon >0$ 
	the GSAP	$\langle \mathbf{X}^{\epsilon}_{k} \rangle_{k=0}^\infty$ is a Markov chain,	the support of  limit points of stationary measures in the topology of weak convergence, as $\epsilon$ approaches zero, will be contained in the Birkhoff center of the differential inclusion $\dot{\x} \in \bm{\V}(\x).$
	The following theorem is a result of \cite[Theorem 3.5 and Corollary 3.9]{roth2013stochastic}.
		\begin{theorem}[\cite{aghaeeyan2023discrete}] \label{thm:implicationOFSandholm}
		For the differential inclusion $\dot{\x} \in \bm{\mathcal{V}}(\x)$, let $\mb{X}^{\epsilon}$ be GSAP Markov chains with an invariant measure $\mu^{\epsilon}$ for a vanishing sequence $\epsilon$. 
		Then, for any open neighborhood $\bm{\mathcal{O}}$ of the Birkhoff center of the  differential inclusion, we have
		$
		\lim_{\epsilon\to 0}\mu^{\epsilon}(\bm{\mathcal{O}})=1.
		$
	\end{theorem}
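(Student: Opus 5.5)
The plan is to argue by contradiction and reduce the statement to two facts from \cite{roth2013stochastic}. The first is that every weak limit point of the invariant measures $\mu^{\epsilon}$ is an invariant measure of the set-valued flow $\bm{\Phi}$ (\cite[Theorem 3.5]{roth2013stochastic}). The second is that any such invariant measure is supported in the Birkhoff center (\cite[Corollary 3.9]{roth2013stochastic}). Suppose the conclusion fails for some open $\bm{\mathcal{O}}$ containing the Birkhoff center. Then there exist $c>0$ and a subsequence $\epsilon_j\to 0$ with $\mu^{\epsilon_j}(\bm{\mathcal{X}}\setminus\bm{\mathcal{O}})\ge c$. Because $\bm{\mathcal{X}}$ is compact, the family $\mu^{\epsilon_j}$ is tight, so by Prokhorov's theorem we may pass to a further subsequence converging weakly to a probability measure $\mu$ on $\bm{\mathcal{X}}$. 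The set $\bm{\mathcal{X}}\setminus\bm{\mathcal{O}}$ is closed, so the Portmanteau theorem gives $\mu(\bm{\mathcal{X}}\setminus\bm{\mathcal{O}})\ge\limsup_j\mu^{\epsilon_j}(\bm{\mathcal{X}}\setminus\bm{\mathcal{O}})\ge c$. It therefore suffices to show that $\mu$ is concentrated on the Birkhoff center, which lies inside $\bm{\mathcal{O}}$; this yields the contradiction.

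For the invariance step, I will use the path-space notion of invariance for set-valued flows: $\mu$ is invariant if there is a probability measure on the solution set $\bm{\mathcal{T}}_{\Phi}$ that is invariant under the time shift and whose time-zero marginal is $\mu$. The construction proceeds as follows.
\begin{enumerate}
\item Start the Markov chain $\mb{X}^{\epsilon}$ from its stationary law $\mu^{\epsilon}$, and let $\bm{M}^{\epsilon}$ be the law of the interpolated process $\bar{\bm{X}}^{\epsilon}$ on $C(\mathbb{R}_{\ge0},\bm{\mathcal{X}})$, equipped with the topology of uniform convergence on compacts.
\item Stationarity implies that $\bm{M}^{\epsilon}$ is invariant under shifts by multiples of $\epsilon$.
\item The increments are bounded: conditions 2--3 of Definition~\ref{defGSAP} bound the drift term, and condition 4 controls the noise sums. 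Hence the paths are uniformly equicontinuous with high probability, and Arzel\`a--Ascoli gives tightness of $\bm{M}^{\epsilon}$.
\item Any limit point $\bm{M}$ is invariant under all real shifts, because the shift grid $\epsilon\mathbb{Z}$ becomes dense.
\item Theorem~\ref{thm:shortTermSandholm} holds uniformly in the initial condition. On each window $[0,T]$, it therefore shows that $\bm{M}^{\epsilon}$ puts asymptotically full mass within any $\alpha$ of $\bm{\mathcal{T}}_{\Phi}$.
\item Good upper semicontinuity makes $\bm{\mathcal{T}}_{\Phi}$ closed in this topology, so $\bm{M}$ is supported on $\bm{\mathcal{T}}_{\Phi}$. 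Its time-zero marginal is $\mu$.
\end{enumerate}

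For the support step, apply the Poincar\'e recurrence theorem to the shift-invariant probability measure $\bm{M}$ on the compact metrizable set $\bm{\mathcal{T}}_{\Phi}$. Under $\bm{M}$, almost every path $\y$ is recurrent for the shift, meaning there are times $t_n\to\infty$ with $\y(t_n+\cdot)\to\y$ uniformly on compacts. In particular $\y(0)$ is an accumulation point of $\y(t_n)$, so $\y(0)\in\bigcap_{t\ge0}\mathrm{cl}(\y[t,\infty))\subseteq\bm{\mathcal{L}}(\y(0))$. Thus $\y(0)$ is a recurrent point of $\bm{\Phi}$. Taking the time-zero marginal, $\mu$ assigns full mass to the set of recurrent points, and hence to its closure, the Birkhoff center. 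This completes the contradiction.

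I expect the main obstacle to be the invariance step. The passage from Theorem~\ref{thm:shortTermSandholm}, which holds only on finite horizons, to a shift-invariant measure on whole solution paths needs care in two places. First, the uniformity in $\x_0$ must be combined with the stationary initial law. Second, closedness of $\bm{\mathcal{T}}_{\Phi}$ must be verified for the projected extension of $\bm{\mathcal{V}}$. I would handle the first by integrating the uniform estimate against $\mu^{\epsilon}$ and using a diagonal argument over $T\in\mathbb{N}$. I would handle the second by invoking the standard compactness of solution sets of good upper semicontinuous inclusions, which is exactly what \cite[Theorem 3.5]{roth2013stochastic} packages.
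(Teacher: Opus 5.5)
Your proposal is correct and follows the paper's route. The paper obtains this statement directly from \cite[Theorem 3.5 and Corollary 3.9]{roth2013stochastic}: weak limit points of the stationary measures are invariant for the set-valued flow, and are therefore supported in the Birkhoff center. That is exactly your reduction, and your Prokhorov--Portmanteau contradiction argument correctly turns it into $\lim_{\epsilon\to0}\mu^{\epsilon}(\bm{\mathcal{O}})=1$, while your path-space invariance and Poincar\'e recurrence sketch reconstructs the content of those cited results, which the paper does not reproduce.
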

	\section{The link to the continuous-time dynamics}
	In view of \Cref{thm:implicationOFSandholm}, if 
	\emph{(i)} the discrete mixed population dynamics form a Markov chain, 
	and \emph{(ii)} the family of the Markov chains indexed by the population size is a GSAP for a good upper-semicontinuous differential inclusion, then the asymptotic behavior of the discrete mixed population dynamics with a population size approaching infinity can be revealed by that of the differential inclusion.
	The mean dynamic is indeed a selection from this differential inclusion.
	
	To show that the discrete mixed population dynamics form a Markov chain, first, we write the mixed population dynamics described in \Cref{sec_problemFormulation}
	in a compact way.
	We define the function ${s}^*(p,\x): [2\pp]\times (\bm{\X}_{ss} \cap\frac{1}{\mathsf{N}}\mathbb{Z}^{{2\pp}}) \rightarrow \{1,2\}$ returning ``preferred strategy'' of a type-$p$ agent at the population state $\x$ which equals $1$ (resp. $2$) if $\1$ (resp. $\2$) is the preferred strategy.
	By the ``preferred strategy,'' we mean that dictated by the agent's update rule and will be adopted if the agent is active.
	Hence,
	\begin{alignat}{2} \label{eq_s_star}
		&{s}^*(p,\x) 
		= \\
		&\begin{cases}
			{1,}& \text{ if } (x \leq {\tau_{p-\pp}}  \text{ and } \pp < p \leq \pp + \p) \\
			&\text{ or }
			(x \geq \tau_{2\pp +1 -p}'  \text{ and } p > \pp+\p)    \\
			& \text{ or } \big(\displaystyle \max_{i\in[\pp]} u^\1_i(x)\mt{1}(x_i + x_{\pp + i})  \\
			& \qquad \quad \geq \displaystyle\max_{i\in[\pp]} u^\2_i(x)\mt{1}(\theta_i + \theta_{\pp+i} - x_i  -x_{\pp+i})  \\
			&\qquad \text{ and } p \leq \pp\big), \\
			{2,} & \text{otherwise,}
			\nonumber
		\end{cases}
	\end{alignat}
	Utility $u^\1_i(x)$ (resp. $u^\2_i(x)$), for $i \in [\pp]$,  is linked to both the imitative non-imitative type $i$, and is taken into account in determining the highest earners if some agents in either of these two types are playing strategy $\1$ (resp. $\2$), or, equivalently,
	either $x_i$ or $x_{\pp+i}$ is not zero (resp. is not equal to $\theta_i$ or $\theta_{\pp+i}$).
	
	\begin{proposition} \label{prop_discreteDynamics} 
		The discrete mixed population dynamics are equivalent to the dynamics implied by the following discrete time stochastic equation for $k\in\mathbb{Z}_{\geq0}$:
		\begin{align}\label{eq:discretePopulationDynamics}
			\x(k+1)  &= \x(k) + \frac{1}{\mathsf{N}} \big({S}_k-s^*(P_k,\x(k))\big)\mb{e}_{{P}_k},
		\end{align}
		where ${P}_k$ is a random variable with the support $[2\pp]$ and distribution $\mathbb{P}[{P}_k  = p] = \theta_p$ for $p \in [2\pp],$ random variable 
		${S}_k$  has the support $\{1,2\}$ and distribution  $\mathbb{P} [{S}_k  = 1 \vert P_k = p] = x_p/\theta_p, \mathbb{P} [{S}_k  = 2 \vert P_k = p] = 1-x_p/\theta_p$, and $\mb{e}_{{P}_k}$ is the $P_k$-th unit vector in $\mathbb{R}^{2\pp}.$
	\end{proposition}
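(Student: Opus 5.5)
The plan is to show that, for each $k$, the one-step transition law of the population state under the agent-level model coincides with the law of $\x(k+1)$ given by \eqref{eq:discretePopulationDynamics}, conditional on $\x(k)$. Equality of one-step conditional laws is enough: both processes start from the same $\x(0)$, and the activation variables $A_k$ (resp. the pairs $(P_k,S_k)$) are independent across time. By induction, the two processes then have the same finite-dimensional distributions.

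\emph{Step 1: group agents into the $2\pp$ types.} Each agent belongs to exactly one of the $2\pp$ types: imitators sharing a payoff matrix with non-imitative type $p\le\pp$ form type $p$, and non-imitative types are indexed by $\pp+1,\dots,2\pp$. Type $p$ contains $\N\theta_p$ agents, of whom $\N x_p(k)$ play $\1$. Since $A_k$ is uniform on $[\N]$, the type of the active agent has law $\mathbb{P}[\text{type}=p]=\N\theta_p/\N=\theta_p$, which matches $P_k$. Conditional on the active agent being of type $p$, she is uniform among that type's members. Hence her current strategy is $\1$ with probability $x_p/\theta_p$, which matches $S_k$, where $S_k=1$ encodes current strategy $\1$.

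\emph{Step 2: show the update depends only on $(p,\x)$ and equals $s^*(p,\x)$.} We check this for each kind of agent.
\begin{itemize}
\item Nonconformists (type index $\pp<p\le\pp+\p$) use \eqref{eq:santi} with threshold $\tau_{p-\pp}$.
\item Conformists (type index $p>\pp+\p$) use \eqref{eq:scor} with threshold $\tau'_{2\pp+1-p}$. Here we use the reversed-index convention $\theta_{2\pp+1-q}=\rho'_q$.
\item For imitators ($p\le\pp$), the maximum in \eqref{eq:imitation} over agents $j$ with $\s^j=\1$ regroups as a maximum over types. Agents in imitative type $i$ and non-imitative type $\pp+i$ share the utility $u^\1_i(x)$, and some such agent plays $\1$ iff $x_i+x_{\pp+i}>0$. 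This gives exactly the factor $\mt{1}(x_i+x_{\pp+i})$. Symmetrically, some such agent plays $\2$ iff $x_i+x_{\pp+i}<\theta_i+\theta_{\pp+i}$. The imitator's decision is therefore the comparison of the maximum active utilities in \Cref{def_max_utilities}, i.e.\ the third clause of \eqref{eq_s_star}.
\end{itemize}
In every case the post-revision strategy of the active agent is $s^*(P_k,\x(k))$, and no other agent changes strategy.

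\emph{Step 3: verify the increment.} The state changes only in coordinate $P_k$. The number of $\1$-players in that type changes by $+1$ if the agent switches from $\2$ to $\1$, by $-1$ if she switches from $\1$ to $\2$, and by $0$ otherwise. With the coding $1\leftrightarrow\1$ and $2\leftrightarrow\2$, this change equals $S_k-s^*$: it is $+1$ when $S_k=2,\ s^*=1$; it is $-1$ when $S_k=1,\ s^*=2$; and it is $0$ when they agree. Dividing by $\N$ gives \eqref{eq:discretePopulationDynamics}.

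The main obstacle is Step 2 for imitators, specifically the bookkeeping. One must check that regrouping the agent-level maxima by payoff matrix reproduces the $\mt{1}(\cdot)$ masks, including the degenerate cases where some group has no $\1$-players (or no $\2$-players) and the maximum is $-\infty$. One must also check that the tie-breaking $\ge$ in favour of $\1$ is preserved.
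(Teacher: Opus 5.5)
Your proposal is correct and matches the paper's reasoning. The paper gives no formal proof, only the explanation right after the proposition (deferring to the authors' earlier best-response paper): $P_k$ is the type of the uniformly activated agent, $S_k$ is her current strategy, $s^*(P_k,\x(k))$ is the strategy her update rule dictates, and $\mb{e}_{P_k}$ places the resulting change of $\pm 1/\N$ or $0$ in her type's coordinate. One wording fix: the pairs $(P_k,S_k)$ are not independent across time, since the law of $S_k$ depends on $\x(k)$; your induction actually uses that, given the full agent-level history, the law of the increment depends only on $\x(k)$ (a lumpability property), which your Steps 1--2 already establish.
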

	
	In \Cref{prop_discreteDynamics}, the random variable ${P}_k$ represents the type of the active agent at time index $k$, while the random variable ${S}_k$ takes the value $1$ (resp. $2$) if the active agent's strategy at index $k$ is $\1$ (resp. $\2$).
	So ${S}_k-s^*(P_k,\x(k))$ indicates the difference between the current and preferred strategy of the active agent and $\mb{e}_{{P}_k}$ positions the difference at the row associated with the type of the active agent in the population state $\x$.
	
	Given \eqref{eq:discretePopulationDynamics}, the next state of the population dynamics is determined by the current population state and the type and strategy of the currently active agent.
	Consequently, the sequence $\langle\x(k)\rangle_{k=0}^\infty$ is a realization of the Markov chain defined below, which appears to be the same as \cite[Definition 5]{aghaeeyan2023discrete} but with a different function $s^*$.
	\begin{definition} \label{def_Markov_chain}
		The \textbf{\emph{mixed population dynamics Markov chain}} \tb{for a population of size $\N$} is defined as the Markov chain $\langle \mathbf{X}^{\frac{1}{\mathsf{N}}}_k\rangle_{k=0}^{\infty}$ with 
		the state space $\bm{\mathcal{X}}_{ss} \cap \frac{1}{\mathsf{N}}\mathbb{Z}^{{2\pp}}$, initial state $\mathbf{X}^{\frac{1}{\N}}_0 = \x(0)$, and transition probabilities
		\begin{align}
			& \text{Pr}_{\x,\y} = \label{eq:markov}\\ 
			&\begin{cases} 
				({\theta}_p - x_p)(2-{s}^*(p,\x)), &\hspace{-39pt}\text{if }\exists p(\y = \x+ \frac{1}{\mathsf{N}}\mb{e}_p), \nonumber \\
				x_p (s^*(p,\x)-1), &\hspace{-39pt}\text{if } \exists p(\y =  \x -\frac{1}{\mathsf{N}}\mb{e}_p),
				\\
				1- \Big(\displaystyle\sum_{p=1}^{2\pp} x_p (s^*(p,\x)-1)&\hspace{-39pt}\text{if } \y=  \x,\\
				\hspace{15pt}  
				\quad \qquad  + ({\theta}_p - x_p)(2-s^*(p,\x))
				\Big),
				& \\
				0, &\hspace{-39pt}\text{otherwise},
			\end{cases}   
		\end{align}
	where $\mb{e}_p$ is the $p^\text{th}$ unit vector in  $\mathbb{R}^{2\pp}$. 
    	\end{definition}
	It can be easily shown that the sequence $\langle \x(k)\rangle_{k=0}^{\infty}$, where $\x(k)$ is governed by \eqref{eq:discretePopulationDynamics}, is a realization of $\langle\mathbf{X}^{\frac{1}{\mathsf{N}}}_k\rangle_{k=0}^{\infty}$.
	See \cite{aghaeeyan2023discrete} for a proof of a similar result.
	Note that \Cref{prop_discreteDynamics} and \Cref{def_Markov_chain} take the same form as that in \cite{aghaeeyan2023discrete}, but the difference is that the underlying dynamics and, hence, the random variables and the function $s^*$ are different.

	% We are now ready to define the differential inclusion for which the Markov chain defined above, indexed by the population size, is a GSAP (\Cref{lemGSAPS}).
	The mean dynamics \tb{give} the expected increment in the population proportion of $\1$-players in each type per time step, in the large-population limit, and can be written as the difference between the expected inflow of $\1$-players and the expected outflow of $\1$-players.
	The mean dynamic induced by the set-up introduced in previous sections is a selection from the following differential inclusion.
	\begin{definition}  \label{def_semicontinuousDynamics}
		The \textbf{\emph{continuous-time mixed population dynamics}} are defined by 
		\begin{subequations}
			\begin{align}
				\dot{\x}\in \bm{\mathcal{V}}(\x),
			\end{align}\label{eq:type_mixed}
		\end{subequations}
		where $\bm{\mathcal{V}} :\bm{\mathcal{X}}_{ss} \rightrightarrows \mathbb{R}^{2\pp}$ and for $0 < p \leq \pp $
		%\begin{subequations} 
		\begin{alignat*}{2}
			{\V}_p(\x) \!=\!
			\begin{cases} 
				\{{\theta}_p- x_p\}, &\text{if }  u^\1(x) >   u^\2(x) \\
				&\quad \text{and } u^\1(\x) >   u^\2(\x),  \\ 
				\{- x_p\},&\text{if }  u^\1(x) <   u^\2(x) 
				\\&\quad \text{and } u^\1(\x) <   u^\2(\x),  \\  
				[-x_p, {\theta}_p-x_p], & \text{otherwise},
			\end{cases}    \tag{\ref{eq:type_mixed}b}
		\end{alignat*}  
		%\label{eq:type_mixed_setValued_function}
		%\end{subequations}
		for $\pp < p \leq \pp + \p $,
		\begin{align*}
			{\V}_p(\x) \!= \!\begin{cases} 
				\{{\theta}_p - x_p\}, & \text{if } x < \tau_{p-\pp},   \\
				[-x_p, {\theta}_p-x_p], & \text{if }  x = \tau_{p-\pp}, \\
				\{- x_p\}, & \text{if } x > \tau_{p-\pp},   
			\end{cases}\tag{\ref{eq:type_mixed}c}
		\end{align*} 
		and for  $p > \pp+\p$ 
		\begin{align*}
			\V_p(\x)\! = \!\begin{cases}
				\! \{ - x_p \}, &\hspace{-5pt} \text{if } x < \tau'_{2\pp+1-p},  \\
				\!    [-x_p, {\theta}_{p}-x_p],  &\hspace{-5pt}\text{if } x = \tau'_{2\pp+1-p},\\
				\!   \{ {\theta}_{p} - x_p\},  &\hspace{-5pt}\text{if } x > \tau'_{2\pp+1-p}, 
			\end{cases}     \tag{\ref{eq:type_mixed}d}
		\end{align*}
		where  
		\begin{align*}
			x=\sum_{p=1}^{2\pp}x_p \tag{\ref{eq:type_mixed}e}
		\end{align*}
		is the proportion of $\1$-players in the population.
	\end{definition}
	
	Except for the first $\pp$ elements of $\bm{\mathcal{V}}$, which correspond to imitative types, 
	the remaining elements 
	are the same as those of the differential inclusion obtained for a population of purely best-responders \cite{aghaeeyan2023discrete}.
	The set-valued map $\V_p(\x)$ for $p \leq \pp$  is a singleton as long as 
	the maximum $\1$-utility  and active $\1$-utility are either both greater than or both less than maximum  $\2$-utility and active  $\2$-utility, respectively. 
	If, at some $\x$, the maximum $\1$-utility exceeds the maximum $\2$-utility while the active $\1$-utility is less than the active $\2$-utility (or vice versa), or if any of the $\1$-utilities equals its $\2$ counterpart, then $\V_p(\x)$ is non-singleton.

	We analyze the behavior of the discrete population dynamics \eqref{eq:discretePopulationDynamics} for increasing population size while the population structure, i.e., the vector $\bm{\theta}$ remains unchanged.
	Hence, the sequence $\langle N\rangle_{N = \N}^{\infty}$ along which the population size approaches infinity must satisfy $N  \bm{\theta} \in \mathbb{Z}^{2\pp}$.   
	The set of all such feasible population sizes is represented by $\mathcal{N}$.
	By  $\langle \frac{1}{\N}\rangle_{\N \in \mathcal{N}}$,  we mean a decreasing sequence $\frac{1}{\N}, \N \in \mathcal{N}$,  converging to zero.
	
	It can be shown that the differential inclusion \eqref{eq:type_mixed} satisfies the conditions of good upper semi-continuity.
	We then have the following lemma. 
	\begin{lemma}    \label{lemGSAPS}
		For vanishing sequence $\langle \frac{1}{\N}\rangle_{\N \in \mathcal{N}}$, 
		the collection of $\mb{X}^{\frac{1}{\mathsf{N}}} = \langle  \mb{X}^{\frac{1}{\mathsf{N}}}_k\rangle_{k=0}^{\infty}$  is a GSAP for \eqref{eq:type_mixed}.
	\end{lemma}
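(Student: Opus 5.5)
We verify the four conditions of \Cref{defGSAP} with $\epsilon = \frac{1}{\N}$ and $\bm{\mathcal{X}} = \bm{\mathcal{X}}_{ss}$, following the template of \cite{aghaeeyan2023discrete}. The only new ingredient is the handling of the imitative coordinates $p \le \pp$.

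\textbf{Decomposition.} From \Cref{prop_discreteDynamics}, define the conditional mean increment
$$\bm{\nu}^{\N}(\x) = \N\,\mathbb{E}\big[\mb{X}^{\frac{1}{\N}}_{k+1}-\mb{X}^{\frac{1}{\N}}_k \mid \mb{X}^{\frac{1}{\N}}_k=\x\big].$$
By \eqref{eq:markov}, its $p$-th component is $(\theta_p - x_p)(2-s^*(p,\x)) - x_p(s^*(p,\x)-1)$. This equals $\theta_p - x_p$ if $s^*(p,\x)=1$ and $-x_p$ if $s^*(p,\x)=2$. We then set
$$\mathbi{U}^{\frac{1}{\N}}_{k+1} = \N\big(\mb{X}^{\frac{1}{\N}}_{k+1}-\mb{X}^{\frac{1}{\N}}_k\big) - \bm{\nu}^{\N}\big(\mb{X}^{\frac{1}{\N}}_k\big)$$
and $\bm{\mathcal{V}}^{\frac{1}{\N}}(\x) = \{\bm{\nu}^{\N}(\x)\}$. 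This makes condition 2 hold by construction. Condition 1 is immediate, since the chain only moves $x_p$ by $\pm\frac1\N$ when there is room within $[0,\theta_p]$.

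\textbf{Condition 4.} The $\mathbi{U}^{\frac{1}{\N}}_k$ are martingale differences with respect to the natural filtration. Each increment $\N(\mb{X}_{k+1}-\mb{X}_k)$ is a unit vector or zero, so the $\mathbi{U}^{\frac{1}{\N}}_k$ are bounded by a constant independent of $\N$ and $\x$. Doob's maximal inequality then gives
$$\mathbb{P}\Big[\max_{k\le T\N}\big|\textstyle\sum_{i\le k}\tfrac1\N \mathbi{U}_i\big|>\alpha\Big] \le \frac{C\,T\N/\N^2}{\alpha^2}\to 0,$$
uniformly in $\x$.

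\textbf{Condition 3 for non-imitative coordinates.} This is the main step, since we must show $\bm{\nu}^{\N}(\x)$ lies within $\delta$ of $\bm{\mathcal{V}}(\y)$ for some $\y$ within $\delta$ of $\x$. In fact we will show the stronger statement $\bm{\nu}^{\N}(\x)\in\bm{\mathcal{V}}(\x)$ for every $\x$ and $\N$, so that $\y=\x$ works. For $p>\pp$, compare \eqref{eq_s_star} with (\ref{eq:type_mixed}c,d). For a nonconformist, $s^*=1$ iff $x\le\tau_{p-\pp}$. When $x<\tau$ this gives $\theta_p-x_p$, matching $\V_p$. When $x>\tau$ it gives $-x_p$, again matching. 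When $x=\tau$ it gives $\theta_p - x_p$, which lies in $[-x_p,\theta_p-x_p]$. Conformists are handled identically.

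\textbf{Condition 3 for imitative coordinates.} For $p\le\pp$, $s^*(p,\x)=1$ iff $u^\1(\x)\ge u^\2(\x)$, using the active maxima. Suppose the first case of (\ref{eq:type_mixed}b) holds, i.e.\ $u^\1(x)>u^\2(x)$ and $u^\1(\x)>u^\2(\x)$. Then $s^*=1$ and $\nu_p=\theta_p-x_p$. Symmetrically, in the second case $s^*=2$ and $\nu_p = -x_p$. In every other case $\V_p(\x)$ is the full interval $[-x_p,\theta_p-x_p]$, which contains both possible values. Hence $\bm{\nu}^{\N}(\x)\in\bm{\mathcal{V}}(\x)$ componentwise, and $\bm{\mathcal{V}}$ is a product over coordinates.

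\textbf{Where care is needed.} The main subtlety is consistency of the tie case $u^\1(\x)=u^\2(\x)$ with the definition of $\bm{\mathcal{V}}$. Because $\V_p$ is defined to be the whole interval whenever the strict inequalities fail, the tie causes no problem. We should also check that the interpolation and projection-based extension used for good upper semicontinuity does not interfere, which holds because $\bm{\nu}^{\N}$ is only evaluated on $\bm{\mathcal{X}}_{ss}$. Together with the (assumed) good upper semicontinuity of \eqref{eq:type_mixed}, this establishes the GSAP property.
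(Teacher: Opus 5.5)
Your proposal is correct and is essentially the paper's proof. You use the same decomposition into the scaled mean increment (taken as $\bm{\mathcal{V}}^{\frac{1}{\N}}$) plus a bounded martingale-difference noise term, and the same choice $\y=\x$ in Condition 3, since the mean increment is itself a selection of $\bm{\mathcal{V}}$ (you check this coordinate by coordinate, whereas the paper only asserts it). Condition 4 likewise comes from bounded martingale differences: you apply Doob's inequality directly where the paper cites Proposition 2.3 of Roth and Sandholm. Like the paper, you take the good upper semicontinuity of \eqref{eq:type_mixed} as given. That premise is delicate for the imitative coordinates: closedness of the graph can fail at boundary states where the active and unconditional maximum utilities disagree. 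It is, however, a shared assumption rather than a gap in your argument relative to the paper.
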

	
	\tr{ Given \Cref{lemGSAPS} and \Cref{thm:shortTermSandholm},
	the interpolated process $\bar{\bm{X}}^\frac{1}{\N}$ of the mixed population dynamics Markov chain, over any finite time interval, closely tracks some solution of the continuous-time population dynamics, with high probability, when population size approaches infinity.
	\begin{proposition} \label{prop_transient_dynamics}
		For any $T>0$ and for any $\alpha >0$ we have
		$$
		\lim_{\frac{1}{\N} \to 0}\! \mathbb{P} \!\left[ \!\inf_{\x(t) \in \bm{\mathcal{T}}_{\Phi}} \sup_{0 \leq s \leq T} \vert \bar{\bm{X}}^{\frac{1}{\N}}(s) - \x(s) \vert \geq \alpha\! \mid \bar{\bm{X}}^{\frac{1}{\N}}(0) \!=\! \x_0 \right]\! = \!0
		$$
		uniformly in $\x_0 \in {\bm{\mathcal{X}}_{ss}}$, where
		$\bar{\bm{X}}^\frac{1}{\N}$ is the interpolated process  of the mixed population dynamics Markov chain \eqref{eq:markov} and
		$\bm{\mathcal{T}}_{\Phi}$ is the set of all solutions of the continuous-time mixed population dynamics \eqref{eq:type_mixed}.
	\end{proposition}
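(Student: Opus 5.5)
This proposition follows directly by combining \Cref{lemGSAPS} with \Cref{thm:shortTermSandholm}. First, I would record the facts that make \Cref{thm:shortTermSandholm} applicable.

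The differential inclusion \eqref{eq:type_mixed} must be good upper semicontinuous on the compact convex set $\bm{\mathcal{X}}_{ss}$. To check this, I would use the three conditions stated in the Background section, together with the projection-based extension to $\mathbb{R}^{2\pp}$. The conditions are verified as follows.
\begin{enumerate}
\item Each $\V_p(\x)$ is a nonempty closed interval (or a point) contained in $[-x_p,\theta_p-x_p]$. Hence $\bm{\mathcal{V}}(\x)$ is a product of such intervals, so it is nonempty, convex and bounded.
\item At a face where $x_p=0$ (resp. $x_p=\theta_p$), every element of $\V_p(\x)$ is nonnegative (resp. nonpositive). Hence $\bm{\mathcal{V}}(\x)\subseteq \mathrm{TC}(\x)$.
\item The graph is closed. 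The singleton branches are selected on open conditions (strict inequalities in $x$, or in the continuous functions $u^\1(x)-u^\2(x)$). The maximum active utilities are only upper semicontinuous, but whenever the strict conditions fail in a limit, the full interval $[-x_p,\theta_p-x_p]$ is assigned. That interval contains the limits of both singleton branches.
\end{enumerate}
The paper already asserts this property just before \Cref{lemGSAPS}, so I would cite it rather than re-derive it.

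With good upper semicontinuity in hand, \Cref{lemGSAPS} states that the family $\mb{X}^{\frac{1}{\N}}$, indexed by the vanishing sequence $\langle \frac{1}{\N}\rangle_{\N\in\mathcal{N}}$, is a GSAP for \eqref{eq:type_mixed} with $\epsilon=\frac{1}{\N}$. The interpolated process $\bar{\bm{X}}^{\frac{1}{\N}}$ defined before \Cref{thm:shortTermSandholm} uses time scaling $t=k/\N=k\epsilon$, which matches the step size of the GSAP. Therefore \Cref{thm:shortTermSandholm} applies verbatim. For every $T>0$ and $\alpha>0$ it gives the stated limit, uniformly in the initial condition $\x_0\in\bm{\mathcal{X}}_{ss}$.

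Two bookkeeping points remain.
\begin{itemize}
\item The conditioning event $\bar{\bm{X}}^{\frac{1}{\N}}(0)=\x_0$ coincides with $\mb{X}^{\frac{1}{\N}}_0=\x_0$. This is meaningful only for $\x_0$ in the lattice $\bm{\mathcal{X}}_{ss}\cap\frac{1}{\N}\mathbb{Z}^{2\pp}$. Uniformity over $\bm{\mathcal{X}}_{ss}$ is understood as uniformity over the attainable initial states for each $\N$, exactly as in \cite{roth2013stochastic}.
\item The set $\bm{\mathcal{T}}_{\Phi}$ of solutions of \eqref{eq:type_mixed} is the set appearing in \Cref{thm:shortTermSandholm}. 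Because $\bm{\mathcal{X}}_{ss}$ is forward invariant under the extended inclusion, the solutions relevant from $\x_0\in\bm{\mathcal{X}}_{ss}$ remain in $\bm{\mathcal{X}}_{ss}$.
\end{itemize}

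The only potential obstacle is whether \Cref{lemGSAPS} and good upper semicontinuity hold as claimed. In particular, condition 3 of \Cref{defGSAP} needs the closed-graph argument at points where the active and inactive maximum utilities differ. Since both results are stated earlier in the paper, the proof of the proposition itself is a direct application of \Cref{thm:shortTermSandholm}.
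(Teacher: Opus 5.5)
\textbf{Same route as the paper, but your good upper semicontinuity check (item 3) fails.} Using \Cref{lemGSAPS} for the GSAP property and then \Cref{thm:shortTermSandholm} with $\epsilon=\frac{1}{\N}$ is exactly the paper's two-line proof. Your remarks on lattice initial states and forward invariance are fine. The trouble is item~3, which is the obstacle you flagged yourself and then took for granted, as the paper does in its unproved remark before \Cref{lemGSAPS}.

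The maximum active utilities $u^\1(\x)$, $u^\2(\x)$ are \emph{lower}, not upper, semicontinuous. Being active ($x_i+x_{\pp+i}>0$, resp.\ $<\theta_i+\theta_{\pp+i}$) is an open condition, so a small perturbation can only switch types on and raise these maxima. Hence $\{u^\1(\x)>u^\2(\x)\}$ need not be open. Your remark that ``the full interval is assigned whenever the strict conditions fail in a limit'' covers only the harmless direction. A closed graph needs the singleton region of (\ref{eq:type_mixed}b) to be open, and it is not.

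Concretely, take two types $j,k$ and a state $\x$ in which every type-$j$ agent plays $\2$ and every type-$k$ agent plays $\1$, with $u^\1_j(x)>u^\2_k(x)>u^\1_k(x)>u^\2_j(x)$.
\begin{itemize}
\item At $\x$: $u^\1(x)=u^\1_j(x)>u^\2_k(x)=u^\2(x)$ and $u^\1(\x)=u^\1_k(x)>u^\2_j(x)=u^\2(\x)$. So $\V_p(\x)=\{\theta_p-x_p\}$ for every imitative component $p$.
\item At nearby states $\y$ with $x_{\pp+k}$ lowered by a small $\delta>0$: type $k$ becomes $\2$-active, so $u^\1(\y)<u^\2(\y)$ while still $u^\1(y)>u^\2(y)$. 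Hence $\V_p(\y)=[-x_p,\theta_p-x_p]$.
\item The value $-x_p\in\V_p(\y)$ does not lie in $\V_p(\x)$ when $\theta_p>0$.
\end{itemize}
So $\bm{\mathcal{V}}$, as literally defined, does not have a closed graph, and \Cref{thm:shortTermSandholm} cannot simply be cited.

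\textbf{How to repair it.} The proposition still holds with one extra step.
\begin{enumerate}
\item Apply \Cref{lemGSAPS} and \Cref{thm:shortTermSandholm} to the closed-graph (convexified) hull $\bar{\bm{\mathcal{V}}}$ of $\bm{\mathcal{V}}$. It is good upper semicontinuous and still contains the mean-dynamics selection $\bm{\nu}_1$, so Condition~3 only gets easier.
\item Show that $\bar{\bm{\mathcal{V}}}$ and $\bm{\mathcal{V}}$ have the same solutions. They differ only in the imitative components, at points of the singleton region that are not interior to it.
\item Characterise those points, arguing as in \Cref{lemmaAppendix}. Lower semicontinuity shows that at such a point with $u^\1(x)>u^\2(x)$, the type $j$ attaining $u^\1(x)$ is $\1$-inactive while $u^\1_j(x)>u^\2(x)\ge u^\2_j(x)$; the case for $\2$ is symmetric. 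Hence $x_{\pp+j}=0$ with forced velocity $\V_{\pp+j}=\{\theta_{\pp+j}\}\neq\{0\}$, and taking the hull does not change this component.
\item An absolutely continuous solution has $\dot x_{\pp+j}=0$ for almost every time at which $x_{\pp+j}$ is fixed. So solutions spend only a null set of times at these states, and $\dot{\x}\in\bm{\mathcal{V}}(\x)$ holds almost everywhere.
\end{enumerate}
Since $\bm{\mathcal{T}}_\Phi$ is therefore unchanged, the conclusion of \Cref{thm:shortTermSandholm} for $\bar{\bm{\mathcal{V}}}$ is exactly the proposition.
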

	}
	What about the asymptotic behavior of the discrete mixed population dynamics?
	Given \Cref{lemGSAPS} and \Cref{thm:implicationOFSandholm}, the asymptotic behavior of the discrete mixed population dynamics, for a population size approaching infinity, can also be revealed by that of the differential inclusion defined in \eqref{eq:type_mixed}.
	\begin{remark} \label{remark_other_tie_breakers}
		The set-up introduced in \eqref{eq:imitation}, \eqref{eq:scor}, and \eqref{eq:santi} is based on the specific tie-breaking rule; agents choose strategy $\1$ whenever there is no difference between the strategies.
		However, the paper's main results remain valid when other tie-breaking rules are considered.
		More specifically, a different tie-breaker changes only $s^{*}(\cdot, \cdot)$ when a tie happens.
		This, in turn, changes the selection from the differential inclusion at the tie for which Condition 3 in \Cref{defGSAP} is satisfied.
		Thus, the Markov chain with transition probability \eqref{eq:markov} remains a GSAP for the differential inclusion \eqref{eq:type_mixed}.
	\end{remark}
	\section{Main results}
	%We first introduce additional definitions and assumptions necessary to present the main results.
	
	\tb{The following definition characterizes classes of population states; later, we show that the dynamics' equilibrium points are contained in these classes.}
	
	\begin{definition} \label{def_population-equilibrium-point}
		The population state $\x \in \bm {\mathcal{X}}_{ss}$ is 
		\begin{enumerate}
			\item  \emph{ clean-cut} if it is characterized by 
			\begin{align*}
				(\zeta_1, \ldots, \zeta_{\pp},
				\rho_1, \ldots, \rho_{i}, 0, \ldots, 0,\rho_j', \ldots,\rho_1'),
			\end{align*}
			where
			$\max \{\tau_{i+1}, \tau'_{j} \}< {\eta}'_{j}+ {\eta}_{i} + \zeta_0 < \min \{\tau_{i}, \tau'_{j+1} \}$ and $u^\1(\eta_i + \eta_j'+ \zeta_0) > u^\2(\eta_i + \eta_j'+ \zeta_0)$;
			\item   \emph{imitation-free clean-cut} if it is characterized by 
			$$(\underbrace{0, \ldots, 0}_{\pp}, \rho_1, \ldots, \rho_{i}, 0, \ldots, 0,\rho_j',  \ldots, \rho_1'),$$ where
			$\max \{\tau_{i+1}, \tau'_{j} \}< {\eta}'_{j}+ {\eta}_{i}  < \min \{\tau_{i}, \tau'_{j+1} \}$ and $u^\1(\eta_i + \eta_j') < u^\2(\eta_i + \eta_j')$;
			\item   \emph{mixed clean-cut} if it belongs to the following set 
			\begin{align}
				\bm{\mathcal{C}}^{c}_{i,j} = 
				\big\{\x &\in \bm{\mathcal{X}}_{ss}\mid  \sum_{p=1}^{\pp}x_p  = c - (\eta_{i} + \eta'_{j}), \nonumber\\
				& x_{ p} = \theta_{p}, \text{ if }\pp + 1 \leq\! p\! \leq \!\pp + i \nonumber\\
				& \qquad \qquad \text{ or } 2\pp -j + 1 \leq p \!\leq\! 2\pp,\nonumber\\
				&  x_{ p} = 0,  \text{ if } \pp + i \! < \!p \!<\! 2\pp -j + 1  \big\}, \label{eq_mixed_clean}
			\end{align}
			for some  
			$c \in (\max \{\tau_{i+1}, \tau'_{j} \}, \min \{ \tau_i, \tau'_{j+1}\})$, and $ i \in [\p], j \in [\p'], c \in \mathcal{C}$;
			\item  \emph{non-mixed nonconformist-driven} if it is characterized by
			\begin{align*}
				\big(r_1, \ldots,& r_{\pp}, 
				\rho_1, \ldots, \rho_{i-1}, \tau_i - (\sum_{p=1}^{\pp} r_i  + \eta_{i-1} + \eta_j'), 0,\\
				&\ldots,0, \rho_j',  \ldots, \rho_1'\big),
			\end{align*}
			where $ \tau_i \notin \mathcal{C}$, $j =\max \{k| \tau'_k < \tau_{i} \} $,
			\begin{equation*}
				r_p =
				\begin{cases}
					0, & \text{if } u^\1(\tau_i) < u^\2(\tau_i),\\
					\theta_p, & \text{if } u^\1(\tau_i) > u^\2(\tau_i),
				\end{cases}
			\end{equation*}
			for $p \in [\pp]$,  and
			\begin{equation*}
				\tau_i \in 
				\begin{cases}
					(\eta_{i-1} + \eta_j', \eta_{i} + \eta_j'), \\ \quad \qquad \text{if } u^\1(\tau_i) < u^\2(\tau_i),\\
					(\eta_{i-1} + \eta_j' + \zeta_0, \eta_{i} + \eta_j' + \zeta_0), \\
					\quad \qquad \text{if }
					u^\1(\tau_i) > u^\2(\tau_i);
				\end{cases}
			\end{equation*} 
			\item  \emph{mixed nonconformist-driven} if it belongs to the following set
			\begin{align}
				\bm{\mathcal{R}}^n_{i,j} = \big\{\x &\in\bm{\mathcal{X}}_{ss} \mid \sum_{p=1}^{\pp}x_p + x_{\pp + i} = \tau_i - (\eta_{i-1} + \eta'_{j}), \nonumber \\
				& x_{ p} = \theta_{ p}, \text{ if }\pp + 1 \leq p \leq \pp + i-1 \nonumber\\
				& \qquad \qquad \text{ or } 2\pp -j + 1 \leq p \leq 2\pp,\nonumber\\
				&  x_{ p} = 0, \! \text{ if } \pp + i \! < \!p \!< \!2\pp -j + 1 
				\big\},\label{eq_mixed_anti}
			\end{align}
			for some   $ i \in [\p]$,  $j =\max \{k| \tau'_k < \tau_{i} \} $,
			$\tau_i \in \mathcal{C} \cap
			(\eta_{i-1} + \eta_j', \eta_{i} + \eta_j' + \zeta_0);$ 
			\item  \emph{non-mixed conformist-driven} if it is characterized by
			\begin{align*}
				\big(r_1, \ldots, & r_{\pp}, 
				\rho_1, \ldots, \rho_{i}, 0, \ldots,0,\\
				&\tau_j' - (\sum_{p=1}^{\pp} r_i  + \eta_{i} + \eta_{j-1}'),
				\rho_{j-1}',\ldots,  \rho_1'\big),
			\end{align*}
			where $\tau_j' \notin \mathcal{C}$,
			$i= \max \{k \in [\p]| \tau'_j < \tau_{k} \} $,
			\begin{equation*}
				r_p  =
				\begin{cases}
					0, & \text{if } u^\1(\tau_j') < u^\2(\tau_j'),\\
					\theta_p, & \text{if } u^\1(\tau_j') > u^\2(\tau_j'),
					%\\
					% [0,\zeta_0], & \text{otherwise, }
				\end{cases}
			\end{equation*}
			for $p \in [\pp]$,  and
			\begin{equation*}
				\tau'_j \in 
				\begin{cases}
					(\eta_{i} + \eta_{j-1}', \eta_{i} + \eta_j'), \\
					\quad \qquad \text{if } u^\1(\tau_j') < u^\2(\tau_j'),\\
					(\eta_{i} + \eta_{j-1}'+\zeta_0, \eta_{i} + \eta_j' + \zeta_0),
					\\
					\quad \qquad \text{if }
					u^\1(\tau_j') > u^\2(\tau_j'),
				\end{cases}
			\end{equation*}
			\item \emph{mixed conformist-driven} if it belongs to the following set 
			\begin{align}
				\bm{\mathcal{R}}^c_{i,j}\! =\!
				\big\{\!\x& \in\bm{\mathcal{X}}_{ss} \!\mid \sum_{p=1}^{\pp}x_p + x_{2\pp -j+1}\! =\! \tau'_j - (\eta_{i} + \eta'_{j-1}), \nonumber\\
				& x_{ p} = \theta_{ p}, \text{ if }\pp + 1 \leq p \leq \pp + i \nonumber\\
				& \qquad \qquad \text{ or } 2\pp -j + 2 \leq p \leq 2\pp,\nonumber\\
				&  x_{ p} = 0,  \text{ if } \pp + i \! < p <\! 2\pp -j + 1
				\big\},\label{eq_mixed_cor}
			\end{align}
			for some $ j \in [\p']$, $i= \max \{k \in [\p]| \tau'_j < \tau_{k} \} $,
			$\tau'_j \in \mathcal{C} \cap 
			(\eta_{i} + \eta_{j-1}', \eta_{i} + \eta_j' + \zeta_0)$.
		\end{enumerate}
		\tb{In the population states defined above,  $i=0$ (resp. $j = 0$) implies that no nonconformists (resp. conformists) are playing $\1$.}
	\end{definition}
	\begin{remark} \label{remark_continuum}
		Three types of population states—mixed nonconformist-driven, mixed conformist-driven, and mixed clean-cut—form a continuum.
	\end{remark}
	
	The following lemma characterizes the equilibria of the continuous-time population dynamics \eqref{eq:type_mixed}.
	
	\begin{lemma} \label{lem_equilibriumPointofContinuous}
		Let $\bm{\mathcal{Q}}$ be the set of all equilibrium points of the continuous-time population dynamics.
		Under Assumptions \ref{ass:ass1}-\ref{ass:ass3}, 
		%  each equilibrium point  of the
		% continuous-time population dynamics \eqref{eq:type_mixed}
		% corresponds to one of the five population states defined in \Cref{def_population-equilibrium-point}.
		\tb{every member of the set}  $\bm{\mathcal{Q}}$ \tb{is one of the } population states defined in \Cref{def_population-equilibrium-point}.
	\end{lemma}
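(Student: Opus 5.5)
Let $\x^*\in\bm{\mathcal{Q}}$ and set $x^*=\bm 1^\top\x^*$. The plan is to read off $\x^*$ coordinate by coordinate from the condition $\mb 0\in\bm{\mathcal V}(\x^*)$. We then split according to whether $x^*$ is a threshold in $\mathcal T$, a point of $\mathcal C$, both, or neither. Each case should land in exactly one item of \Cref{def_population-equilibrium-point}.

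\emph{Step 1 (best-responder coordinates).} For a nonconformist coordinate $p=\pp+i$, (\ref{eq:type_mixed}c) gives the following. If $x^*<\tau_i$ then $x^*_p=\theta_p$; if $x^*>\tau_i$ then $x^*_p=0$; if $x^*=\tau_i$ then $x^*_p\in[0,\theta_p]$ is free. The conformist coordinates behave symmetrically by (\ref{eq:type_mixed}d). Since the thresholds are distinct, at most one best-responder coordinate is free. Using the orderings $\tau_1>\dots>\tau_\p$ and $\tau'_1<\dots<\tau'_{\p'}$, the best-responder block has the form $(\rho_1,\dots,\rho_i,\ast,0,\dots,0,\ast,\rho'_j,\dots,\rho'_1)$. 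Here $i$ and $j$ are fixed by the position of $x^*$ among the thresholds: $j=\max\{k\mid\tau'_k<x^*\}$ and $i$ is the analogous index for the nonconformists. This also accounts for the relations $j=\max\{k\mid \tau'_k<\tau_i\}$ in items 4--7.

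\emph{Step 2 (imitator coordinates).} For $p\le\pp$, (\ref{eq:type_mixed}b) gives three possibilities. If both the maximum and the maximum active $\1$-utility dominate their $\2$ counterparts, then $x_p^*=\theta_p$. If both are dominated, then $x_p^*=0$. Otherwise the coordinate is free. So either all imitators play $\1$ (total $\zeta_0$), or none do, or we are in the "otherwise" regime.

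We first show the otherwise regime forces $x^*\in\mathcal C$. If, say, $u^\1(x^*)>u^\2(x^*)$ but $u^\1(\x^*)\le u^\2(\x^*)$, then the maximizing $\1$-line belongs to a type $k$ with $x^*_k+x^*_{\pp+k}=0$. We then need to rule this out at equilibrium. The maximizing type's non-imitative agents have preferred strategy determined by $x^*$ versus their threshold. Combining this with \Cref{ass:ass3} (so that $u^\1_k\neq u^\2_k$ generically), we will argue that the imitators of type $k$ must then sit at $\theta_k$ or that the maximum is attained actively, which is a contradiction. Only the tie $u^\1(x^*)=u^\2(x^*)$, i.e.\ $x^*\in\mathcal C$, survives.

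\emph{Step 3 (case analysis).} This step is the main bookkeeping; the arguments for each case are as follows.

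(a) $x^*\notin\mathcal T\cup\mathcal C$. Every coordinate is pinned, and $x^*$ equals $\eta_i+\eta'_j$ or $\eta_i+\eta'_j+\zeta_0$. Consistency of the threshold inequalities gives the clean-cut case (with the utility sign $u^\1>u^\2$) or the imitation-free clean-cut case.

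(b) $x^*\in\mathcal C\setminus\mathcal T$. The imitator sum is free and the best-responder coordinates are pinned, giving $\bm{\mathcal C}^c_{i,j}$.

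(c) $x^*=\tau_i\notin\mathcal C$. Only $x^*_{\pp+i}$ is free, and the imitators are at $0$ or $\theta_p$ according to the utility sign. Solving $\bm 1^\top\x^*=\tau_i$ gives the non-mixed nonconformist-driven form. The constraint $0<x^*_{\pp+i}<\rho_i$ (strict by \Cref{ass:ass1}) yields the interval condition on $\tau_i$.

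(d) $x^*=\tau_i\in\mathcal C$. Here \Cref{ass:ass2} is needed: it says the maximizing utilities at $\tau_i$ are those of type $i$ itself. This makes the imitators' active-utility comparison consistent with a mixed state. Both the imitator sum and $x^*_{\pp+i}$ are then free subject to the sum constraint, giving $\bm{\mathcal R}^n_{i,j}$.

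(e) The conformist analogues of (c) and (d) follow symmetrically.

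The main obstacle I expect is Step 2, namely the boundary states where the maximum active utility differs from the maximum utility. This is the phenomenon the paper flags as new relative to \cite{aghaeeyan2023discrete}. \Cref{ass:ass1} excludes $x^*\in\{\eta_k+\eta'_l,\eta_k+\eta'_l+\zeta_0\}\cap(\mathcal T\cup\mathcal C)$, and this exclusion is what prevents degenerate mixed equilibria sitting exactly at such sums.
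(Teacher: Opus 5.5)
Your overall plan is the paper's proof. You read off each coordinate from $\bm 0\in\bm{\mathcal V}(\x^*)$ and split on whether $x^*$ lies in $\mathcal T$ and/or $\mathcal C$: the paper's Case 1 becomes your (a)--(b) and its Case 2 your (c)--(e). Your Step 2 is the first half of the paper's Lemma~\ref{lemmaAppendix} (the set $\bm{\mathcal X}_1$). Your use of Assumption~\ref{ass:ass1} to make $0<x^*_{\pp+i}<\rho_i$ strict is, if anything, more careful than the paper's closed-interval statement.

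The gap is in Step 2, which is the crux: your sketch does not yet contain an argument that works. You aim for one of two outcomes, and the first cannot be obtained. Once the maximizing $\1$-type $k$ has $x^*_k+x^*_{\pp+k}=0$, nothing pins the imitators of type $k$. In the ``otherwise'' branch of (\ref{eq:type_mixed}b), $\V_k(\x^*)=[-x^*_k,\theta_k-x^*_k]$ contains $0$ for every $x^*_k$, and $\theta_k$ may be $0$ anyway. The second outcome, that type $k$ is actually active, is the right target. But it comes from the best-responder coordinate $\pp+k$, not from Assumption~\ref{ass:ass3}, which plays no role here.

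The chain $u^\1_k(x^*)=u^\1(x^*)>u^\2(x^*)\ge u^\2_k(x^*)$ shows that type $k$'s best responders strictly prefer $\1$ at $x^*$. So $x^*$ is not their threshold and lies on the $\1$ side of it, and by (\ref{eq:type_mixed}c) or (\ref{eq:type_mixed}d) you get $\V_{\pp+k}(\x^*)=\{\theta_{\pp+k}-x^*_{\pp+k}\}=\{\theta_{\pp+k}\}$. Every type has a nonempty best-responding part, so $\theta_{\pp+k}=\rho_k>0$, hence $0\notin\V_{\pp+k}(\x^*)$, a contradiction.

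The mirror case works the same way. Take a maximizing $\2$-type $r$; being inactive for $\2$, it has $x^*_{\pp+r}=\theta_{\pp+r}$. Then $u^\2_r(x^*)>u^\1(x^*)\ge u^\1_r(x^*)$ gives $\V_{\pp+r}(\x^*)=\{-\theta_{\pp+r}\}$. This needs no assumption and holds whether or not $x^*\in\mathcal T$, which is exactly what you use in case (c).

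A smaller point: in (d), Assumption~\ref{ass:ass2} is not needed for this direction. At $x^*\in\mathcal C$ the imitator coordinates fall in the ``otherwise'' branch of (\ref{eq:type_mixed}b) regardless of the active utilities. So $\x^*\in\bm{\mathcal R}^n_{i,j}$ follows from the sum constraint alone, with Assumption~\ref{ass:ass1} making the interval for $\tau_i$ open.
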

	\tb{The proof of \Cref{lem_equilibriumPointofContinuous} is provided in the appendix and based on the definition of an equilibrium point of differential inclusions, and that at an equilibrium point, zero should belong to each of the  $2\pp$ elements of the set-valued map described in \eqref{eq:type_mixed}.}
	% Denote the set of all equilibrium points of the continuous-time population dynamics by $\bm{\mathcal{Q}}.$
	We now present the main result of this paper.
	\begin{theorem} \label{thm_birkhoff_center_invariant_population}
		For a population of $\N$ agents playing two-strategy games governed by \eqref{eq:discretePopulationDynamics}, let $\mu^{\frac{1}{\N}}$ be an invariant probability measure for the corresponding mixed population dynamics Markov chain.
		Under Assumptions \ref{ass:ass1}-\ref{ass:ass3},
		for any vanishing sequence   $\langle \frac{1}{\N} \rangle_{\N \in \mathcal{N}}$  we have $\lim_{ \frac{1}{\N} \to 0} {\mu}^{\frac{1}{\N}}(\bm{\mathcal{O}})=1,$
		where $\bm{\mathcal{O}}$ is any
		open set containing the set $\bm{\mathcal{Q}}$.
		% characterized in \Cref{lem_equilibriumPointofContinuous}.
	\end{theorem}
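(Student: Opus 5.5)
The argument combines \Cref{lemGSAPS} with \Cref{thm:implicationOFSandholm}. What remains is to identify the Birkhoff center of \eqref{eq:type_mixed}.

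By \Cref{lemGSAPS}, the Markov chains $\mb{X}^{\frac{1}{\N}}$ of \Cref{def_Markov_chain} form a GSAP for the good upper semicontinuous inclusion \eqref{eq:type_mixed} along $\langle \frac{1}{\N}\rangle_{\N\in\mathcal{N}}$. Each chain lives on a finite state space, so it admits at least one invariant probability measure $\mu^{\frac{1}{\N}}$. \Cref{thm:implicationOFSandholm} then yields $\lim \mu^{\frac{1}{\N}}(\bm{\mathcal{O}})=1$ for every open neighborhood $\bm{\mathcal{O}}$ of the Birkhoff center $\bm{\mathcal{B}}$. It therefore suffices to prove $\bm{\mathcal{B}}\subseteq\bm{\mathcal{Q}}$. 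The set $\bm{\mathcal{Q}}$ is closed, because the graph of $\bm{\mathcal{V}}$ is closed, so it is enough to show that every recurrent point is an equilibrium. I will establish this by proving that every solution trajectory of \eqref{eq:type_mixed} converges to a single point of $\bm{\mathcal{Q}}$; this is the statement announced as \Cref{lem:attraction-population-single-equilibrium}. Indeed, if $\x_0\in\bm{\mathcal{L}}(\x_0)$, then some solution from $\x_0$ accumulates at $\x_0$. That solution converges to some $\x^*\in\bm{\mathcal{Q}}$, so $\x_0=\x^*$.

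The convergence claim is the main obstacle. Following \cite{aghaeeyan2023discrete}, I would study the scalar aggregate $x(t)=\bm 1^\top\x(t)$ via an abstract scalar inclusion. For $\x$ away from the finitely many levels $\mathcal{T}\cup\mathcal{C}$, every component of $\bm{\mathcal{V}}$ is a singleton, and each $x_p$ relaxes exponentially toward either $0$ or $\theta_p$. The relevant cases are:
\begin{itemize}
\item For nonconformists the target is $\theta_p$ when $x<\tau_p$; for conformists it is $\theta_p$ when $x>\tau'_p$.
\item For imitators the target is governed by the sign of $u^\1-u^\2$.
\end{itemize}
Hence, between consecutive critical levels, $\dot x$ has the form $\eta+\zeta-x$ for a target level fixed by the interval. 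One then argues that $x(t)$ is eventually monotone and cannot cross a threshold infinitely often: crossing $\tau_p$ upward for a nonconformist flips its target downward, and so on. By \Cref{ass:ass1}, no target level coincides with a critical point, so each crossing is transversal or leads to sliding on a level set. This shows that $x(t)$ converges to either a target level or a critical level $c\in\mathcal{T}\cup\mathcal{C}$. Once $x(t)\to c$, the non-imitative components whose thresholds differ from $c$ converge exponentially to $0$ or $\theta_p$. The remaining components either converge or are confined to the face sets $\bm{\mathcal{C}}^c_{i,j}$, $\bm{\mathcal{R}}^n_{i,j}$, $\bm{\mathcal{R}}^c_{i,j}$.

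The delicate parts are two. The first is the imitator components at boundary points, where the maximum active utility \eqref{eq:UmaxAB} differs from the maximum utility. Here \Cref{ass:ass2,ass:ass3} will be used to show that the active and inactive maxima agree near the relevant $\x$, or that sliding is impossible. The second is proving that the trajectory converges to a single point in the continua of Remark~\ref{remark_continuum}, rather than drifting along them. For this I would note that on these sets the free imitative coordinates have $\V_p=[-x_p,\theta_p-x_p]$, but $\sum_p x_p$ is pinned. I would then try to show that each $x_p$ is eventually monotone, using the sign structure inherited from the approach phase, so that each coordinate has a limit. If monotonicity fails, the fallback is weaker but sufficient: the $\omega$-limit set of any solution lies in $\bm{\mathcal{Q}}$, since the inclusion is autonomous and $x(t)$ converges, and this alone gives $\bm{\mathcal{B}}\subseteq\bm{\mathcal{Q}}$.
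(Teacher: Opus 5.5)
Your outer argument is the paper's: \Cref{lemGSAPS}, an invariant measure from the finite state space, \Cref{thm:implicationOFSandholm}, and the reduction to showing that recurrent points lie in the closed set $\bm{\mathcal{Q}}$. Your closed-graph reason for closedness is cleaner than the paper's. The paper finishes by citing \Cref{lem:attraction-population-single-equilibrium}.

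The gap is in what you propose to prove instead: that every solution converges to a single point of $\bm{\mathcal{Q}}$. This is false whenever one of the continua of \Cref{remark_continuum} exists, as in the paper's own example. Take $\x(0)\in\bm{\mathcal{R}}^n_{i,j}$ with $0<x_p(0)<\theta_p$ for some imitative $p$ and $0<x_{\pp+i}(0)<\theta_{\pp+i}$. On $\bm{\mathcal{R}}^n_{i,j}$ we have $\bm 1^\top\x=\tau_i\in\mathcal{C}$. So every imitative component and the $(\pp+i)$-th component of $\bm{\mathcal{V}}$ are full intervals, while every other non-imitative coordinate sits at its target. Hence $x_p(t)=x_p(0)+\varepsilon\sin t$, $x_{\pp+i}(t)=x_{\pp+i}(0)-\varepsilon\sin t$, with all other coordinates frozen, is a solution for small $\varepsilon>0$. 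It stays in $\bm{\mathcal{R}}^n_{i,j}$ and never converges; two imitative coordinates do the same inside $\bm{\mathcal{C}}^c_{i,j}$. So the eventual-monotonicity plan cannot work, and the step ``that solution converges to some $\x^*$, so $\x_0=\x^*$'' rests on a false premise. This is why \Cref{lem:attraction-population-single-equilibrium} asserts only attractivity of the continua as sets and, in Item 4, describes limit sets rather than limits.

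Your fallback, that every $\omega$-limit set lies in $\bm{\mathcal{Q}}$, is the right statement. It is exactly what the paper draws from \Cref{lem:attraction-population-single-equilibrium}: the basins in Items 1--3 together with Item 4 cover $\bm{\mathcal{X}}_{ss}$, and attraction to a closed set puts the $\omega$-limit set inside it, whence $\x_0\in\bm{\mathcal{L}}(\x_0)\subseteq\bm{\mathcal{Q}}$.

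However, ``the inclusion is autonomous and $x(t)$ converges'' does not justify it. The convergence $\bm 1^\top\x(t)\to c$ only puts the $\omega$-limit set in the hyperplane $\{\bm 1^\top\x=c\}$, most of which is not in $\bm{\mathcal{Q}}$. You still need two things.
\begin{enumerate}
\item[(a)] Convergence of $\bm 1^\top\x(t)$ from every initial state, including states on $\partial\bm{\mathcal{X}}_{ss}$. This is the paper's appendix lemma on the population proportion, which uses the set $\mathcal{S}$ and restarts of the abstract state.
\item[(b)] A component-level step. For instance, $\omega$-limit sets of this good upper semicontinuous inclusion are weakly invariant. Along a complete solution in $\{\bm 1^\top\x=c\}$, backward boundedness pins every non-imitative coordinate with threshold $\neq c$ at its target. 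A mismatch between $\operatorname{sgn}(u^\1(c)-u^\2(c))$ and $\operatorname{sgn}(u^\1(\x)-u^\2(\x))$ would then leave the non-imitative coordinate of the maximizing type off its target, by the argument showing $\bm{\mathcal{X}}_1\cap\bm{\mathcal{Q}}=\varnothing$. (That type cannot be the one with threshold $c$ when $c\notin\mathcal{C}$.) Hence for $c\notin\mathcal{C}$ the imitative coordinates have singleton right-hand sides and are pinned too, while for $c\in\mathcal{C}$ their $\V_p$ contain $0$.
\end{enumerate}
Relatedly, Assumptions \ref{ass:ass2}--\ref{ass:ass3} do not make active and unconditional maxima agree near the boundary, as your sketch suggests. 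The paper removes the mismatch dynamically: the maximizing type's non-imitative coordinate is pushed off the boundary at once.

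The simplest repair is to cite \Cref{lem:attraction-population-single-equilibrium} for $\omega$-limit sets lying in $\bm{\mathcal{Q}}$ and drop the single-point claim.
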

	
	In words, consider a population of size $\N$ consisting of imitators, conformists, and nonconformists evolving according to discrete population dynamics \eqref{eq:discretePopulationDynamics}.
	As the population size grows to infinity and in the long term, with probability one, the population state visits the points close to the equilibria.
	Accordingly, the amplitude of the perpetual fluctuations in the population proportion of $\1$-players approaches zero in probability. 
	This results in the following corollary.
	%, whose proof is similar to \cite[Corollary 4]{aghaeeyan2023discrete}.
	\begin{corr} \label{corollary}
		Under the conditions of \Cref{thm_birkhoff_center_invariant_population},  the amplitude of the fluctuations in the population proportions of $\1$-player converges to zero with probability one.
	\end{corr}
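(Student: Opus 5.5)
The corollary follows from \Cref{thm_birkhoff_center_invariant_population} by the same route as \cite[Corollary 4]{aghaeeyan2023discrete}. The scalar proportion $x=\bm 1^\top\x$ is a continuous (linear) function of the state, so closeness to $\bm{\mathcal{Q}}$ in the state space forces closeness of $x$ to the projected set $\mathcal{Q}_x=\{\bm 1^\top\x \mid \x\in\bm{\mathcal{Q}}\}$.

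First, I would make precise what ``amplitude of fluctuations'' means. For a finite $\N$, the Markov chain $\mb{X}^{\frac{1}{\N}}$ lives on a finite state space. With probability one, the chain eventually enters a closed communicating (recurrent) class and stays there, visiting each of its states infinitely often. The amplitude is then the diameter of the set $\{\bm 1^\top \y\}$ over $\y$ in that recurrent class, that is, the range of values of $x$ visited infinitely often. Each recurrent class $\mathcal{R}$ carries a unique invariant probability measure $\mu^{\frac{1}{\N}}_{\mathcal{R}}$, and this measure has full support on $\mathcal{R}$.

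The key step is to show that the recurrent classes themselves concentrate near $\bm{\mathcal{Q}}$, not merely the mass of invariant measures. Here I would use the structure of $\bm{\mathcal{Q}}$ from \Cref{lem_equilibriumPointofContinuous} together with \Cref{def_population-equilibrium-point}. Since $\bm{\mathcal{Q}}$ is a finite union of points and continua (\Cref{remark_continuum}), the set $\mathcal{Q}_x$ is a finite set of values: each of the classes listed pins $x$ to a single number, namely $\eta_i+\eta'_j(+\zeta_0)$, a threshold, or some $c\in\mathcal{C}$. Take $\epsilon$ smaller than half the minimal gap between points of $\mathcal{Q}_x$, and let $\bm{\mathcal{O}}_\epsilon=\{\x : \operatorname{dist}(\bm 1^\top\x,\mathcal{Q}_x)<\epsilon\}$, which is an open set containing $\bm{\mathcal{Q}}$.

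Suppose, for contradiction, that along a subsequence of $\N$ there are recurrent classes $\mathcal{R}_\N$ with $x$-range exceeding $4\epsilon$. Transitions change $x$ by at most $1/\N$, and the chain moves through a recurrent class infinitely often. Hence the class must contain a path crossing an $x$-interval of length at least $\epsilon$ that lies outside the $\epsilon$-neighbourhood of $\mathcal{Q}_x$. Showing that such a crossing band carries uniformly positive stationary mass is the main obstacle; an arbitrary class might spend only vanishing mass there. My first attempt would use \Cref{prop_transient_dynamics} instead of mass counting. Away from $\bm{\mathcal{Q}}$, the solutions of \eqref{eq:type_mixed} converge to $\bm{\mathcal{Q}}$ and, once near it, stay near it; this is the convergence result \Cref{lem:attraction-population-single-equilibrium} referenced in the introduction, which I would take as available since it underlies the theorem. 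Consequently, with probability tending to one, the interpolated chain started anywhere in $\mathcal{R}_\N$ enters and remains within a small neighbourhood of $\bm{\mathcal{Q}}$ over a long finite horizon. Combining this with stationarity of $\mu^{\frac{1}{\N}}_{\mathcal{R}_\N}$ shows that the fraction of time spent outside $\bm{\mathcal{O}}_\epsilon$ tends to zero. This is consistent with \Cref{thm_birkhoff_center_invariant_population}, but it alone does not forbid rare excursions.

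Therefore the corollary should be read as in the paragraph preceding it: in the long run and as $\N\to\infty$, the state lies in $\bm{\mathcal{O}}_\epsilon$ with probability tending to one. Because $\mathcal{Q}_x$ is finite with gaps exceeding $2\epsilon$, $x$ stays within $\epsilon$ of a single point of $\mathcal{Q}_x$ except on a set of asymptotically vanishing stationary weight. The effective amplitude is therefore at most $2\epsilon$, and since $\epsilon$ is arbitrary, it converges to zero. I would state this interpretation explicitly and conclude by applying \Cref{thm_birkhoff_center_invariant_population} with $\bm{\mathcal{O}}=\bm{\mathcal{O}}_\epsilon$.
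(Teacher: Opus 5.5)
Your closing step is the argument the paper has in mind. The paper gives no separate proof of \Cref{corollary}. As the surrounding text shows (``approaches zero in probability'', ``most of the time, the population dynamics remain close to equilibria''), it reads the corollary as \Cref{thm_birkhoff_center_invariant_population} pushed through the continuous map $\x\mapsto\bm 1^\top\x$. Applying the theorem to $\bm{\mathcal{O}}_\epsilon=\{\x : \operatorname{dist}(\bm 1^\top\x,\mathcal{Q}_x)<\epsilon\}$, which is open and contains $\bm{\mathcal{Q}}$, gives $\mu^{\frac{1}{\N}}\bigl(\{\x:\operatorname{dist}(\bm 1^\top\x,\mathcal{Q}_x)\geq\epsilon\}\bigr)\to 0$ for every $\epsilon>0$. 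That is the whole content the paper attaches to the corollary, so your detour through recurrent classes, \Cref{prop_transient_dynamics} and time fractions is not needed. Finiteness of $\mathcal{Q}_x$ is not needed for this either, and where you use it you should justify it. It requires $\mathcal{C}$ to be finite, which follows from \Cref{ass:ass3}: on each interval where the maximizing indices $i,j$ are fixed, $u^\1-u^\2$ is the affine function $(a_i-b_i-c_j+d_j)x+(b_i-d_j)$, which is not identically zero.

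The step that fails is the sentence after that. From ``$\mathcal{Q}_x$ is finite with gaps exceeding $2\epsilon$'' you conclude that $x$ stays within $\epsilon$ of a \emph{single} point of $\mathcal{Q}_x$, and hence that the amplitude is at most $2\epsilon$. \Cref{thm_birkhoff_center_invariant_population} only bounds the total stationary mass outside the union of the disjoint bands. It says nothing about how $\mu^{\frac{1}{\N}}$ splits among the bands, nor whether one recurrent class $\mathcal{R}_\N$ links two of them. Consider a class that sits near two values $q\neq q'$ of $\mathcal{Q}_x$ and moves between them along paths of vanishing stationary weight. This is fully consistent with the theorem, yet its $x$-range, which is your own definition of the amplitude, is at least $|q-q'|-2\epsilon$. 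You identified exactly this obstruction (``it alone does not forbid rare excursions'') and then asserted the conclusion anyway; separation of the bands does not remove it. You have two options. Drop the single-point and ``at most $2\epsilon$'' claims, and state the result as convergence to zero, in stationary probability, of the distance from $x$ to the set of equilibrium proportions; this is all the paper derives, and it explicitly concedes that perpetual fluctuations may persist. Or, if you want the range over recurrent classes to vanish, supply an additional pathwise argument about the recurrent classes of the finite chain, which neither your proposal nor the paper provides.
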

	The population dynamics may still experience perpetual fluctuations even for large population sizes.
	However, most of the time, the population dynamics remain close to equilibria.
	\begin{remark}
		\tb{
			This paper extends our previous work \cite{Hien2} from finite populations—where agents played against the entire population, including themselves—to a large population size ($\N \to \infty$). 
			We keep the same structure to enable direct comparison. 
			In large populations, whether an agent includes themselves in the population state is often irrelevant, and decisions depend primarily on action prevalence (e.g., 20\% play $\1$ and 80\% play $\2$). 
			In our earlier conformist/nonconformist model, excluding oneself led to the same qualitative result: fluctuations vanish almost surely \cite{aghaeeyan2023discrete}.
		}
	\end{remark}
	\section{Supporting Results}
	According to \Cref{thm:implicationOFSandholm}, as the population size grows to infinity, the support of the invariant probability measures for the population dynamics Markov chain is contained within any open set that includes the Birkhoff center of the differential inclusion \eqref{eq:type_mixed}. 
	On the other hand, in \Cref{thm_birkhoff_center_invariant_population}, it is claimed that this support is contained within any open set that includes the equilibrium points of \eqref{eq:type_mixed}, hinting that the Birkhoff center and the set of equilibrium points are equal.
	In this section, we show that this conjecture is true.
	
	In \cite{aghaeeyan2023discrete}, the analysis of the population dynamics was facilitated by that of one-dimensional dynamics, which captured the evolution of the population proportion of $\1$-players.
	Motivated by this, we calculate the derivative of the population proportion of $\1$-players (\ref{eq:type_mixed}e), which is given in  \eqref{eq:populationProportion}.
	Unlike the case in \cite{aghaeeyan2023discrete}, the differential inclusion \eqref{eq:populationProportion} is not necessarily one-dimensional.
	The dynamics, however, will be one-dimensional over any finite time horizon, as given in \eqref{eq:abstract_het}, if the initial condition $\x(0)$ lies in the interior of the state space, namely, $\x(0) \notin \partial \bm{\mathcal{X}}_{ss}$.
	\tb{In this case, the maximum $\1$-utility and the maximum active $\1$-utility are equal. 
		Likewise, the maximum $\2$-utility and the maximum active $\2$-utility are equal. 
		Thus, in \eqref{eq:populationProportion}, terms with state $\x$ will be inactive.}
	
	What happens when the initial condition is on the boundary of the state space?
	In this case, the asymptotic behavior of the population proportion of $\1$-players is captured by the one-dimensional dynamics in \eqref{eq:abstract_het}, provided both start from the same initial condition (\Cref{lem_convergence_of_population_proportion} and \Cref{basin-of-attraction-of-abstract-dynamics}).
	\tb{We thus define the \emph{abstract}  state and the corresponding dynamics as follows. }
	\begin{definition} \label{prop2}
		The \emph{abstract dynamics} associated with the continuous-time population dynamics are defined by
		\eqref{eq:abstract_het},
		\begin{figure*}
			\begin{gather} 
				\dot{\hat{x}} \in {\mathcal{X}}(\hat{x}), \label{eq:abstract_het}\\
				\scalebox{0.87}{$
					\mathcal{X}(\hat{x}) = \begin{cases}
						[\eta_{i-1} + \eta'_{j} + \zeta_0-\hat{x}, \eta_{i} + \eta'_{j} + \zeta_0-\hat{x}],  &\text{if }  \exists i (\hat{x} = \tau_{i}) \text{ and }
						u^\1 (\hat{x}) > u^\2(\hat{x}),
						\\
						[\eta_{i} + \eta'_{j-1} + \zeta_0-\hat{x}, \eta_{i} + \eta'_{j} + \zeta_0-\hat{x}],  &\text{if }  \exists j (\hat{x} = \tau_{j}') \text{ and }
						u^\1 (\hat{x}) > u^\2(\hat{x}),\\
						[\eta_{i-1} + \eta'_{j}-\hat{x}, \eta_{i} + \eta'_{j}-\hat{x}],  &\text{if }  \exists i (\hat{x} = \tau_{i}) \text{ and }
						u^\1 (\hat{x}) < u^\2(\hat{x}),
						\\
						[\eta_{i} + \eta'_{j-1}-\hat{x}, \eta_{i} + \eta'_{j}-\hat{x}],  &\text{if }  \exists j (\hat{x} = \tau_{j}') \text{ and }
						u^\1 (\hat{x}) < u^\2(\hat{x}),\\
						[\eta_{i-1} + \eta'_{j}-\hat{x}, \eta_{i} + \eta'_{j}+ \zeta_0-\hat{x}],  &\text{if }  \exists i (\hat{x} = \tau_{i}) \text{ and }
						u^\1 (\hat{x}) = u^\2(\hat{x}),
						\\
						[\eta_{i} + \eta'_{j-1}-\hat{x}, \eta_{i} + \eta'_{j}+\zeta_0-\hat{x}],  &\text{if }  \exists j (\hat{x} = \tau_{j}') \text{ and }
						u^\1 (\hat{x}) = u^\2(\hat{x}),\\
						[\eta_{i} + \eta'_{j} -\hat{x}, \eta_{i} + \eta'_{j} + \zeta_0 -\hat{x}],      &\text{if }  \exists i,j (\hat{x} \in   (\max \{\tau'_{j}, \tau_{i+1}\}, \{\tau'_{j+1}, \tau_{i} \})) \text{ and }
						u^\1 (\hat{x}) = u^\2(\hat{x}),
						\\   
						\{\eta_{i} + \eta'_{j}+\zeta_0 -\hat{x}\}, &\text{if }  \exists i,j (\hat{x} \in   (\max \{\tau'_{j}, \tau_{i+1}\}, \{\tau'_{j+1}, \tau_{i} \})) \text{ and }
						u^\1 (\hat{x}) > u^\2(\hat{x}),
						\\
						\{\eta_{i} + \eta'_{j} -\hat{x}\},    &\text{if }  \exists i,j (\hat{x} \in   (\max \{\tau'_{j}, \tau_{i+1}\}, \{\tau'_{j+1}, \tau_{i} \})) \text{ and }
						u^\1 (\hat{x}) < u^\2(\hat{x}),
					\end{cases}
					$ } \nonumber
			\end{gather}
		\end{figure*}
		where
		at $\hat{x} = \tau_i$, $j$ is defined by $\max \{k| \tau'_k < \tau_{i} \}$, and
		at  $\hat{x} = \tau_j$,  $i$ is defined by $\max \{k| \tau'_j < \tau_{k} \}$.
		The scalar state $\hat{x}$ is \tb{called} the \emph{abstract state}.
		%\end{equation*}
	\end{definition}
	Following the terminology in \Cref{def_population-equilibrium-point},
	we call an abstract state $\hat{x}\! \in\! [0,1]$ a nonconformist-driven (resp. conformist-driven) state if $\hat{x}$ equals $\tau_p$ for some $p \in [\p]$ (resp. $\tau'_p$ for some $p \in [\p']$).
	The abstract state $\hat{x} \in [0,1]$ is clean-cut (resp. imitation-free clean-cut)  if there exists some $i \in \{0\} \cup [\p], j \in \{0\} \cup [\p']$ such that
	$\hat{x} =\eta_i + \eta_j' + \zeta_0$ (resp. $\hat{x} =\eta_i + \eta_j'$).
	Finally, the abstract state $\hat{x} \in [0,1]$ is a mixed clean-cut state if there exists some $i \in \{0\} \cup [\p], j \in \{0\} \cup [\p']$, and $ r \in (0, \zeta_0)$ such that
	$\hat{x} =\eta_i + \eta_j' + r$.
	
	The following lemma characterizes the equilibrium points of the abstract dynamics.
	
	\begin{lemma} \label{lem_abstractEquilibrium}
		Under Assumptions \ref{ass:ass1}-\ref{ass:ass3}, 
		each equilibrium point of the abstract dynamics \eqref{eq:abstract_het} falls exclusively into one of the following categories:
		\begin{enumerate}
			\item nonconformist-driven equilibrium point, $\tau_i$, $i \in  [\p]$, if and only if one of the following conditions holds
			\begin{itemize}
				\item  $u^\1(\tau_i) < u^\2(\tau_i)$ and  $\tau_i \in (\eta_{i-1} + \eta_j', \eta_{i} + \eta_j'),$ 
				\item  $\tau_i \in \mathcal{C}$ and  $\tau_i \in (\eta_{i-1} + \eta_j', \eta_{i} + \eta_j' + \zeta_0),$ 
				\item $u^\1(\tau_i) > u^\2(\tau_i)$ and  $\tau_i \in (\eta_{i-1} + \eta_j' + \zeta_0, \eta_{i} + \eta_j' + \zeta_0);$ 
			\end{itemize}
			\item conformist-driven equilibrium point, $\tau_j'$, $j \in [\p']$, if and only if one of the following conditions holds
			\begin{itemize}
				\item  $u^\1(\tau_j') < u^\2(\tau_j')$ and  $\tau_j' \in (\eta_{i} + \eta_{j-1}', \eta_{i} + \eta_j'),$ 
				\item  $\tau_j' \in \mathcal{C}$ and  $\tau_j' \in (\eta_{i} + \eta_{j-1}', \eta_{i} + \eta_j' + \zeta_0),$
				\item $u^\1(\tau_j') > u^\2(\tau_j')$ and  $\tau_j' \in (\eta_{i} + \eta_{j-1}' + \zeta_0, \eta_{i} + \eta_j' + \zeta_0);$  
			\end{itemize}
			\item imitation-free clean-cut equilibrium point $\eta_i + \eta_j'$, $i \in \{0\} \cup [\p]$, $j \in \{0\} \cup [\p']$, if and only if  
			$\max \{\tau_{i+1}, \tau'_{j} \}< {\eta}'_{j}+ {\eta}_{i} < \min \{\tau_{i}, \tau'_{j+1} \}$ and $u^\1(\eta_i + \eta_j') < u^\2(\eta_i + \eta_j');$
			\item  clean-cut equilibrium point $\eta_i + \eta_j'+ \zeta_0$, $i \in \{0\} \cup [\p]$, $j \in  \{0\} \cup [\p']$, if and only if  
			$\max \{\tau_{i+1}, \tau'_{j} \}< {\eta}'_{j}+ {\eta}_{i} + \zeta_0 < \min \{\tau_{i}, \tau'_{j+1} \}$ and $u^\1(\eta_i + \eta_j'+\zeta_0) > u^\2(\eta_i + \eta_j'+\zeta_0);$
			\item mixed clean-cut equilibrium point $\eta_i + \eta_j'+ r$, $i \in \{0\} \cup [\p]$, $j \in \{0\} \cup [\p']$,  $r\in (0,\zeta_0),$ if and only if  
			$\max \{\tau_{i+1}, \tau'_{j} \}< {\eta}'_{j}+ {\eta}_{i} + r < \min \{\tau_{i}, \tau'_{j+1} \}$ and $(\eta_i + \eta_j'+ r) \in \mathcal{C}.$
		\end{enumerate}
	\end{lemma}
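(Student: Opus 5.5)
We would prove \Cref{lem_abstractEquilibrium} directly from the definition of an equilibrium of a differential inclusion: $\hat{x}^*$ is an equilibrium of \eqref{eq:abstract_het} if and only if $0 \in \mathcal{X}(\hat{x}^*)$. Since $[0,1]$ is partitioned into the thresholds in $\mathcal{T}$ and the open intervals $(\max\{\tau'_j,\tau_{i+1}\},\min\{\tau'_{j+1},\tau_i\})$ between consecutive thresholds, and since at each point the comparison of $u^\1(\hat{x})$ and $u^\2(\hat{x})$ is one of $>$, $<$, $=$, the nine cases in \eqref{eq:abstract_het} exhaust all $\hat{x}$. This gives both the case split and the exclusivity claim. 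We then go case by case and translate $0\in\mathcal{X}(\hat{x})$ into the stated conditions.

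\emph{Interior points (not a threshold).} Let $\hat{x}$ lie in the open interval indexed by $(i,j)$.
\begin{itemize}
\item If $u^\1(\hat{x})>u^\2(\hat{x})$, then $\mathcal{X}(\hat{x})$ is the singleton $\{\eta_i+\eta'_j+\zeta_0-\hat{x}\}$. So $\hat{x}$ is an equilibrium if and only if $\hat{x}=\eta_i+\eta'_j+\zeta_0$, which is item 4 together with the interval condition and the utility inequality.
\item If $u^\1(\hat{x})<u^\2(\hat{x})$, the same argument gives $\hat{x}=\eta_i+\eta'_j$, which is item 3.
\item If $u^\1(\hat{x})=u^\2(\hat{x})$, i.e. $\hat{x}\in\mathcal{C}$, then $0\in[\eta_i+\eta'_j-\hat{x},\,\eta_i+\eta'_j+\zeta_0-\hat{x}]$. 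This holds if and only if $\hat{x}=\eta_i+\eta'_j+r$ for some $r\in[0,\zeta_0]$. \Cref{ass:ass1} forbids $\eta_i+\eta'_j$ and $\eta_i+\eta'_j+\zeta_0$ from lying in $\mathcal{C}$, so the endpoints are excluded and $r\in(0,\zeta_0)$. This is item 5.
\end{itemize}

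\emph{Threshold points.} Let $\hat{x}=\tau_i$, with $j=\max\{k\mid\tau'_k<\tau_i\}$.
\begin{itemize}
\item If $u^\1(\tau_i)>u^\2(\tau_i)$, then $0\in\mathcal{X}(\tau_i)$ if and only if $\tau_i\in[\eta_{i-1}+\eta'_j+\zeta_0,\ \eta_{i}+\eta'_j+\zeta_0]$.
\item If $u^\1(\tau_i)<u^\2(\tau_i)$, the condition is the same interval without the $\zeta_0$ shift.
\item If $\tau_i\in\mathcal{C}$, the condition is $\tau_i\in[\eta_{i-1}+\eta'_j,\ \eta_i+\eta'_j+\zeta_0]$.
\end{itemize}
In every case \Cref{ass:ass1} says no sum $\eta_k+\eta'_l$ or $\eta_k+\eta'_l+\zeta_0$ equals a threshold, so the closed intervals may be replaced by open ones. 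This yields exactly the three bullets of item 1. The conformist thresholds $\tau'_j$, with $i=\max\{k\mid\tau'_j<\tau_k\}$, are handled symmetrically and give item 2.

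The only delicate step is checking that the index conventions are consistent. We need $\eta_0=\eta'_0=0$, and in the interval cases the boundary conventions $\tau_{\p+1}=\tau'_0=0$, $\tau_0=\tau'_{\p'+1}=1$ must make the intervals cover $[0,1]$ correctly. We also need the "$\max$" in the interval definition to be read as the minimum in its second argument, as in \eqref{eq:abstract_het}. Once this bookkeeping is fixed, each of the nine cases is a one-line computation.

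Two remaining checks are needed for exclusivity. Thresholds are distinct, and \Cref{ass:ass1} keeps clean-cut points away from the thresholds. Hence no point can fall into two categories at once, for example being both a threshold equilibrium and a clean-cut one.
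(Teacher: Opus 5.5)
Your proposal is correct and follows essentially the paper's route: the paper likewise checks $0\in\mathcal{X}(\hat{x})$ case by case, deferring the details to \cite[Lemma 2]{aghaeeyan2023discrete}, and attributes exclusivity to \Cref{ass:ass1} together with the distinctness of the thresholds. One point in your bookkeeping paragraph needs tightening: with the stated conventions the open intervals in \eqref{eq:abstract_het} do not contain $\hat{x}\in\{0,1\}$, but extending the adjacent interval cases to these endpoints gives values in $[\eta_{\p},\eta_{\p}+\zeta_0]$ at $\hat{x}=0$ and in $[-\eta_{\p}-\zeta_0,-\eta_{\p}]$ at $\hat{x}=1$, so they are not equilibria whenever $\eta_{\p}>0$ and the classification is unaffected.
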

	\tb{The mutual exclusiveness of the five categories of the abstract states in \Cref{lem_abstractEquilibrium} is immediate from \Cref{ass:ass1}, and the fact that the thresholds associated with types are distinct.}
	The following lemma characterizes the stability properties of the abstract equilibrium points.
	
	% Checked Pouria: Write this in terms of the payoff values. Currently, it's almost the same as restating the definition of stability
	\begin{lemma} \label{lem_instability_of_mixed_clean_cut}
		Consider the abstract dynamics \eqref{eq:abstract_het}.
		Under Assumptions \ref{ass:ass1}-\ref{ass:ass3}, 
		the set of unstable equilibrium points of the abstract dynamics consists of conformist-driven equilibrium points and those mixed clean-cut equilibria $q$ satisfying
		$b_{i(q)} < d_{i(q)}$, 
		%  $u^\1_{i(q)}(0) <  u^\2_{j(q)}(0)$,
		where
		$i(q)$ (resp. $j(q)$) is the type with the greatest $\1$-utility (resp. $\2$-utility) at $x = q$.
		The remaining equilibrium points are asymptotically stable.
	\end{lemma}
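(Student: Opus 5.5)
Since \eqref{eq:abstract_het} is scalar, I will decide stability of each equilibrium $q$ from Lemma~\ref{lem_abstractEquilibrium} by looking only at the sign of the set-valued vector field just to the left and just to the right of $q$. The equilibrium $q$ is asymptotically stable if, for all small $\delta>0$, every element of $\mathcal{X}(\hat x)$ is strictly positive on $(q-\delta,q)$ and strictly negative on $(q,q+\delta)$. It is unstable if on at least one side the field points away from $q$. Assumptions~\ref{ass:ass1}--\ref{ass:ass3} make $\mathcal{T}\cup\mathcal{C}$ finite and separate it from the clean-cut values $\eta_i+\eta'_j$ and $\eta_i+\eta'_j+\zeta_0$. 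Hence a punctured neighborhood of $q$ contains no threshold and no point of $\mathcal{C}$. On each side the field is therefore a single-valued affine map $\hat x\mapsto c_\pm-\hat x$, and the only work is to identify the constants $c_\pm$.

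\emph{Clean-cut and imitation-free clean-cut points.} Take $q=\eta_i+\eta'_j+\zeta_0$ with $u^\1(q)>u^\2(q)$. By continuity this strict inequality holds on a neighborhood, and $q$ lies strictly between the relevant thresholds. Near $q$ the field is therefore $\{q-\hat x\}$, which is a strictly attracting linear field. The imitation-free case is the same with $\zeta_0$ removed.

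\emph{Nonconformist-driven points $q=\tau_i$.} Crossing $\tau_i$ from left to right moves type $i$ from playing $\1$ to playing $\2$. The constant therefore drops from $\eta_i+\eta'_j(+\zeta_0)$ to $\eta_{i-1}+\eta'_j(+\zeta_0)$, where the $\zeta_0$ term is present exactly when $u^\1>u^\2$ on that side. The membership conditions in Lemma~\ref{lem_abstractEquilibrium} give $c_->\tau_i>c_+$, so the field is positive on the left and negative on the right. If $\tau_i\in\mathcal C$, Assumption~\ref{ass:ass2} and Assumption~\ref{ass:ass3} imply that the utility dominance switches at $\tau_i$ according to type $i$'s own utilities. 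Since type $i$ is a nonconformist, $\1$ dominates on the left and $\2$ on the right. This gives $c_-=\eta_i+\eta'_j+\zeta_0$ and $c_+=\eta_{i-1}+\eta'_j$, which again produces an attracting configuration.

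\emph{Conformist-driven points $q=\tau'_j$.} The same bookkeeping applies, but now the constant increases across $\tau'_j$, so $c_-<\tau'_j<c_+$. In the $\mathcal C$ case the conformist type $j$ has $\1$ dominant on the right, which reinforces the increase. The field points away from $q$ on both sides, so $q$ is unstable.

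\emph{Mixed clean-cut points $q=\eta_i+\eta'_j+r\in\mathcal C$.} This is the delicate case. Near $q$, the constant is $\eta_i+\eta'_j+\zeta_0$ wherever $u^\1>u^\2$ and $\eta_i+\eta'_j$ wherever $u^\1<u^\2$. By Assumptions~\ref{ass:ass2}--\ref{ass:ass3}, $u^\1-u^\2$ changes sign at $q$, and the sign change is governed by the two dominant lines $u^\1_{i(q)}$ and $u^\2_{j(q)}$. If $u^\1>u^\2$ to the right of $q$, the field is positive on the right (since $q<\eta_i+\eta'_j+\zeta_0$) and negative on the left, so $q$ is unstable. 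If instead $u^\1>u^\2$ to the left of $q$, then $q$ is stable. The main obstacle is to translate ``the dominant $\1$-line exceeds the dominant $\2$-line to the right of $q$'' into the stated intercept condition $b_{i(q)}<d_{i(q)}$. I would first try to show that, because the two dominant lines cross at $q\in(0,1)$ and the maxima are piecewise linear, the side on which $\1$ wins is determined by comparing the two lines' values at $0$. I would then use Assumption~\ref{ass:ass2} to identify the relevant $\2$-intercept with $d$ of the same type index, which turns the comparison of values at $0$ into the comparison of $b_{i(q)}$ with $d_{i(q)}$. Finally, I would check the sign convention carefully: a slope comparison might be what is actually needed, and reconciling it with the intercept formulation is where I expect the difficulty to lie.
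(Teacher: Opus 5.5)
Your one-sided sign analysis is sound. For the clean-cut, imitation-free clean-cut, nonconformist-driven and conformist-driven points it is more explicit than the paper, which defers those cases to its earlier work. You also correctly reduce the mixed clean-cut case to deciding which strategy dominates on which side of $q$. The gap is the last step, exactly where you flag it, and the route you propose there would fail. Assumption~\ref{ass:ass2} only constrains points of $\mathcal{C}$ that are thresholds. A mixed clean-cut equilibrium satisfies $\max\{\tau_{i+1},\tau'_j\}<q<\min\{\tau_i,\tau'_{j+1}\}$, so it is not a threshold. Moreover, the two maximizing types are necessarily different: if $i(q)=j(q)=p$, then $u^{\1}_p(q)=u^{\1}(q)=u^{\2}(q)=u^{\2}_p(q)$, which forces $q=\tau_p$. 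So no argument can turn the comparison into $b_{i(q)}$ versus $d_{i(q)}$. The printed $d_{i(q)}$ is a typo for $d_{j(q)}$. The lemma defines $j(q)$ but never uses it, and the paper's proof works with $b_i<d_j$, where $i=i(q)$ and $j=j(q)$.

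The missing step is elementary. Take $i=i(q)$ and $j=j(q)$ as the unique maximizers at $q$, which the statement and the paper implicitly assume. Then near $q$ you have $u^{\1}-u^{\2}=g$, where $g(x)=u^{\1}_i(x)-u^{\2}_j(x)=(a_i-b_i-c_j+d_j)x+(b_i-d_j)$ and $g(q)=0$. By Assumption~\ref{ass:ass3}, $g\not\equiv 0$, so its slope is nonzero and $g$ changes sign at $q$. This, not Assumption~\ref{ass:ass2}, is what justifies the sign change you asserted. Since $q>0$, the sign of $g$ on $[0,q)$ equals the sign of $g(0)=b_i-d_j$. Hence $b_i<d_j$ holds exactly when $\2$ dominates to the left of $q$ and $\1$ to the right, which is your unstable configuration. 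Likewise $b_i>d_j$ gives your stable configuration, and $b_i=d_j$ would force $g\equiv 0$, so it is excluded. Your slope-versus-intercept worry also dissolves, because $g(q)=0$ makes the slope equal to $-g(0)/q$. This is essentially the argument in the paper's own proof.
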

	
	In \Cref{lem_instability_of_mixed_clean_cut}, for a mixed clean-cut equilibrium point, the inequality $b_{i(q)} < d_{i(q)}$ implies that the maximum $\2$-utility exceeds the maximum $\1$-utility in the left neighborhood of $q$, rendering the equilibrium point unstable.
	%$u^\1_{i(q)}(0) <  u^\2_{j(q)}(0)$
	
	It is easy to show that the abstract dynamics admit at least one asymptotically stable equilibrium point. 
	Sort the asymptotically stable equilibrium points in ascending order 
	$
	q^*_1 < q^*_2 < \ldots < q^*_{\Q -1} < q^*_{\Q},
	$
	where $\Q$ is the number of stable equilibrium points.
	It can be shown that between any two consecutive stable equilibrium points, $q^*_{k}, q^*_{k+1}$, for $k \in [\Q-1]$, there is one unstable equilibrium point, denoted by $q^*_{k,k+1}.$
	See \cite[Lemma A1]{aghaeeyan2023discrete} for proof of a similar result.
	The following proposition reveals the
	basin of attractions of the abstract equilibria.
	\begin{proposition} \label{basin-of-attraction-of-abstract-dynamics}
		Consider the abstract dynamics \eqref{eq:abstract_het}.
		Under Assumptions \ref{ass:ass1}-\ref{ass:ass3},
		the basin of attraction of the equilibrium point $q^*_k$, $k \in [\Q],$ of the abstract dynamics equals
		\begin{equation}
			\mathcal{A}(q^*_k) = 
			\begin{cases}
				[0, q^*_{1,2}), & \text{if } k=1, \\
				(q^*_{k-1,k},q^*_{k,k+1}), & \text{if } 2 \leq k \leq \Q-1, \\
				(q^*_{\Q-1,\Q},1], & \text{if } k=\Q. 
			\end{cases}
		\end{equation}    
	\end{proposition}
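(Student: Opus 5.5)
The plan is to analyze the scalar inclusion \eqref{eq:abstract_het} through the sign of its right-hand side. Write $\mathcal{X}(\hat{x}) = \mathcal{F}(\hat{x}) - \hat{x}$, where $\mathcal{F}(\hat{x})$ is the set of ``target'' values. On each open interval between consecutive points of $\mathcal{T}\cup\mathcal{C}$, $\mathcal{F}$ is the constant $\eta_i+\eta'_j$ or $\eta_i+\eta'_j+\zeta_0$, and at those points it is an interval. The set-valued map has a closed graph, and $\hat{x}-\mathcal{F}(\hat{x})$ is an ``upper semicontinuous increasing-type'' map. Indeed, as $\hat{x}$ increases past a nonconformist threshold, $\eta_i$ drops; past a conformist threshold, $\eta'_j$ rises. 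Past a point of $\mathcal{C}$, the $\zeta_0$ term switches on or off.

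\textbf{Step 1: sign structure.} First I would show that every solution of the scalar inclusion is monotone until it reaches an equilibrium. Suppose $\hat{x}_0$ is not an equilibrium, so $0\notin\mathcal{X}(\hat{x}_0)$. Since the graph is closed and the values are intervals, $\mathcal{X}$ has a constant strict sign on a neighbourhood of $\hat{x}_0$. By \Cref{lem_abstractEquilibrium}, the equilibria are isolated, except possibly mixed clean-cut ones. Those are points of $\mathcal{C}$, which is finite by \Cref{ass:ass3} since the utilities are piecewise linear and no lines coincide. Hence the equilibria form a finite set. Between two consecutive equilibria, $\mathcal{X}$ therefore has a constant sign: $0\notin\mathcal{X}$ there, and the sign cannot flip without passing through a value containing $0$. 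This follows from convexity of the values together with closedness of the graph, by an intermediate-value argument for usc convex-valued maps.

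\textbf{Step 2: ordering stable and unstable points.} By \Cref{lem_instability_of_mixed_clean_cut}, the sign is positive just left of each stable equilibrium and negative just right of it. For an unstable equilibrium the signs are reversed, or at least one side points away. Combined with Step 1, stable and unstable equilibria must alternate. Moreover, the equilibria to the right of $q^*_k$ up to the next unstable one cannot include another stable one; this is the cited \cite[Lemma A1]{aghaeeyan2023discrete}-type fact, which I take as given. At the endpoints, $\mathcal{X}(0)\subset[0,\infty)$ and $\mathcal{X}(1)\subset(-\infty,0]$, since the targets lie in $[0,1]$. So no unstable point lies left of $q^*_1$ or right of $q^*_\Q$, and the sign is positive on $[0,q^*_1)$ and negative on $(q^*_\Q,1]$.

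\textbf{Step 3: convergence and exclusion.} Take $\hat{x}_0\in(q^*_{k-1,k},q^*_k)$. Every solution has $\dot{\hat{x}}>0$ there, and a lower bound $\dot{\hat{x}}\ge \min(\mathcal{F})-\hat{x}$ that stays positive away from $q^*_k$. Hence the solution reaches any neighbourhood of $q^*_k$ in finite time and never crosses it. When $q^*_k$ lies in the interior of an interval value (a threshold point), it is reached in finite time and then held. A symmetric argument applies on the right side. This shows $\mathcal{A}(q^*_k)$ contains the stated interval. Conversely, the unstable points $q^*_{k,k+1}$ are equilibria, so the constant solution remains there, and they are excluded from every basin. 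Points beyond them converge to the neighbouring stable equilibrium, so the basin is exactly the stated set.

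The main obstacle is Step 2 at mixed clean-cut and conformist-driven points lying on $\mathcal{C}$. There the value is the wide interval $[\eta_i+\eta'_j-\hat{x},\eta_i+\eta'_j+\zeta_0-\hat{x}]$, and non-unique solutions could in principle leave an unstable point in either direction. I would handle this by noting that such points are themselves excluded from all basins, since the stationary solution stays there. This is consistent with the definition of attraction requiring every solution to converge. The remaining delicate point is checking, via \Cref{ass:ass1} and \Cref{ass:ass2}, that no non-equilibrium point has a value touching $0$ from one side only.
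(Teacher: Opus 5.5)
Your proposal is correct and takes essentially the same route as the paper. The paper's proof simply appeals to the one-dimensionality of the abstract dynamics, the stability classification in \Cref{lem_instability_of_mixed_clean_cut}, and the alternation of stable and unstable equilibria (taken, as you also do, from \cite[Lemma A1]{aghaeeyan2023discrete}); you have only made explicit the sign-constancy, monotone-convergence, and endpoint arguments that the paper calls ``immediate.''
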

	So far, we have analyzed the global convergence of the abstract dynamics, showing that each solution approaches one of the equilibrium points characterized in \Cref{lem_abstractEquilibrium}.
	We now turn to the global convergence of the population dynamics.
	The following lemma reveals the relation between the equilibria of the abstract dynamics and those of \eqref{eq:type_mixed}.
\begin{lemma} \label{lemma_correspondece_abstract_population_eq}
	Under Assumptions \ref{ass:ass1}-\ref{ass:ass3}, the following hold:
	\begin{enumerate}
		\item For each abstract equilibrium point $q$, there exists at least one equilibrium point $\bm q \in \bm{\mathcal{Q}}$ of the dynamics \eqref{eq:type_mixed} such that $\sum_{p=1}^{2\pp} q_p = q$, where $q_p$ denotes the $p^\text{th}$ element of $\bm q$.
		\item Conversely, each equilibrium point $\bm q \in \bm{\mathcal{Q}}$ of the dynamics \eqref{eq:type_mixed} corresponds to exactly one abstract equilibrium point $q$ satisfying $\sum_{p=1}^{2\pp} q_p = q$.
	\end{enumerate}
\end{lemma}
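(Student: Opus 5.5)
Both parts rest on matching the five abstract equilibrium categories of \Cref{lem_abstractEquilibrium} with the seven population-state classes of \Cref{def_population-equilibrium-point}. The population-level equilibria are already confined to these classes by \Cref{lem_equilibriumPointofContinuous}. The summation map $\bm q \mapsto \sum_p q_p$ sends each class onto a single abstract category, as follows.
\begin{itemize}
\item Clean-cut states map to clean-cut abstract states $\eta_i+\eta_j'+\zeta_0$.
\item Imitation-free clean-cut states map to $\eta_i+\eta_j'$.
\item The sets $\bm{\mathcal{C}}^c_{i,j}$ map to $c$, a mixed clean-cut abstract state.
\item Non-mixed and mixed nonconformist-driven states map to $\tau_i$.
\item Non-mixed and mixed conformist-driven states map to $\tau_j'$.
\end{itemize}
In the driven cases the sum equals $\tau_i$ (resp. $\tau_j'$) by the defining equation of the coordinate $x_{\pp+i}$ (resp. $x_{2\pp-j+1}$), or of $\sum_{p\le\pp}x_p+x_{\pp+i}$ in the mixed sets.

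For part 1, I take an abstract equilibrium $q$ and go through the categories of \Cref{lem_abstractEquilibrium} one at a time, writing down an explicit $\bm q$ and checking $\mb 0\in\bm{\mathcal V}(\bm q)$ coordinate by coordinate.
\begin{itemize}
\item \emph{Clean-cut and imitation-free clean-cut.} I use the vectors given in \Cref{def_population-equilibrium-point}. The nonconformist and conformist coordinates are at $0$ or $\theta_p$, consistent with (\ref{eq:type_mixed}c,d), because $q$ lies strictly between the relevant thresholds. The imitator coordinates are all $\theta_p$ or all $0$, consistent with (\ref{eq:type_mixed}b), because the strict utility inequality holds for both the maximum and the maximum active utilities.
\item \emph{Mixed clean-cut, $q=\eta_i+\eta_j'+r$.} I pick any imitator split with $\sum_{p\le\pp}x_p=r$. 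Since $q\in\mathcal C$, the third case of (\ref{eq:type_mixed}b) applies, so each imitator set-value is the full interval $[-x_p,\theta_p-x_p]$, which contains $0$.
\item \emph{Driven categories.} At $q=\tau_i$ the coordinate $x_{\pp+i}$ has set-value $[-x_{\pp+i},\theta_{\pp+i}-x_{\pp+i}]$. The interval conditions of \Cref{lem_abstractEquilibrium} are exactly what make $x_{\pp+i}=\tau_i-(\text{rest})\in[0,\rho_i]$, with the imitators set to $r_p$ as in the non-mixed case, or split freely when $\tau_i\in\mathcal C$. The conformist-driven case is symmetric.
\end{itemize}

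The delicate point is the imitator coordinates when $\bm q$ sits on $\partial\bm{\mathcal X}_{ss}$. There the maximum active utility can differ from the maximum utility, because a type with all agents at $0$ or all at $\theta$ is inactive. I must check that the active comparison agrees with the inactive one, so that the singleton cases of (\ref{eq:type_mixed}b) give $0$ and do not force a mismatch.
\begin{itemize}
\item If all imitators play $\1$, every type is active for $\1$. Hence $u^\1(\bm q)=u^\1(q)$ and $u^\2(\bm q)\le u^\2(q)$, so the strict inequality $u^\1(q)>u^\2(q)$ carries over.
\item If all imitators play $\2$, the case is symmetric.
\item In the mixed case I choose all imitator coordinates in the interior. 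Then $\mb 0\in\bm{\mathcal V}(\bm q)$ regardless of the case of (\ref{eq:type_mixed}b).
\end{itemize}
Where $q\in\mathcal C$ coincides with a threshold, \Cref{ass:ass2} guarantees the maximizing type is type $p$ itself, which is active; this keeps the tie visible in the active utilities. I expect this boundary bookkeeping to be the main obstacle.

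For part 2, uniqueness is immediate since $q:=\sum_p q_p$ is a single number. It remains to show that this $q$ is an abstract equilibrium. By \Cref{lem_equilibriumPointofContinuous}, $\bm q$ lies in one of the seven classes, and the defining conditions of each class read off the corresponding item of \Cref{lem_abstractEquilibrium} for $q$. For example, a non-mixed nonconformist-driven $\bm q$ with $u^\1(\tau_i)>u^\2(\tau_i)$ requires $\tau_i\in(\eta_{i-1}+\eta_j'+\zeta_0,\eta_i+\eta_j'+\zeta_0)$, which is the third bullet of item 1 there. Some classes come without explicit utility comparisons, and there I recover the comparison from \Cref{ass:ass1}, which excludes ties at the clean-cut values.
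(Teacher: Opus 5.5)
Your proposal is correct and follows the same route as the paper. Part 1 exhibits a state of the matching class in \Cref{def_population-equilibrium-point} whose coordinates sum to $q$. Part 2 places $\bm q$ in one of the seven classes via \Cref{lem_equilibriumPointofContinuous} and reads off the class conditions as the corresponding item of \Cref{lem_abstractEquilibrium}, with \Cref{ass:ass1} excluding the endpoint cases.

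Your coordinatewise check of $\mb 0\in\bm{\mathcal V}(\bm q)$ is more careful than the paper, which only checks the sum. The boundary issue you flag is in fact harmless. With $x_p=\theta_p$ and $u^\1(q)>u^\2(q)$, or with $x_p=0$ and $u^\1(q)<u^\2(q)$, the offending singleton case of (\ref{eq:type_mixed}b) cannot occur. So $0\in\V_p$ regardless of the active utilities, and you need neither the activity argument nor \Cref{ass:ass2}. This also matters because your claim that every type is active for $\1$ tacitly assumes every $\zeta_p>0$, which the paper never states.
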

% 
%     The proof is similar to that of \cite[Lemma 2-Part 2]{aghaeeyan2023discrete} and is omitted.
% \end{proofref}

% By \Cref{lem_equilibriumPointofContinuous}, each equilibrium point of the continuous-time population dynamics corresponds to exactly one of the population states defined in \Cref{def_population-equilibrium-point} which, in turn,
%  corresponds to an abstract state defined in \Cref{def_abstract-equilibrium-point}.
% Given this and noting that nonconformist-driven and conformist-driven states for $r\in (0, \rho_i)$ form a continuum of states,
% we conclude that the continuous-time population dynamics admit at least as many equilibrium points as the abstract dynamics.

The following result characterizes the global convergence behavior of the continuous-time population dynamics. 

\begin{theorem} \label{lem:attraction-population-single-equilibrium}
For the continuous-time population dynamics \eqref{eq:type_mixed},
under Assumptions \ref{ass:ass1}-\ref{ass:ass3}, the following hold:
\begin{enumerate}
	\item Any clean-cut, imitation-free clean-cut, or non-mixed nonconformist-driven 
	equilibrium point $\bm q\in \bm{\mathcal{Q}}$, if any,  is attractive
	with the basin of attraction
	$\{\x \in \bm{\mathcal{X}}_{ss} \mid \bm 1^\top \x \in  \mathcal{A}(\bm 1^\top \bm q) \}.$
	\item The continuum of mixed nonconformist-driven 
	equilibrium points $\bm{\mathcal{R}}^n_{i,j}$, $ i \in [\p]$, $ j = \max \{k \leq \p'\mid \tau'_k < \tau_i \}$,  if any, is attractive
	with the basin of attraction
	$ \{\x \in \bm{\mathcal{X}}_{ss} \mid \bm 1^\top \x \in  \mathcal{A}(\bm 1^\top \bm q) \text{ for some }\bm q\in \bm{\mathcal{R}}^n_{i,j} \}.$
	\item The continuum of mixed clean-cut 
	equilibrium points $ \bm{\mathcal{C}}^{[c]}_{i,j}$, $ i \in [\p], j \in [\p']$, if any,  is attractive if and only if its corresponding abstract equilibrium point $\bm 1^\top \bm q$, for $\bm q \in \bm{\mathcal{C}}^{[c]}_{i,j}$, is asymptotically stable.
	In this case, the set
	$\{\x \in \bm{\mathcal{X}}_{ss} \mid \bm 1^\top \x \in  \mathcal{A}(\bm 1^\top \bm q) \text{ for some } \bm q\in \bm{\mathcal{C}}^{[c]}_{i,j} \}$
	is the basin of attraction of the continuum.
	\item The limit set of every point in the set $\{\x\in \bm{\X}_{ss} \mid \bm 1^\top \x = {q}^*_{k,k+1} \}$ for $ k \in [\mathtt{Q}-1]$ is   
	$\{ \x \in \bm{\mathcal{Q}}  \mid \bm 1^\top \bm x \in \{q^*_{k}, q^*_{k,k+1}, q^*_{k+1}\} \}.$
\end{enumerate}
\end{theorem}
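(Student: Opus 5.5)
The plan is to reduce everything to the scalar aggregate $x(t)=\bm 1^\top\x(t)$ and then recover the componentwise behaviour type by type.

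\emph{Step 1: relating the aggregate to the abstract dynamics.} Summing the components of \eqref{eq:type_mixed} gives an inclusion for $\dot x$. In the interior of $\bm{\mathcal{X}}_{ss}$ the active maximum utilities coincide with the unconditional ones. The inclusion for $\dot x$ is therefore contained in $\mathcal{X}(x)$ from \eqref{eq:abstract_het}, because of the following facts:
\begin{itemize}
\item each nonconformist (resp. conformist) type moves toward $\theta_p$ or $0$ according to the sign of $x-\tau$;
\item the imitators as a block move toward $\zeta_0$ or $0$ according to $\operatorname{sgn}(u^\1(x)-u^\2(x))$;
\item on ties one obtains the intervals listed in \eqref{eq:abstract_het}.
\end{itemize}
On the boundary the active and unconditional utilities may differ. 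For those initial conditions I would invoke \Cref{lem_convergence_of_population_proportion}, which I take as stating that $x(t)$ still has the same asymptotic behaviour as the abstract trajectory started at $x(0)$. The key observation is that any component sitting at $0$ or $\theta_p$ immediately gets activated, so the discrepancy lasts only on a set of times that does not affect limits. With this in hand, \Cref{basin-of-attraction-of-abstract-dynamics} shows that $x(t)\to q^*_k$ whenever $x(0)\in\mathcal{A}(q^*_k)$.

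\emph{Step 2: componentwise convergence near a stable $q^*_k$.} Once $x(t)$ is within a small neighbourhood of $q^*_k$, Assumption \ref{ass:ass1} and the separation of thresholds fix, for each non-imitative type, the sign of $x-\tau_p$. The only exception is the type whose threshold equals $q^*_k$ in the nonconformist-driven case. Each such component therefore satisfies $\dot x_p=\theta_p-x_p$ or $\dot x_p=-x_p$ and converges exponentially to $\theta_p$ or $0$. Similarly, if $q^*_k\notin\mathcal{C}$, the sign of $u^\1-u^\2$ is fixed near $q^*_k$. Using Assumption \ref{ass:ass3}, the active utilities agree with the unconditional ones after a short time, so the imitator components converge to $\zeta_p$ or $0$. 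The remaining threshold-type component is then forced by $x\to q^*_k$ and equals $\tau_i$ minus the sum of the others. This gives item 1 for the clean-cut, imitation-free clean-cut and non-mixed nonconformist-driven points, whose coordinates match \Cref{def_population-equilibrium-point}.

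\emph{Step 3: the mixed cases and the unstable level sets.} For item 2 ($\tau_i\in\mathcal{C}$) and item 3 ($c\in\mathcal{C}$), the imitator block and possibly $x_{\pp+i}$ are not individually determined; only their sum is forced by $x\to q$. I would therefore show convergence of the trajectory to the \emph{set} $\bm{\mathcal{R}}^n_{i,j}$ or $\bm{\mathcal{C}}^c_{i,j}$, not to a point. The argument uses the distance to the set, which is bounded by the deviations of the fixed-sign coordinates plus $|x-q|$, and both tend to zero.

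The \enquote{only if} direction in item 3 follows from \Cref{lem_instability_of_mixed_clean_cut}. When the abstract point is unstable, choose initial states with aggregate arbitrarily close to $q$ but lying in the basin of a neighbouring $q^*_k$; these leave any neighbourhood of the continuum.

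For item 4, start with $x(0)=q^*_{k,k+1}$. The equilibrium property lets the trajectory remain on the level set, while the set-valued right-hand side also permits leaving to either side, where Steps 1–2 apply. The union over all solutions of the limit sets is then the set of equilibria with aggregate in $\{q^*_k,q^*_{k,k+1},q^*_{k+1}\}$. To see that staying on the level set leads to $\bm{\mathcal{Q}}$, one repeats Step 2 with $x$ frozen.

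I expect the main obstacle to be Step 1 on the boundary $\partial\bm{\mathcal{X}}_{ss}$, together with item 4. The difficulty is that the active utilities can make $\dot x$ leave $\mathcal{X}(x)$ transiently. My first attempt would be to show that from any boundary point the solution enters the interior, or reaches a face where the active and unconditional maxima agree, in arbitrarily short time. This would use Assumption \ref{ass:ass2} at the thresholds.
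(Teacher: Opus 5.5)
Your proposal is correct and follows essentially the paper's route. Lemma~\ref{lem_convergence_of_population_proportion} gives $\bm 1^\top\x(t)\to q^*_k$; a punctured neighbourhood of $q^*_k$ free of $\mathcal{C}\cup\mathcal{T}$ makes every other component converge exponentially to $0$ or $\theta_p$; and the triangle inequality recovers the remaining coordinate, or the sum of the free coordinates in items 2--3. You actually say more than the paper about the ``only if'' part of item 3 and about item 4, both of which the paper omits.

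Two small corrections, neither of which affects the argument. First, the agreement of active and unconditional maxima near a point outside $\mathcal{C}$ comes from the mechanism of Lemma~\ref{lemmaAppendix}, not from Assumption~\ref{ass:ass3}: the type attaining the maximum strictly prefers that strategy, so its non-imitative component leaves the boundary at once. Second, boundary components are not always activated immediately---a type with both components at $0$ can stay there until the aggregate hits its threshold in $\mathcal{C}$, which is why the proof of Lemma~\ref{lem_convergence_of_population_proportion} needs repeated resets---but since you invoke that lemma rather than this heuristic, nothing breaks.
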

\begin{remark} \label{rem_structured_pop}
\tb{Population structure and its role in the asymptotic behavior of populations were also studied by scholars. 
In evolutionary games on a graph, the long-term behaviour might depend on the payoff matrix, the strategy updating rule, and the population structure \cite{lieberman2005evolutionary}.
Despite this, the evolution of the expected proportion of each strategy in a large population of imitators, arranged on a regular graph, can be approximated by a replicator differential equation (RFE), analogous to the equation describing evolution in a large well-mixed population \cite{ohtsuki2006replicator}.
When game dynamics are coupled with environmental feedback, the asymptotic behavior inferred from the approximated RFE  can depend on the number of neighbors each player has \cite{stella2022lower}.
In this paper, we considered finite-size well-mixed  heterogeneous populations and investigated the fate of fluctuations for large enough population sizes.
%Our main contribution is to identify the deterministic dynamics underlying finite-population dynamics and to analyze their asymptotic behavior, thereby gaining insight into the asymptotic behavior of the corresponding discrete dynamics. 
Similar steps can be taken for a structured version of the studied populations.
This remains for future investigations.
}
\end{remark}
\begin{remark} \label{remak_contradictionOfAssumption2}

Without \Cref{ass:ass1}, \tb{the dynamics may admit new equilibria with structures the same as (imitation-free) clean-cut equilibria when the proportion of $\1$-players is in $\mathcal{C} \cup \mathcal{T}$.
Any new equilibrium at which} the population proportion of 
$\1$-players equals a conformist threshold is unstable.
If instead it matches an element of $\mathcal{C}$, its stability depends on the relative magnitudes of $u^\1_i(0)$ and $u^\2_j(0)$, where $i$ (resp. $j$) indexes the type with the greatest $\1$-utility (resp. $\2$-utility) at that point.
If \Cref{ass:ass2} does not hold, the stability of the abstract conformist and nonconformist equilibria depends on the highest-earning strategy near the abstract equilibrium point.
Accordingly, a conformist (resp. nonconformist) equilibrium point might be asymptotically stable or unstable.
In the absence of \Cref{ass:ass3}, the $\1$-utility and $\2$-utility lines of  different types may coincide.
If these utilities are the highest over some range of the population proportion of $\1$-players, imitators may fail to equilibrate.
More specifically, the evolution of $x_p$ for $p \in [\pp]$ follows $\dot{x}_p \in [-x_p, \theta_p - x_p]$, under which some trajectories exist that cause the proportion of $\1$-players among imitators and hence the population proportion of $\1$-players to fluctuate indefinitely.
Accordingly, the proportion of $\1$-players will fluctuate indefinitely.
\end{remark}
\begin{revisit}
% We re-ran the simulation with population size scaled from $68$ by successive doublings, up to $2176$, under $200$ different random activation sequences for each population size.
% The fluctuations in the population proportion of $\1$-players vanished for population sizes $136$ and greater, and the 
% population proportion of $\1$-players remained fixed at $x = \tau_1$.
% However, the proportions of $\1$-players within the nonconformists with threshold $\tau_1$  and the imitators varied across activation sequences, while their sum remained fixed at $\frac{16.5}{68}$.
The  associated continuous-time population dynamics with the discrete population introduced in \Cref{example} admit continua of nonconformist-driven equilibrium points characterized as 
    $$
	(x^*_1,x^*_2,x^*_3,x^*_4,0.11, 0.05,x^*_5,0.12),
	$$
where the sum of $x^*_p$, for $p\in[6]$, equals $\tau_3 - \eta'_1 \tb{- \eta_2} = 0.006$, and the population proportion of $\1$-players equals $\tau_3 \approx 0.286.$
Indeed, the population proportion of $\1$-players in finite discrete populations fluctuates around $\tau_3$. Similarly, the sum of the proportions of $\1$-playing imitators and nonconformistsf type 3 fluctuates around $0.006$ (\Cref{fig_finitepop}).

\end{revisit}
\section{Concluding Remarks}
We studied the asymptotic behavior of a well-mixed population of imitators, conformists, and nonconformists playing two-strategy games.
We derived and analyzed the mean dynamics of such a population in the form of a differential inclusion.
It was shown that the mean dynamics always converge to an equilibrium point or a continuum of equilibria.
Using the stochastic approximation theory, we showed that the amplitudes of the reported fluctuations in the proportion of $\1$-players in the population converge to zero with probability one (Figure \ref{fig_diag}).
The results of this paper, along with those of \cite{aghaeeyan2023discrete,aghaeeyanCDC}, suggest that in two-strategy, real-world decision-making problems where the proportions of adopters undergo large-amplitude fluctuations, individuals’ utility cannot be fully captured by affine functions of the proportion of adopters.

% \usetikzlibrary{positioning, shapes, arrows.meta, fit}
\begin{figure*}
\centering
\includegraphics[width=1\linewidth]{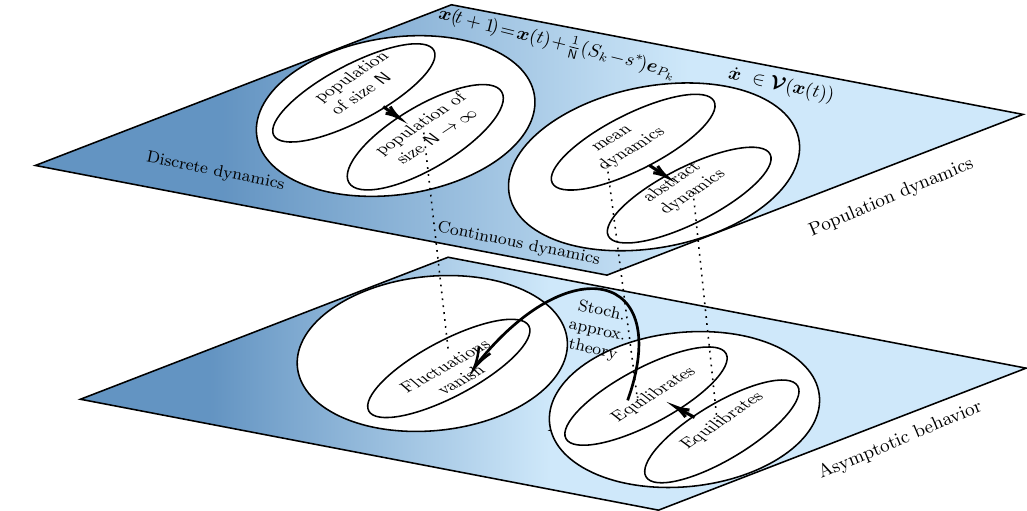}
\caption{\textbf{Stochastic approximation theory provides a link between the behavior of finite populations governed by the discrete-time dynamics and the corresponding continuous-time system.} 
	Analysis of the one-dimensional abstract dynamics facilitates that of the mean dynamics. 
	Convergence of the abstract dynamics implies convergence of the mean dynamics, which, in turn, implies that the discrete population dynamics converge in probability.}
\label{fig_diag}
\end{figure*}
% \section*{Declaration of Generative AI and AI-assisted technologies in the writing process}
% AA acknowledges partial assistance from ChatGPT for language polishing of the paper.
% \bibliographystyle{IEEEtran}
\bibliographystyle{plain}        % Include this if you use bibtex 
\bibliography{autosam}           % and a bib file to produce the 

@book{sandholm2010population,
  title={Population games and evolutionary dynamics},
  author={Sandholm, William H},
  year={2010},
  publisher={MIT press}
}

@ARTICLE{9234634,
  author={Como, Giacomo and Fagnani, Fabio and Zino, Lorenzo},
  journal={IEEE Transactions on Control of Network Systems}, 
  title={Imitation Dynamics in Population Games on Community Networks}, 
  year={2021},
  volume={8},
  number={1},
  pages={65-76},
  doi={10.1109/TCNS.2020.3032873}}

@article{roth2013stochastic,
  title={Stochastic approximations with constant step size and differential inclusions},
  author={Roth, Gr{\'e}gory and Sandholm, William H},
  journal={SIAM Journal on Control and Optimization},
  volume={51},
  number={1},
  pages={525--555},
  year={2013},
  publisher={SIAM}
}

@article{ZHU2023110707,
title = {Equilibrium analysis and incentive-based control of the anticoordinating networked game dynamics},
journal = {Automatica},
volume = {147},
pages = {110707},
year = {2023},
issn = {0005-1098},
author = {Yuying Zhu and Zhipeng Zhang and Chengyi Xia and Zengqiang Chen}
}

@article{granovetter1978threshold,
  title={Threshold models of collective behavior},
  author={Granovetter, Mark},
  journal={American journal of sociology},
  volume={83},
  number={6},
  pages={1420--1443},
  year={1978},
  publisher={University of Chicago Press}
}

@article{bauch2004vaccination,
  title={Vaccination and the theory of games},
  author={Bauch, Chris T and Earn, David JD},
  journal={Proceedings of the National Academy of Sciences},
  volume={101},
  number={36},
  pages={13391--13394},
  year={2004},
  publisher={National Acad Sciences}
}

@ARTICLE{comoImitation,
  author={Como, Giacomo and Fagnani, Fabio and Zino, Lorenzo},
  journal={IEEE Transactions on Control of Network Systems}, 
  title={Imitation Dynamics in Population Games on Community Networks}, 
  year={2021},
  volume={8},
  number={1},
  pages={65-76},
  doi={10.1109/TCNS.2020.3032873}}

@article{bestresponsePotential,
author = {Swenson, Brian and Murray, Ryan and Kar, Soummya},
title = {On Best-Response Dynamics in Potential Games},
journal = {SIAM Journal on Control and Optimization},
volume = {56},
number = {4},
pages = {2734-2767},
year = {2018}}

@article{arefizadeh2023robustness,
  title={Robustness of dynamics in games: A contraction mapping decomposition approach},
  author={Arefizadeh, Sina and Arefizadeh, Sadegh and Etesami, S Rasoul and Bolouki, Sadegh},
  journal={Automatica},
  volume={155},
  pages={111142},
  year={2023},
  publisher={Elsevier}
}

@ARTICLE{replicator,
  author={Madeo, Dario and Mocenni, Chiara},
  journal={IEEE Transactions on Automatic Control}, 
  title={Game Interactions and Dynamics on Networked Populations}, 
  year={2015},
  volume={60},
  number={7},
  pages={1801-1810},
  doi={10.1109/TAC.2014.2384755}}

@ARTICLE{replicator2,
  author={Mabrok, Mohamed A.},
  journal={IEEE Transactions on Automatic Control}, 
  title={Passivity Analysis of Replicator Dynamics and Its Variations}, 
  year={2021},
  volume={66},
  number={8},
  pages={3879-3884},
  doi={10.1109/TAC.2020.3027644}}

@book{smirnov2002introduction,
  title={Introduction to the Theory of Differential Inclusions},
  author={Smirnov, Georgi V},
  volume={41},
  year={2002},
  publisher={American Mathematical Soc.}
}

@INPROCEEDINGS{arditt2,
  author={Arditti, Laura and Como, Giacomo and Fagnani, Fabio and Vanelli, Martina},
  booktitle={2021 60th IEEE Conference on Decision and Control (CDC)}, 
  title={Equilibria and learning dynamics in mixed network coordination/anti-coordination games}, 
  year={2021},
  volume={},
  number={},
  pages={4982-4987},
  doi={10.1109/CDC45484.2021.9683414}}

@ARTICLE{coordinationandanticoordination,
  author={Como, Giacomo and Durand, Stéphane and Fagnani, Fabio},
  journal={IEEE Transactions on Automatic Control}, 
  title={Optimal Targeting in Super-Modular Games}, 
  year={2022},
  volume={67},
  number={12},
  pages={6366-6380},
  doi={10.1109/TAC.2021.3129733}}

@article{MAYHEW20111045,
title = {On the topological structure of attraction basins for differential inclusions},
journal = {Systems \& Control Letters},
volume = {60},
number = {12},
pages = {1045-1050},
year = {2011},
author = {Christopher G. Mayhew and Andrew R. Teel},
}

@article{aghaeeyan2023discrete,
author={Aghaeeyan, Azadeh and Ramazi, Pouria},
  journal={IEEE Transactions on Automatic Control}, 
  title={From Discrete to Continuous Binary Best-Response Dynamics: Discrete Fluctuations Almost Surely Vanish with Population Size}, 
  year={2025},
  volume={},
  number={},
  pages={1-16},
note={Early Access}
}

@article{fu2024evolutionary,
  title={Evolutionary matrix-game dynamics under imitation in heterogeneous populations},
  author={Fu, Yiheng and Ramazi, Pouria},
  journal={Automatica},
  volume={159},
  pages={111354},
  year={2024},
  publisher={Elsevier}
}

@INPROCEEDINGS{aghaeeyanCDC,
  author={Aghaeeyan, Azadeh and Ramazi, Pouria},
  booktitle={2024 IEEE 63rd Conference on Decision and Control (CDC)}, 
  title={From Discrete to Continuous Imitation Dynamics}, 
  year={2024},
  volume={},
  number={},
  pages={966-971},
  doi={10.1109/CDC56724.2024.10886827}}

@ARTICLE{Hien2,
  author={Le, Hien and Aghaeeyan, A and Ramazi, Pouria},
  journal={IEEE Transactions on Automatic Control}, 
  title={Heterogeneous Mixed Populations of Conformists, Nonconformists, and Imitators}, 
  year={2024},
  volume={69},
  number={5},
  pages={3373-3380}}

@article{zhang2018fashion,
  title={Fashion and homophily},
  author={Zhang, Boyu and Cao, Zhigang and Qin, Cheng-Zhong and Yang, Xiaoguang},
  journal={Operations Research},
  volume={66},
  number={6},
  pages={1486--1497},
  year={2018},
  publisher={INFORMS}
}

@article{kawagoe2023asymmetric,
  title={Asymmetric volunteer's dilemma game: Theory and experiment},
  author={Kawagoe, Toshiji and Takizawa, Hirokazu and Yamamori, Tetsuo},
  journal={Games and Economic Behavior},
  volume={142},
  pages={955--977},
  year={2023},
  publisher={Elsevier}
}

@article{chapman2012using,
  title={Using game theory to examine incentives in influenza vaccination behavior},
  author={Chapman, Gretchen B and Li, Meng and Vietri, Jeffrey and Ibuka, Yoko and Thomas, David and Yoon, Haewon and Galvani, Alison P},
  journal={Psychological science},
  volume={23},
  number={9},
  pages={1008--1015},
  year={2012},
  publisher={Sage Publications Sage CA: Los Angeles, CA}
}

@article{haslegrave2017majority,
  title={Majority dynamics with one nonconformist},
  author={Haslegrave, John and Cannings, Chris},
  journal={Discrete Applied Mathematics},
  volume={219},
  pages={32--39},
  year={2017},
  publisher={Elsevier}
}

@ARTICLE{7258336,
  author={Gharesifard, Bahman and Touri, Behrouz and Başar, Tamer and Shamma, Jeff},
  journal={IEEE Transactions on Automatic Control}, 
  title={On the Convergence of Piecewise Linear Strategic Interaction Dynamics on Networks}, 
  year={2016},
  volume={61},
  number={6},
  pages={1682-1687},
  doi={10.1109/TAC.2015.2477975}}

@ARTICLE{10453658,
  author={Arditti, Laura and Como, Giacomo and Fagnani, Fabio and Vanelli, Martina},
  journal={IEEE Transactions on Automatic Control}, 
  title={Robust Coordination of Linear Threshold Dynamics on Directed Weighted Networks}, 
  year={2024},
  volume={69},
  number={10},
  pages={6515-6529},
  doi={10.1109/TAC.2024.3371882}}

@article{kaniovski2000adaptive,
  title={Adaptive dynamics in games played by heterogeneous populations},
  author={Kaniovski, Yuri M and Kryazhimskii, Arkadii V and Young, H Peyton},
  journal={Games and Economic Behavior},
  volume={31},
  number={1},
  pages={50--96},
  year={2000},
  publisher={Elsevier}
}

@article{zino2025equilibrium,
  title={Equilibrium Selection in Replicator Equations Using Adaptive-Gain Control},
  author={Zino, Lorenzo and Ye, Mengbin and Calafiore, Giuseppe C and Rizzo, Alessandro},
  journal={IEEE Transactions on Automatic Control},
  year={2025},
  publisher={IEEE}
}

@article{barreiro2018constrained,
  title={Constrained evolutionary games by using a mixture of imitation dynamics},
  author={Barreiro-Gomez, Julian and Tembine, Hamidou},
  journal={Automatica},
  volume={97},
  pages={254--262},
  year={2018},
  publisher={Elsevier}
}

@ARTICLE{10172285,
  author={Chen, Ge and Yu, Yongyuan},
  journal={IEEE Transactions on Automatic Control}, 
  title={Convergence Analysis and Strategy Control of Evolutionary Games With Imitation Rule on Toroidal Grid}, 
  year={2023},
  volume={68},
  number={12},
  pages={8185-8192},
  doi={10.1109/TAC.2023.3291957}}

@ARTICLE{7458850,
  author={Riehl, James R. and Cao, Ming},
  journal={IEEE Transactions on Automatic Control}, 
  title={Towards Optimal Control of Evolutionary Games on Networks}, 
  year={2017},
  volume={62},
  number={1},
  pages={458-462},
  doi={10.1109/TAC.2016.2558290}}

@ARTICLE{7438812,
  author={Tan, Shaolin and Wang, Yaonan and Lü, Jinhu},
  journal={IEEE Transactions on Automatic Control}, 
  title={Analysis and Control of Networked Game Dynamics via A Microscopic Deterministic Approach}, 
  year={2016},
  volume={61},
  number={12},
  pages={4118-4124},
  doi={10.1109/TAC.2016.2545106}}

@ARTICLE{10416815,
  author={Liu, Aixin and Wang, Lin and Chen, Guanrong and Guan, Xinping},
  journal={IEEE Transactions on Cybernetics}, 
  title={Heterogeneously Networked Evolutionary Games With Intergroup Conflicts}, 
  year={2024},
  volume={54},
  number={10},
  pages={5684-5695},
  doi={10.1109/TCYB.2024.3352928}}

@article{lieberman2005evolutionary,
  title={Evolutionary dynamics on graphs},
  author={Lieberman, Erez and Hauert, Christoph and Nowak, Martin A},
  journal={Nature},
  volume={433},
  number={7023},
  pages={312--316},
  year={2005},
  publisher={Nature Publishing Group UK London}
}

@article{ohtsuki2006replicator,
  title={The replicator equation on graphs},
  author={Ohtsuki, Hisashi and Nowak, Martin A},
  journal={Journal of theoretical biology},
  volume={243},
  number={1},
  pages={86--97},
  year={2006},
  publisher={Elsevier}
}

@article{stella2022lower,
  title={Lower network degrees promote cooperation in the prisoner’s dilemma with environmental feedback},
  author={Stella, Leonardo and Baar, Wouter and Bauso, Dario},
  journal={IEEE Control Systems Letters},
  volume={6},
  pages={2725--2730},
  year={2022},
  publisher={IEEE}
}

@article{nowak1992evolutionary,
  title={Evolutionary games and spatial chaos},
  author={Nowak, Martin A and May, Robert M},
  journal={nature},
  volume={359},
  number={6398},
  pages={826--829},
  year={1992},
  publisher={Nature Publishing Group UK London}
}

@article{henrich2001evolution,
  title={The evolution of prestige: Freely conferred deference as a mechanism for enhancing the benefits of cultural transmission},
  author={Henrich, Joseph and Gil-White, Francisco J},
  journal={Evolution and human behavior},
  volume={22},
  number={3},
  pages={165--196},
  year={2001},
  publisher={Elsevier}
}

@article{grabisch2019model,
  title={A model of anonymous influence with anti-conformist agents},
  author={Grabisch, Michel and Poindron, Alexis and Rusinowska, Agnieszka},
  journal={Journal of Economic Dynamics and Control},
  volume={109},
  pages={103773},
  year={2019},
  publisher={Elsevier}
}

@article{galam2004contrarian,
  title={Contrarian deterministic effects on opinion dynamics:“the hung elections scenario”},
  author={Galam, Serge},
  journal={Physica A: Statistical Mechanics and its Applications},
  volume={333},
  pages={453--460},
  year={2004},
  publisher={Elsevier}
}

@article{BarrettEtAl2017,
  title   = {Pay-off-biased Social Learning Underlies the Diffusion of Novel Extractive Foraging Traditions in a Wild Primate},
  author  = {Barrett, Brendan J. and McElreath, Richard L. and Perry, Susan E.},
  journal = {Proceedings of the Royal Society B: Biological Sciences},
  year    = {2017},
  volume  = {284},
  number  = {1856},
  pages   = {20170358},
  doi     = {10.1098/rspb.2017.0358}
}

@article{BonoWhitenEtAl2018,
  title   = {Payoff- and Sex-Biased Social Learning Interact in a Wild Primate Population},
  author  = {Bono, Axelle E. J. and Whiten, Andrew and van Schaik, Carel P. and Kr{\"u}tzen, Michael and Graber, Stefanie M. and van de Waal, Erica},
  journal = {Current Biology},
  year    = {2018},
  volume  = {28},
  number  = {17},
  pages   = {2800--2805.e4},
  doi     = {10.1016/j.cub.2018.06.015}
}
% bibliography (preferred). The
% correct style is generated by
% Elsevier at the time of printing.

%\begin{thebibliography}{99}     % Otherwise use the  
% thebibliography environment.
% Insert the full references here.
% See a recent issue of Automatica 
% for the style.
%  \bibitem[Heritage, 1992]{Heritage:92}
%     (1992) {\it The American Heritage. 
%     Dictionary of the American Language.}
%     Houghton Mifflin Company.
%  \bibitem[Able, 1956]{Abl:56}
%     B.~C.~Able (1956). Nucleic acid content of macroscope. 
%     {\it Nature 2}, 7--9. 
%  \bibitem[Able {\em et al.}, 1954]{AbTaRu:54}   
%     B.~C. Able, R.~A. Tagg, and M.~Rush (1954).
%     Enzyme-catalyzed cellular transanimations.
%     In A.~F.~Round, editor, 
%     {\it Advances in Enzymology Vol. 2} (125--247). 
%     New York, Academic Press.
%  \bibitem[R.~Keohane, 1958]{Keo:58}
%     R.~Keohane (1958).
%     {\it Power and Interdependence: 
%     World Politics in Transition.}
%     Boston, Little, Brown \& Co.
%  \bibitem[Powers, 1985]{Pow:85}
%     T.~Powers (1985).
%     Is there a way out?
%     {\it Harpers, June 1985}, 35--47.

%\end{thebibliography}

\appendix
\renewcommand{\theequation}{A.\arabic{equation}}
\setcounter{equation}{0}
\section*{Appendix}
\begin{lemmaAppendix} \label{lem_appendix_2}
Consider the continuous-time population dynamics \eqref{eq:type_mixed}.
Under Assumptions \ref{ass:ass1}-\ref{ass:ass3},
the evolution of the population proportion of $\1$-players, $\sum_{p=1}^{2\pp}x_p(t)$, is governed by \eqref{eq:populationProportion}, where
$\x(t)$ is the continuous-time population state governed by \eqref{eq:type_mixed} and
for $x = \tau_i$, $j=\max \{k| \tau'_k < \tau_{i} \}$, and  for $x = \tau_j$, $i =\max \{k| \tau'_j < \tau_{k} \}$.
\end{lemmaAppendix}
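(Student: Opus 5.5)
We will obtain the dynamics of $x(t)=\bm 1^\top\x(t)$ by summing the $2\pp$ component inclusions of \eqref{eq:type_mixed}. Since $\x(t)$ is absolutely continuous, $\dot x(t)=\sum_{p}\dot x_p(t)$ for almost every $t$, and $\dot x_p(t)\in\V_p(\x(t))$. Hence $\dot x$ lies in the Minkowski sum $\sum_p \V_p(\x)$. The work is to evaluate this sum case by case and show it matches the right-hand side of \eqref{eq:populationProportion}.

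\textbf{Non-imitative components.} Fix $x$ and split according to where $x$ lies relative to $\mathcal{T}$.
\begin{itemize}
\item If $x\in(\max\{\tau'_j,\tau_{i+1}\},\min\{\tau'_{j+1},\tau_i\})$, then by (\ref{eq:type_mixed}c)--(\ref{eq:type_mixed}d) exactly nonconformist types $1,\dots,i$ and conformist types $1,\dots,j$ receive $\theta_p-x_p$, and all others receive $-x_p$. Their sum is $\eta_i+\eta'_j-\sum_{p>\pp}x_p$.
\item If $x=\tau_i$, the $i$th nonconformist contributes the interval $[-x_{\pp+i},\rho_i-x_{\pp+i}]$. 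The index $j=\max\{k\mid\tau'_k<\tau_i\}$ records which conformists are above threshold, and thresholds are distinct, so no conformist sits at its threshold. The sum is $[\eta_{i-1}+\eta'_j,\eta_i+\eta'_j]-\sum_{p>\pp}x_p$.
\item The case $x=\tau'_j$ is symmetric, with $i=\max\{k\mid\tau'_j<\tau_k\}$.
\end{itemize}

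\textbf{Imitative components.} All imitative components share the same selector condition in (\ref{eq:type_mixed}b). Write $z=\sum_{p\le\pp}x_p$.
\begin{itemize}
\item If both the maximum and the maximum active utility comparisons favour $\1$, the sum is $\zeta_0-z$.
\item If both favour $\2$, the sum is $-z$.
\item Otherwise the sum is the interval $[-z,\zeta_0-z]$. This interval is a valid Minkowski sum only because every imitative component uses the same selector condition.
\end{itemize}
Adding the two groups, $-\sum_{p\le\pp}x_p-\sum_{p>\pp}x_p=-x$. So every case yields an expression of the form (bounds built from $\eta$'s and possibly $\zeta_0$) minus $x$, with the extra terms depending on $\x$ only through the active utilities. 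This reproduces the case table of \eqref{eq:populationProportion} in terms of $u^\1(x),u^\2(x),u^\1(\x),u^\2(\x)$.

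\textbf{Role of the assumptions.} Assumptions \ref{ass:ass1}--\ref{ass:ass3} serve to make the cases exhaustive and non-overlapping. Assumption \ref{ass:ass2} ensures that when $x=\tau_p\in\mathcal C$ the tie comes from type $p$ itself, so the imitative interval and the threshold interval combine into the single interval listed (e.g.\ $[\eta_{i-1}+\eta'_j-x,\eta_i+\eta'_j+\zeta_0-x]$). Assumption \ref{ass:ass3} rules out degenerate situations in which $u^\1\equiv u^\2$ on an interval.

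\textbf{Main obstacle.} The hard part is the careful treatment of the mismatch between maximum utilities and maximum active utilities on $\partial\bm{\mathcal X}_{ss}$. There, (\ref{eq:type_mixed}b) can give an interval even though $u^\1(x)\neq u^\2(x)$. These extra cases must be stated explicitly in \eqref{eq:populationProportion} as terms depending on $\x$. One must also check that they disappear in the interior, where the active and nonactive maxima coincide.
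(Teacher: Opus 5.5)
Your argument is correct and follows the paper's own route: the paper's proof only notes that $\dot{\tilde{x}}=\sum_{p=1}^{2\pp}\dot{x}_p$ and defers the case-by-case evaluation to an earlier work, and your computation of the Minkowski sums of the component sets $\V_p(\x)$ reproduces every case of \eqref{eq:populationProportion}. One small correction: Assumptions \ref{ass:ass1}--\ref{ass:ass3} are not needed for this identity, because the intervals combine by plain Minkowski addition and the imitative ``otherwise'' condition in (\ref{eq:type_mixed}b) coincides exactly with ``$u^\1(\tilde{x})=u^\2(\tilde{x})$ or sign mismatch''; so your paragraphs on the role of the assumptions and on boundary mismatches as the main obstacle are unnecessary rather than load-bearing.
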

\begin{proof}
The proof is based on the fact that the derivative of the population proportion of $\1$-players equals the sum of the derivative of the proportion of $\1$-players in each type, i.e., $\dot{\Tilde{x}} = \sum_{p=1}^{2\pp}\dot{x}_p(t)$.
The next steps are similar to those taken in the proof of Proposition 2 in \cite{aghaeeyan2023discrete} and are therefore omitted.
\end{proof}

\begin{lemmaAppendix} \label{lemmaAppendix}
Let $ \bm{\mathcal{X}}_1$ and $ \bm{\mathcal{X}}_2$ be defined as \eqref{eq:proof-equilibrium-sgn}.
\begin{figure*}
	\begin{subequations} 
		\label{eq:proof-equilibrium-sgn}
		\begin{align}
			\bm{\mathcal{X}}_1 &=   \{\x^* \in \bm{\mathcal{X}}_{ss} \mid  \bm1^\top \x^* \notin \mathcal{C} \text{ and } \nonumber\\
            & \qquad 
			\operatorname{sgn}(u^\1(\bm 1^\top \x^*) -u^\2(\bm 1^\top \x^*)) \neq \operatorname{sgn}(u^\1(\x^*) -u^\2(\x^*)) 
			\}, \label{eq:proof-equilibrium-sgn-X1}
			\\ \nonumber
			\bm{\mathcal{X}}_2 &=  \{\x^* \in \bm{\mathcal{X}}_{ss} \mid  
			\bm1^\top \x^* \in (\mathcal{C} -  \mathcal{T})
			\text{ and }
            \\ & \qquad 
			\operatorname{sgn}(u^\1(\bm 1^\top \x^*) -u^\2(\bm 1^\top \x^*)) \neq \operatorname{sgn}(u^\1(\x^*) -u^\2(\x^*))  \}. \label{eq:proof-equilibrium-sgn-X2}
		\end{align}
	\end{subequations}
\end{figure*}
Then     $ \bm{\mathcal{X}}_1 \cap \bm{\mathcal{Q}} = \varnothing$
and  $ \bm{\mathcal{X}}_2 \cap \bm{\mathcal{Q}} = \varnothing.$
\end{lemmaAppendix}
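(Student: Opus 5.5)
We argue by contradiction: assume $\x^*\in\bm{\mathcal{X}}_1\cap\bm{\mathcal{Q}}$ (resp. $\bm{\mathcal{X}}_2\cap\bm{\mathcal{Q}}$) and set $x^*=\bm 1^\top\x^*$. Since $\x^*$ is an equilibrium, $0\in\V_p(\x^*)$ for every $p\in[2\pp]$. For the imitative coordinates $p\le\pp$, the sign condition defining $\bm{\mathcal{X}}_1$ and $\bm{\mathcal{X}}_2$ means the first two cases of (\ref{eq:type_mixed}b) do not apply, so $\V_p(\x^*)=[-x^*_p,\theta_p-x^*_p]$ and these coordinates give no information. The contradiction must therefore come from the relation between the maximum utilities $u^\1(x^*),u^\2(x^*)$ and the maximum active utilities $u^\1(\x^*),u^\2(\x^*)$, combined with the non-imitative equilibrium constraints.

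First, we record when the active and unconditional maxima can differ. By \Cref{def_max_utilities}, $u^\1(\x^*)\ne u^\1(x^*)$ only if every maximizer $i$ of $u^\1_i(x^*)$ has $x^*_i+x^*_{\pp+i}=0$; similarly for $\2$ with $x^*_i+x^*_{\pp+i}=\theta_i+\theta_{\pp+i}$. Since active maxima never exceed unconditional ones, a sign mismatch forces one of two situations. In the first, the top $\1$-earner type (if, say, $u^\1(x^*)>u^\2(x^*)$) has no $\1$-players in its imitative part nor in its non-imitative part $\pp+i$. In the second, the analogous statement holds with $\2$ and full $\1$-saturation.

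Second, we use the equilibrium conditions for the non-imitative coordinates. Take $x^*\notin\mathcal{T}$, which is automatic for $\bm{\mathcal{X}}_2$. Then (\ref{eq:type_mixed}c)–(\ref{eq:type_mixed}d) are singletons, so $x^*_{\pp+i}\in\{0,\theta_{\pp+i}\}$ is forced by comparing $x^*$ with the threshold of type $i$. Suppose type $i$ is the maximizer of $u^\1(x^*)$ and has positive $\1$-utility advantage, i.e.\ $u^\1_i(x^*)>u^\2(x^*)\ge u^\2_i(x^*)$. Then type $i$'s best-response preference at $x^*$ is $\1$, so $x^*_{\pp+i}=\theta_{\pp+i}>0$. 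This contradicts the requirement $x^*_i+x^*_{\pp+i}=0$. The case $u^\1(x^*)<u^\2(x^*)$ is symmetric: the $\2$-maximizer $i$ prefers $\2$, so $x^*_{\pp+i}=0<\theta_{\pp+i}$, contradicting saturation. This disposes of $\bm{\mathcal{X}}_1$ whenever $x^*\notin\mathcal{T}$.

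For $\bm{\mathcal{X}}_2$ we have $x^*\in\mathcal{C}\setminus\mathcal{T}$, so $u^\1(x^*)=u^\2(x^*)$ and the unconditional sign is $0$. A mismatch then means that exactly one active maximum dropped strictly; say $u^\1(\x^*)<u^\1(x^*)$. The maximizer $i$ of $u^\1$ at $x^*$ satisfies $u^\1_i(x^*)=u^\2(x^*)\ge u^\2_i(x^*)$. By \Cref{ass:ass3} together with $x^*\notin\mathcal{T}$ (so $x^*\neq\tau_i$), equality $u^\1_i(x^*)=u^\2_i(x^*)$ is excluded. Hence type $i$ strictly prefers $\1$, giving $x^*_{\pp+i}=\theta_{\pp+i}>0$, a contradiction. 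The $\2$ case is symmetric.

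The main obstacle is $\bm{\mathcal{X}}_1$ with $x^*\in\mathcal{T}$, say $x^*=\tau_p$ (the conformist case $\tau'_p$ is analogous). Here the type-$p$ coordinate is set-valued, so its equilibrium constraint no longer fixes $x^*_{\pp+p}$. If the relevant maximizer is $i\ne p$, the argument above still applies to type $i$, since its coordinate is a singleton. If $i=p$, we have $u^\1_p(\tau_p)=u^\2_p(\tau_p)$, so type $p$ maximizing $u^\1$ means $u^\1(x^*)\le u^\2(x^*)$. In the strict-inequality subcase, this conflicts with $p$ being the maximizer that makes $u^\1(x^*)>u^\2(x^*)$. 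In the equality subcase, $x^*\in\mathcal{C}$, which is excluded in $\bm{\mathcal{X}}_1$. This case analysis around thresholds is where \Cref{ass:ass2} and \Cref{ass:ass3} must be used carefully, and I would check it last and in detail.
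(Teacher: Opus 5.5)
Your proof is correct and essentially the paper's: the sign mismatch, together with $u^\1(\x^*)\le u^\1(x^*)$ and $u^\2(\x^*)\le u^\2(x^*)$, forces the top-earning type $i$ to be fully inactive (resp.\ saturated), and then $\V_{\pp+i}(\x^*)$ is a singleton that does not contain $0$. Your separate treatment of $x^*\in\mathcal{T}$ and your appeal to Assumptions~\ref{ass:ass2} and~\ref{ass:ass3} are unnecessary: the strict preference $u^\1_i(x^*)>u^\2_i(x^*)$ (in the $\bm{\mathcal{X}}_1$ case) or $x^*\notin\mathcal{T}$ (in the $\bm{\mathcal{X}}_2$ case) already gives $x^*\neq\tau_i$, which is how the paper avoids that case split.
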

\begin{proof}
Assume on the contrary that there exists some $\x^* \in \bm{\mathcal{Q}} \cap \bm{\mathcal{X}}_1$.
$\bm 1^\top \x^* \notin \mathcal{C}$ implies
$u^\1(x^*) \neq u^\2(x^*)$, where $x^* = \bm 1^\top \x^*.$
We show a contradiction for the case $u^\1(x^*) >u^\2(x^*)$; the case $u^\1(x^*) < u^\2(x^*)$ can be handled similarly.
Without loss of generality, assume that $u^\1(x^*)$ is equal to $ u^\1_p(x^*)$ for some $p \in [\pp]$.
\tb{ We know $ u^\1(x^*) \geq  u^\1(\x^*)$ and  $u^\2(x^*) \geq  u^\2(\x^*)$
	Given $ u^\2(\x^*)  \geq u^\1(\x^*)$ and $ u^\1(x^*) > u^\2(x^*)$, we have $u^\1(\x^*) < u^\1(x^*) = u^\1_p(x^*)$. 
	This implies that $ x^*_p + x^*_{\pp + p}  = 0$, and consequently, $x^*_{\pp + p} = 0$.}
On the other hand,
the element  $x_{\pp + p}$ corresponds to a non-imitative type, \tb{and the two inequalities  
	$u^\2(x^*) \geq u^\2_p(x^*)$ and $u^\1(x^*) > u^\2(x^*)$ imply $u^\1_p(x^*)\! > \!u^\2_p(x^*)$, and, in turn, $\mathcal{V}_{\pp + p} \!= \!\{ \!\theta_p - x^*_p \!\}.$}
%     \begin{equation*}.
	%     \scalebox{0.9}{$
		%     \left.
		% \begin{aligned}
			% u^\1(x^*)& > u^\2(x^*)\\
			% u^\2(x^*) &\geq u^\2_p(x^*)
			% \end{aligned}
		% \right \rbrace
		%  \!\implies \!u^\1_p(x^*)\! > \!u^\2_p(x^*) \!\implies \!\mathcal{V}_{\pp + p} \!= \!\{ \!\theta_p - x^*_p \!\}.
		%  $}
	%     \end{equation*}  
%     \begin{equation}\scalebox{0.9}{$
		% \begin{aligned}
			%   \left.
			%   \begin{array}{r@{}l}
				%         u^\1(x^*) &= u^\1_p(x^*)\\
				%             u^\1(x^*)& > u^\2(x^*)\\
				%             u^\1(x^*) &\geq  u^\1(\x^*) \\
				%             u^\2(x^*) &\geq  u^\2(\x^*)\\
				%              u^\2(\x^*) & \geq u^\1(\x^*)
				%     \end{array} \right\rbrace
			%       \begin{array}{r@{}l}
				%           &\implies  u^\1(\x^*) < u^\1_p(x^*)\\
				%           & \quad \qquad \implies x^*_p + x^*_{\pp + p}  = 0  \\
				%           & \qquad \qquad \qquad \implies  x^*_{\pp + p}  = 0.
				%           \end{array}
			% \end{aligned}
		% $}
	% \end{equation}
% The case $u^\1(x^*) > u^\2(x^*)$ results in  $\max\{u^\1(x^*),  u^\2(x^*)\}$ equals $ u^\1_p(x^*)$ for some $p \in [\pp]$.
% $\x^* \in \bm{\mathcal{X}}_1$ implies
% $u^\1(\x^*) \leq u^\2(\x^*)$.
% With this and the fact that 
% $u^\2(\x^*) \leq u^\2(x^*)$,
% we have 
% $u^\1(\x^*) < u^\1(x^*)$ and, in turn, 
%  $u^\1(\x^*) < u^\1_p(x^*)$.
%  In view of \eqref{eq:UmaxAB-A}, this implies $\mt{1}(x_{p} + x_{\pp + p} ) = - \infty \implies x_{\pp + p} = 0$.
% indicates that 
%    the evolution of the element $x_{\pp + p}$ must be equal to $\rho_{\pp + p} - x_{\pp + p}$ and, in turn, $0 \notin \mathcal{V}_{\pp + p}(\x^*).$
Hence, $0 \notin \V_{\pp + p}(\x^*)$.
But the point $\x^*$ is an equilibrium.
This implies a contradiction.
Now, assume on the contrary that there exists some $\x^* \in \bm{\mathcal{Q}} \cap \bm{\mathcal{{X}}}_2$. 
The relation
$\bm 1^\top \x^* \in \mathcal{C}$ implies
$u^\1(x^*) = u^\2(x^*)$.
\tb{It holds that } $\exists p \in [\pp]\big( (u^\1(x^*) = u^\1_p(x^*))\big)$ and 
$exists q \in [\pp]\big( (u^\2(x^*) = u^\2_q(x^*))\big)$. 
\tb{Given} $x^* \notin \mathcal{T}$, \tb {$p$ and $q$ are not the same.}
% \begin{equation}     \scalebox{0.9}{$  \left.
		%     \begin{aligned}
			%         &\exists p \in [\pp]\big( (u^\1(x^*) = u^\1_p(x^*))\big)\\
			%           &\exists q \in [\pp]\big( (u^\2(x^*) = u^\2_q(x^*))\big)
			%     \end{aligned}
		%       \right \rbrace
		%        \xRightarrow{ x^* \notin \mathcal{T}} p \neq q$}
	% \end{equation}
On the other hand, \eqref{eq:proof-equilibrium-sgn-X2} implies 
that either
$u^\1(\x^*) > u^\2(\x^*)$ or  $u^\1(\x^*)< u^\2(\x^*)$.
We take the case where   $u^\1(\x^*) > u^\2(\x^*),$ and the other case can be handled similarly.
\tb{The inequalities $ u^\1(x^*) \geq  u^\1(\x^*) $, $ u^\2(x^*) \geq  u^\2(\x^*) $, $ u^\1(x^*) = u^\2(x^*)$ and  $u^\1(\x^*) > u^\2(\x^*)$ implies $ u^\2(\x^*) < u^\2(x^*)$, and, in turn, $ u^\2(\x^*) < u^\2_q(x^*)$.   }
% \[ \scalebox{0.9}{$\left.
	%  \begin{array}{r@{}l}
		%         u^\1(x^*) &= u^\1_p(x^*)\\
		%         u^\2(x^*) &= u^\2_q(x^*)\\
		%             u^\1(x^*)& = u^\2(x^*)\\
		%             u^\1(x^*) &\geq  u^\1(\x^*) \\
		%             u^\2(x^*) &\geq  u^\2(\x^*)\\
		%              u^\1(\x^*) &> u^\2(\x^*)
		%         \end{array} \right\rbrace
	%        % \begin{aligned}
		%           \implies  u^\2(\x^*) < u^\2_q(x^*), $}
	% \]
	So, we have
	$ \theta_q + \theta_{\pp + q} -x^*_{q} - x^*_{\pp+q} = 0 $ and
	$x^*_{\pp+q} = \theta_{\pp + q}$.
	Since $p \neq q$, it holds that $u^\1_q(x^*) < u^\2_q(x^*)$ and consequently $\V_{\pp+q}(\x^*) = \{-x_{q+\pp} \} \neq \{0\}$.
	But the point $\x^*$ is an equilibrium.
	This implies a contradiction.
\end{proof}
% For the continuous-time population dynamics \eqref{eq:type_mixed} and the corresponding abstract dynamics \eqref{eq:abstract_het},
% \begin{remark}
	%     When both dynamics \eqref{eq:abstract_het} and \eqref{eq:populationProportion} start from the same initial condition, the sets of trajectories of the two systems are the same as long as the relation \eqref{equal_sgn} holds.
	% Nevertheless, the evolution of the population proportion of 
	% $\1$-players follows a particular selection of the differential inclusion \eqref{eq:populationProportion}.
	% Accordingly, in the \Cref{lem_convergence_of_population_proportion}, we consider the evolution of the abstract state, which coincides with that of the population proportion of 
	% $\1$-players.
	% \end{remark}
% \begin{remark}
	%    Some parts of the proof \Cref{lem_convergence_of_population_proportion} rely on the fact that for any finite time, the state $\x$ does not belong to the boundary, i.e., $\x \in \bm{\mathcal{X}}_{ss}$.
	%    However, in real-world applications, no population size is infinite, and accordingly, when the population proportion of $\1$-players in some types becomes smaller than $1/\N$, the ``ottoeffect'' might happen.
	% \end{remark}
\begin{proofref}{\Cref{lemGSAPS}}
	The proof of \Cref{lemGSAPS} is inspired by the steps taken in \cite[Example 4.1]{roth2013stochastic} and \cite[Lemma 1]{aghaeeyan2023discrete}.
	Condition 1 is satisfied because the set $\bm{\mathcal{X}}_{ss}$ is convex and compact. 
	Now,
	consider the sequence $\langle \frac{1}{\N}\rangle_{\N\in\mathcal{N}}$.
	To prove that the collection $\langle \bm{X}^{\frac{1}{\N}}\rangle_{k=0}^{\infty}$  is a GSAP for the differential inclusion \eqref{eq:type_mixed}, we need to find the sequence
	$\langle \mb{U}^{\frac{1}{\N}}_k\rangle_{k}$ and
	$\langle\bm{\mathcal{V}}^{\frac{1}{\N}}\rangle$ satisfying the last three conditions listed in \Cref{defGSAP}.
	As for Condition 2, we consider the expected increment per time unit of  Markov chain $\bm{X}^{\frac{1}{\N}}$, denoted by $\bm{\nu}^{\frac{1}{\N}}$, as a candidate for $\bm{\mathcal{V}}^{\frac{1}{\N}}$.
	Each time unit is divided into $\N$ subunits.
	Hence, $\bm{\nu}^{\frac{1}{\N}}$ is equal to 
	$\bm{\nu}^{\frac{1}{\N}}(\bm{x}) = \smash{\N \mathbb{E}[\mathbf{X}^{\frac{1}{\N}}_{k+1} - \mathbf{X}^{\frac{1}{\N}}_{k} \mid \mathbf{X}^{\frac{1}{\N}}_k = \x]}$.
	The $p^{\text{th}}$ element of $\bm{\nu}^{\frac{1}{\N}}$, $p \in [2\pp]$, is the expected increment in type $p$ and equals the sum of the products of possible changes in type $p$ and their probabilities.
	Given \eqref{eq:markov}, two  changes might happen, namely, $\frac{1}{\N}$  increment or $\frac{1}{\N}$ reduction.
	Hence, ${\nu}^{\frac{1}{\N}}_p({\x})$ 
	is equal to
	$\theta_p(2 - s^*(p,\x))-x_p.$
	Condition  2 is then satisfied by 
	$ \mb{U}^{\frac{1}{\N}}_{k+1} =$ $\smash{
		\N(\mb{X}^{\frac{1}{\N}}_{k+1} - \mb{X}^{\frac{1}{\N}}_{k})} -\bm{\nu}^{\frac{1}{\N}}(\x)$.
	As for Condition 3, we consider a selection of the differential inclusion \eqref{eq:type_mixed} whose $p^{\text{th}}$ element is equal to $\theta_p(2 - s^*(p,\x))-x_p$, and we denote it by $\bm{\nu}_1(\bm{\x})$.
	Condition 3 is then satisfied by choosing an arbitrary value of $\epsilon_0$, setting $\y=\x$, and  $\bm{\nu}=\bm{\nu}_1.$
	As for the last condition,  given $\mathbb{E}[\mb{U}^{\frac{1}{\N}}_{k+1}| \mb{X}_k^{\frac{1}{\N}}] = 0$ for $k \geq 0$,
	the 
	$\langle \mb{U}^{\frac{1}{\N}}_k\rangle_k$ is a martingale difference sequence.
	In addition, for all $k$,
	$\vert \mb{U}^{\frac{1}{\N}}_k \vert \leq \sqrt{\Sigma_{l=1}^{2\pp} (1 + \theta_l)^2}$
	and hence $\mb{U}^{\frac{1}{\N}}_k$ 
	is uniformly bounded.
	Thus, in view of \cite[Proposition 2.3]{roth2013stochastic}, Condition 4 is met.
\end{proofref}
\begin{proofref}{\Cref{{prop_transient_dynamics}}}
\tr{Given \Cref{lemGSAPS}, the collection $\mb{X}^{\frac{1}{\N}}$, for vanishing sequence $\langle \frac{1}{\N} \rangle_{\N \in \mathcal{N}}$, is a GSAP for the differential inclusion \eqref{eq:type_mixed}.
	With this and given \Cref{thm:shortTermSandholm}, the
	proof is complete.}
	\begin{proofref}{\Cref{lem_equilibriumPointofContinuous}}
		Assume $\x^*$ is an equilibrium of \eqref{eq:type_mixed}, i.e., $\smash{\x^* \in \bm{\mathcal{Q}}.}$
		It holds that $\smash{\bm 0 \in \bm{\V}(\x^*)}$ and hence $0 \in \V_p(\x^*)$ for $p \in [2\pp]$.
		Let $x^* = \sum_{i=1}^{2\pp} x^*_i.$
		\emph{Case 1.} $x^* \notin \mathcal{T}$.
		from the relation $0 \in \V_p(\x^*)$, it follows that
		 for  $\pp <p \leq \pp+\p $ (resp. $\pp + \p <p  \leq 2\pp$), if $x^* < \tau_{p-\pp}$ (resp. $x^*>\tau'_{2\pp+1-p}$),  $x^*_p = \theta_{p}$.
		In other case, for $\pp <p \leq \pp+\p $ (resp. $\pp + \p <p  \leq 2\pp$), if $x^* > \tau_{p-\pp}$ (resp. $x^*<\tau'_{2\pp+1-p}$), we have  $x^*_p = 0$.
		Let   $i = \max \{k \in [\p] \mid  \tau_k > x^* \}$ and $j = \max \{ k \in [\p']\mid  \tau'_{k} < x^* \},$ \tb{where $i=1$ (resp. $j$=0) if the former (resp. latter) set is empty.}
		It follows that 
		$x^* \in [\eta_i + \eta'_j, \eta_{i+1} + \eta'_{j+1} + \zeta_0] \label{eq:proof_equilibrium-1-0}$ and
		$\max \{\tau_{i+1}, \tau'_{j} \}  < x^* < \min \{\tau_{i}, \tau'_{j+1} \}$.
		%     \begin{subequations}
			% \label{eq:proof-equilibrium-1}
			% \begin{align}
				%    x^* &\in [\eta_i + \eta'_j, \eta_{i+1} + \eta'_{j+1} + \zeta_0] \label{eq:proof_equilibrium-1-0},\\
				%    \max \{\tau_{i+1}, \tau'_{j} \} & < x^* < \min \{\tau_{i}, \tau'_{j+1} \}
				%    \label{eq:proof_equilibrium-1-1}.
				%    \end{align}
			% \end{subequations}   
		The point $\x^*$ can then be characterized as
		\begin{equation} \scalebox{0.9
			}{$\label{eq:proof-equilibrium-clean}
				(x^*_1, x^*_2, \ldots, x^*_{\pp},\rho_1, \rho_2, \ldots, \rho_i, 0, \ldots,0, \rho'_j,\ldots,\rho'_1).$}
		\end{equation}
		The $p^\text{th}$ element of $\x^*$, $p \in [\pp]$, is associated with imitators  and  equals
		zero (resp. $\theta_p$) if at the equilibrium point $\x^*$ both relations  $u^\1(\x^*) <u^\2(\x^*)$ and $u^\1(x^*) <u^\2(x^*)$ (resp. $u^\1(\x^*) >u^\2(\x^*)$ and $u^\1(x^*) >u^\2(x^*)$) are satisfied.
		The state \eqref{eq:proof-equilibrium-clean} with $x^*_p = \theta_p$ (resp. $x^*_p = 0$) for every $p \in [\pp]$ characterizes a clean-cut (resp. imitation-free clean-cut) state.
		% Otherwise, element $p$ of $\x^*$, $p \in [\pp]$, can take any value in the interval $[0,\theta_p]$.
		The other possible case for $p^\text{th}$ element of $\x^*$, $p \in [\pp]$ occurs when $\V_p(\x^*)$, $p \in [\pp]$, is not a singleton.
		\tb{Given \Cref{lemmaAppendix}, if  $x^* \notin \mathcal{T}$, and $\V_p(\x^*)$ is not a singleton for some $p \in [\pp]$, $\x^*$ can be an equilibrium only if
		$u^\1(x^*) =u^\2(x^*)$, or, equivalently,
		$x^* \in \mathcal{C}$.}
		Putting \eqref{eq:proof-equilibrium-clean} and $x^* \in \mathcal{C}$ together, the characterization of 
		a mixed clean-cut population state is obtained.
		\emph{Case 2.} $x^* \in \mathcal{T}$.
		Assume that $x^* = \tau_i$ for $i \in [\p]$.
		A similar reasoning provided in \emph{Case 1} results in
		$
		x^* \in [\eta_{i-1} + \eta'_j, \eta_{i} + \eta'_{j} + \zeta_0],
		$
		where   $j = \max \{ k \in \{0\} \cup [\p'] \mid \tau'_{k} < x^* \}$.
		With this and given $\sum_{p=1}^{2\pp}x_p^* = \tau_i$, the equilibrium point $\x^*$ should be in the  form
		$
		\big(x^*_1, x^*_2, \ldots, x^*_{\pp},\rho_1, \ldots, \rho_{i-1}, \tau_i - (\eta_{i-1} +\eta'_j + \sum_{p=1}^{\pp} x^*_p)\\
		,0, \ldots, 0, \rho'_j,\ldots,\rho'_1\big).
		$
		If $x^* = \tau'_j$, for some $j \in [\p']$,
		a similar reasoning results in
		$
		x^* \in [\eta_{i} + \eta'_{j-1}, \eta_{i} + \eta'_{j} + \zeta_0] 
		$
		and
		the  characterization
        $$
		\big(x^*_1, x^*_2, \ldots, x^*_{\pp},\rho_1, \ldots, \rho_{i}, 0,
		\ldots, 0, \tau'_j - (\eta'_{j-1} +\eta_i + \sum_{p=1}^{\pp} x^*_p)
		,\rho'_{j-1},\ldots,\rho'_1\big)
		$$  for
		point $\x^*$ .
		% Hence, the population states defined in \Cref{def_population-equilibrium-point}  include all possible forms that an equilibrium point $\x^*$  can take.
		Similarly to the argument in \emph{Case 1}, 
		given \Cref{lemmaAppendix}, it is easy to show that if $x^* \notin \mathcal{C}$  and $\x^* \in \bm{\mathcal{Q}}$, then $\V_p(\x^*)$, $p \in [\pp]$, is a singleton.
		Hence, the $p^\text{th}$ element of $\x^*$, $p \in [\pp]$, is   zero (resp. $\theta_p$) if at the equilibrium point $\x^*$,   $u^\1(\x^*) <u^\2(\x^*)$  (resp. $u^\1(\x^*) >u^\2(\x^*)$) is satisfied.
		These conditions characterize a non-mixed nonconformist-driven for $x^* = \tau_i$, $i \in [\p]$ and non-mixed conformist-driven for $x^* = \tau'_j$, $j \in [\p'].$
		% The following cases may happen, depending on the value of $x^*$:
		%  Assume that   $x^*$ equals $\tau_p$ for some $p \in [\pp].$
		%   Two cases might happen:
		%   \emph{Case 1)} $u^\1(\tau_p) \neq  u^\2(\tau_p)$. 
		%   If $u^\1(\tau_p) > u^\2(\tau_p)$ (resp. $u^\1(\tau_p) < u^\2(\tau_p)$), then for some $k \in [\pp]$, $p \neq k$, $u^\1(\tau_p) = u^\1_k(\tau_p)$ and $u^\1_k(\tau_p) > u^\2_k(\tau_p).$ (resp. $u^\2(\tau_p) = u^\2_k(\tau_p)$ and $u^\2_k(\tau_p) > u^\1_k(\tau_p)$).
		%   Since $\x^*$ is an equilibrium, it holds that $x_{\pp+q} = \theta_{\pp+q}$ (resp. $x_{\pp+q} = 0$), and accordingly
		%    the maximum active $\1$-utility 
		%   is also greater than that of $\2$-utility (resp. the maximum active $\2$-utility  is also greater than that of $\1$-utility).
		% This implies that  $x^*_p$ equals $\theta_p$ (resp. $0$) for $p \in [\pp]$.	
		%  If $\tau_p = \tau_{i}$ and $i \in [\p]$, 
		%  then $\V_{\pp + i}(\x^*) = [-x^*_{\pp + i}, \theta_{\pp + i}-x^*_{\pp+i}]$ and    
		%  $0 \in \V_{\pp + i}(\x^*)$,
		%  the condition \eqref{eq:proof_equilibrium-1-0}  takes the form $x^* \in [\eta_{i-1} + \eta'_j + \zeta_0, \eta_i + \eta'_{j} + \zeta_0]$ (resp. $x^* \in [\eta_{i-1} + \eta_{j} , \eta_i + \eta'_{j} ]$); otherwise
		%    if $\tau_p = \tau'_j$ and $j \in [\p']$,  the condition \eqref{eq:proof_equilibrium-1-0}  gets the form $x^* \in [\eta_{i} + \eta'_{j-1} + \zeta_0, \eta_i + \eta'_{j} + \zeta_0]$ (resp. $x^* \in [\eta_{i} + \eta_{j-1} , \eta_i + \eta'_{j}] $).
		%   \emph{Case 2)} 
		Now assume that  $x^* \in \mathcal{T} \cap \mathcal{C}$.
		Let $x^* = \tau_i$, for $i \in [\p]$; the case $x^* = \tau'_j$, for $j \in [\p']$, can be handled similarly.
		The element $p$ of $\x^*$, $p \in ([\pp] \cup \{i+\pp\})$, can take any value in the interval $[0,\theta_p]$ as long as $\sum_{p=1}^\pp x_p^* + x_{i + \pp}^* = \tau_i - (\eta_{i-1} + \eta'_j)$.
		\tb{The remaining elements of  $\x^*$ can be characterized similar to the cases studied earlier.}
		This characterizes a mixed nonconformist-driven state.
		Hence, every equilibrium point $\x^*$ of \eqref{eq:type_mixed} is one of the states defined in \Cref{def_population-equilibrium-point}.
	\end{proofref}
	\begin{figure*}
		\begin{gather} \label{eq:populationProportion}
			\dot{\tilde{x}} \in \tilde{{\mathcal{{X}}}}(\tilde{x}, \x), \quad \tilde{x}(0) = \bm 1^\top \x(0), \\
			\scalebox{0.75}{$
				{\tilde{\mathcal{{X}}}}(\tilde{x}, \x) = \begin{cases}
					[\eta_{i-1} + \eta'_{j} + \zeta_0-\tilde{x}, \eta_{i} + \eta'_{j} + \zeta_0-\tilde{x}],  &\text{if }  \exists i (\tilde{x} = \tau_{i}),
					u^\1 (\tilde{x}) > u^\2(\tilde{x}), \text{ and } u^\1 (\x) > u^\2(\x),
					\\
					[\eta_{i} + \eta'_{j-1} + \zeta_0-\tilde{x}, \eta_{i} + \eta'_{j} + \zeta_0-\tilde{x}],  &\text{if }  \exists j (\tilde{x} = \tau_{j}'),
					u^\1 (\tilde{x}) > u^\2(\tilde{x}) , \text{ and } u^\1 (\x) > u^\2(\x),\\
					[\eta_{i-1} + \eta'_{j}-\tilde{x}, \eta_{i} + \eta'_{j}-\tilde{x}],  &\text{if }  \exists i (\tilde{x} = \tau_{i}),
					u^\1 (\tilde{x}) < u^\2(\tilde{x}) , \text{ and } u^\1 (\x) < u^\2(\x),
					\\
					[\eta_{i} + \eta'_{j-1}-\tilde{x}, \eta_{i} + \eta'_{j}-\tilde{x}],  &\text{if }  \exists j (\tilde{x} = \tau_{j}'),
					u^\1 (\tilde{x}) < u^\2(\tilde{x}), \text{ and } u^\1 (\x) < u^\2(\x),\\
					[\eta_{i-1} + \eta'_{j}-\tilde{x}, \eta_{i} + \eta'_{j}+ \zeta_0-\tilde{x}],  &\text{if }  \exists i (\tilde{x} = \tau_{i}) \text{ and }
					\Big( u^\1 (\tilde{x}) = u^\2(\tilde{x}) 
					\text{ or } \\ & \quad
					\operatorname{sgn}(u^\1(\bm 1^\top \x) -u^\2(\bm 1^\top \x)) \neq \operatorname{sgn}(u^\1(\x) -u^\2(\x)) \Big),
					\\
					[\eta_{i} + \eta'_{j-1}-\tilde{x}, \eta_{i} + \eta'_{j}+\zeta_0-\tilde{x}],  &\text{if }  \exists j (\tilde{x} = \tau_{j}') \text{ and }
					\Big( u^\1 (\tilde{x}) = u^\2(\tilde{x}) 
					\text{ or } \\ & \quad
					\operatorname{sgn}(u^\1(\bm 1^\top \x) -u^\2(\bm 1^\top \x)) \neq \operatorname{sgn}(u^\1(\x) -u^\2(\x)) \Big),\\
					[\eta_{i} + \eta'_{j} -\tilde{x}, \eta_{i} + \eta'_{j} + \zeta_0 -\tilde{x}],      &\text{if }  \exists i,j (\tilde{x} \in   (\max \{\tau'_{j}, \tau_{i+1}\}, \{\tau'_{j+1}, \tau_{i} \})) 
					\\ 
					& \text{and }\Big( u^\1 (\tilde{x}) = u^\2(\tilde{x})\text{ or } \operatorname{sgn}(u^\1(\bm 1^\top \x) -u^\2(\bm 1^\top \x)) \neq \operatorname{sgn}(u^\1(\x) -u^\2(\x)) \Big),
					\\   
					\{\eta_{i} + \eta'_{j}+\zeta_0 -\tilde{x}\}, &\text{if }  \exists i,j (\tilde{x} \in   (\max \{\tau'_{j}, \tau_{i+1}\}, \{\tau'_{j+1}, \tau_{i} \})),
					u^\1 (\tilde{x}) \!>\! u^\2(\tilde{x}), \text{ and }  u^\1 \!(\x) \!>\! u^\2(\x),
					\\
					\{\eta_{i} + \eta'_{j} -\tilde{x}\},    &\text{if }  \exists i,j (\tilde{x} \in   (\max \{\tau'_{j}, \tau_{i+1}\}, \{\tau'_{j+1}, \tau_{i} \})),
					u^\1 (\tilde{x})\! < \!u^\2(\tilde{x}), \text{ and }  u^\1 \!(\x) \!<
					\!u^\2(\x).
				\end{cases}
				$ } \nonumber
		\end{gather}
	\end{figure*}
	\begin{proofref}{\Cref{lem_abstractEquilibrium}}
		The lemma can be proved by applying the definition of an equilibrium point of a differential inclusion and following the steps in the proof of \cite[Lemma 2]{aghaeeyan2023discrete}, and omitted due to space limitation.
	\end{proofref}	
	\begin{proofref}{\Cref{lem_instability_of_mixed_clean_cut}}
		We prove the instability of the mixed clean-cut equilibrium points.
		The instability of the conformist-driven equilibrium points and the stability of the remaining equilibrium points can be shown using arguments similar to in the proof of Theorem 2 in \cite{aghaeeyan2023discrete}.
		Let $a_p$, $b_p$, $c_p$, and $d_p$ denote the entries of the payoff matrix associated with type $p$.
		It follows that $u^\1_{i(q)}(0)$ (resp. $u^\2_{i(q)}(0)$) is equal to $b_{i(q)}$ (resp. $d_{i(q)}$).
		Let $i = i(q)$ (resp. $j = j(q)$).
		We prove that the inequality $b_i < d_j$ implies the instability of the mixed clean-cut equilibrium point $q$.
		From $u^\1(q) = u^\1_i(q)$, $u^\2(q) = u^\2_j(q)$, and $u^\1(q) = u^\2(q)$, we conclude $(a_i - b_i)q + b_i = (c_j - d_j)q + d_j$.
		The relations $b_i < d_j$ and $q > 0$ imply $(-a_i - d_j +b_i + c_j)<0$. This results in
		$q = (b_i - d_j)/(-a_i - d_j +b_i + c_j)$. 
		We first show that the statement
		\begin{equation} \label{equation_instability_of_mixed_clean_cut}
			\exists \epsilon >0 \forall x \in (q- \epsilon, q ), \bigl( u^\1(x) < u^\2(x)\bigr)
		\end{equation} 
		implies the inequality  $b_i < d_j$.
		Under \eqref{equation_instability_of_mixed_clean_cut}, 
		$u^\1_i(q - \epsilon) < u^\2_j(q - \epsilon)$, for an small enough positive $\epsilon.$ 
		It follows that
		\begin{equation} \label{equation_instability_mixed_clean_cut2}
			(b_i - d_j) < (c_j - d_j + b_i - a_i)(q - \epsilon).
		\end{equation}
		If $b_i - d_j >0$, due to the fact that  $b_i - d_j$ and $(c_j - d_j + b_i - a_i)$ have the same sign, the equation \eqref{equation_instability_mixed_clean_cut2} implies
		$q < q - \epsilon$, which contradicts the positivity of $\epsilon.$
		If $b_i = d_j$, either $q$ is zero or the term $(c_j - d_j + b_i - a_i)$ is zero. 
		It can be easily shown that the origin is not an equilibrium, which implies that $q$ cannot be zero.
		In addition, according to \Cref{ass:ass3}, if $b_i = d_j$, then $i = j$. This implies that the point $q$ is equal to some threshold. 
		However, given the definition of a mixed clean-cut equilibrium point, $q$ does not equal any threshold.
		Therefore,  $b_i - d_j <0$.
		Similarly, it can be shown that the relation $b_i < d_j$ implies
		\eqref{equation_instability_of_mixed_clean_cut}.
		Given \eqref{equation_instability_of_mixed_clean_cut}, the rest of the proof can be shown using the definition of stability and follows from the proof of Theorem 2 in \cite{aghaeeyan2023discrete}.
		%  On the other hand, 
		% note that whenever $b_i <  d_j$,
		% $i$ (resp. $j$) is the type with the greatest $\1$-utility (resp. $\2$-utility) at $x = q$, and
		%  $b_i$ (resp. $d_j$) is $u^\1_i(0)$ (resp.  $u^\2_j(0)$).
	\end{proofref}
		\begin{proofref}{\Cref{basin-of-attraction-of-abstract-dynamics}}
		Given that the abstract dynamic is one-dimensional, along with the definition of basin of attraction, the instability of equilibrium points $q^*_{k,k+1}$, for $k \in [\mathtt{Q}-1]$, and the asymptotic stability of equilibrium points $q^*_k$, for $k\in [\mathtt{Q}]$, the proof is immediate.
	\end{proofref}
	\begin{proofref}{\Cref{lemma_correspondece_abstract_population_eq}}
		We prove for a nonconformist-driven equilibrium point.
		Other types of equilibria can be handled similarly.
		We show for the case where $q \in \mathcal{C}$; other cases can be handled similarly.
		Let $q = \tau_i$, for some $i \in [\p]$.
		It holds that $q \in (\eta_{i-1} + \eta_j', \eta_i + \eta_j' + \zeta_0)$.
		Any point $\bm q$ in the set $\bm{\mathcal{R}}^n_{i,j}$ satisfies the first part of the lemma for the abstract equilibrium point $q$. This is straightforward because the sum of all $2\pp$ elements in each member of $\bm{\mathcal{R}}^n_{i,j}$ equals $\tau_i.$
		Now, we move on to the second part. 
		Assume, on the contrary, there exists an equilibrium point $\bm q \in \bm{\mathcal{Q}}$ such that 
		$\sum_{p=1}^{2\pp}q_p$ is not an abstract equilibrium point.
		By \Cref{lem_equilibriumPointofContinuous}, any equilibrium point of the continuous-time dynamics is either of the seven categories defined in \Cref{def_population-equilibrium-point}.
		If $\bm q$ is a clean-cut (resp. imitation-free clean-cut), then $\sum_{p=1}^{2\pp}q_p$ equals $q= \eta'_j + \eta_i + \zeta_0$ (resp. $q= \eta'_j + \eta_i$) satisfying 
		$\max \{\tau_{i+1}, \tau'_{j} \}< q < \min \{\tau_{i}, \tau'_{j+1} \}$ and $u^\1(\eta_i + \eta_j'+ \zeta_0) > u^\2(\eta_i + \eta_j'+ \zeta_0)$ (resp.   $\max \{\tau_{i+1}, \tau'_{j} \}< {\eta}'_{j}+ {\eta}_{i}  < \min \{\tau_{i}, \tau'_{j+1} \}$ and $u^\1(\eta_i + \eta_j') < u^\2(\eta_i + \eta_j')$).
		This is, however, the characterization of a clean-cut (resp. imitation-free clean-cut) equilibrium point of the abstract dynamics.
		Hence, the point $\bm q$ cannot be a clean-cut (resp. imitation-free clean-cut) one.
		A similar line of reasoning for other types of continuous-time population dynamics leads to a contradiction.
		This completes the proof.
	\end{proofref}
	\subsection*{Supporting lemmas for the proof of \Cref{lem:attraction-population-single-equilibrium}}
We provide two additional lemmas which facilitate the proof of \Cref{lem:attraction-population-single-equilibrium}.  
	%Azadeh May 18th Given \eqref{eq:populationProportion} and \eqref{eq:abstract_het}, it can be easily shown that, upon starting from the same initial condition, as long as the relation
	%    \begin{equation} \label{equal_sgn}
		%       \scalebox{0.87}{$
			% \operatorname{sgn}\big(u^\1(\x(t))\!- \!u^\2(\x(t))\!\big) \!=\! \operatorname{sgn}\!\big(u^\1(\displaystyle\sum_{p=1}^{2\pp}x_p(t))\!-\! u^\2(\displaystyle\sum_{p=1}^{2\pp}x_p(t))\!\big)$}
		% \end{equation}
	%Azadeh May 18th  is true, the active cases in \eqref{eq:populationProportion} and \eqref{eq:abstract_het} are the same.
	To do so, we define the set-valued map $\mathcal{S}(\x)$ which maps each $\x$ in  $\bm{\mathcal{X}}_{ss}- \bm{\mathcal{Q}}$ to a subset $\mathcal{S}(\x)$ of $\{\pp+1,\pp+2,\ldots, 2\pp\}$ such that any $p \in \mathcal{S}(\x)$ satisfies  
	% \begin{enumerate}
		% \item
		$\tau_{p-\pp} \in \mathcal{C}$,
		% \item     $\smash{\tau_{\pp-p} \in (\min\{\sum_{p=1}^{2\pp}x_p), q \}, \max \{\sum_{p=1}^{2\pp}x_p), q \})}$, where $q \in \mathcal{Q}$ and $\mathcal{A}(q) \ni \sum_{p=1}^{2\pp}x_p$;
		% \item 
		$x_{p-\pp} \in \{0,\theta_{p-\pp}\}$, and
		% \item     
		$x_{p} \in \{0,\theta_{p}\}$, where 
		for $\p < j < \pp$, we define ${\tau}_j = {\tau}'_{\pp +1-j}$.
		Indeed, each element of the set $\mathcal{S}(\x)$ is the index of a best-responding type whose threshold belongs to $\mathcal{C}$ and whose associated components (including the corresponding imitative component) in $\x$ are at the boundary.
		These types may cause the maximum utility and \tb{maximum} active utility to diverge at population states $\x$ where $\bm{1}^\top \x$ equals the threshold associated with them.
		% \end{enumerate}
	\begin{lemmaAppendix} \label{lem_set_S}
		Consider the continuous-time population dynamics \eqref{eq:type_mixed}.
		Assume  the set $\mathcal{S}(\x(0))$ is empty, and 
		\begin{equation} \label{equal_sgn}
			\scalebox{1}{$
				\operatorname{sgn}\big(u^\1(\x(t))\!- \!u^\2(\x(t))\!\big) \!=\! \operatorname{sgn}\!\big(u^\1(\displaystyle\sum_{p=1}^{2\pp}x_p(t))\!-\! u^\2(\displaystyle\sum_{p=1}^{2\pp}x_p(t))\!\big)$}
		\end{equation}
		holds for $t = 0$.
		Then, under Assumptions \ref{ass:ass1}-\ref{ass:ass3},
		the equality \eqref{equal_sgn} holds for all $t\geq 0.$
	\end{lemmaAppendix}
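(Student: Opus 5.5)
We argue by contradiction, using continuity of solutions. Suppose \eqref{equal_sgn} holds at $t=0$ but fails at some later time. Let $t^* = \inf\{t \ge 0 \mid \eqref{equal_sgn} \text{ fails at } t\}$. The maximum utilities $u^\1(x), u^\2(x)$ are continuous in $x$. The maximum active utilities $u^\1(\x), u^\2(\x)$ differ from them only when some max-attaining type $p$ has its pair of components $(x_p, x_{\pp+p})$ on the boundary: both zero for $u^\1$, both full for $u^\2$. So a violation can only arise through one of two mechanisms. \emph{(a)} The pair $(x_p,x_{\pp+p})$ of the type attaining the maximum reaches, or stays on, the boundary while that type's utility is maximal. \emph{(b)} $\tilde x(t)=\bm 1^\top\x(t)$ crosses a point of $\mathcal{C}$, so that the sign of $u^\1(\tilde x)-u^\2(\tilde x)$ changes. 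We will show that neither mechanism can produce a mismatch at or right after $t^*$.

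\textbf{Step 1: the components of the max-attaining type are pushed off the boundary.} Assume on $[0,t^*)$ that $u^\1(\tilde x)>u^\2(\tilde x)$, and let $p$ attain $u^\1(\tilde x)$; the opposite sign is symmetric. Then $u^\1_p(\tilde x)>u^\2_p(\tilde x)$. Hence the non-imitative component $x_{\pp+p}$ has $\V_{\pp+p}=\{\theta_{\pp+p}-x_{\pp+p}\}$, whether type $p$ is a conformist or a nonconformist, because its best response is $\1$. By the matching sign, the imitative components also have $\V_p=\{\theta_p-x_p\}$. Thus $x_p+x_{\pp+p}$ is nondecreasing and strictly positive after any positive time. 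This mirrors the argument of \Cref{lemmaAppendix}. So $u^\1(\x)=u^\1(\tilde x)$ is preserved.

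\textbf{Handling $u^\2$ in Step 1.} For $u^\2(\x)$, the type $q$ attaining $u^\2(\tilde x)$ may lie on the boundary (components full). We must check that this does not flip the sign. Since $u^\1(\x)=u^\1(\tilde x)>u^\2(\tilde x)\ge u^\2(\x)$, the sign is unaffected. Hence, as long as $\tilde x\notin\mathcal{C}$, the signs agree.

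\textbf{Step 2: crossings of $\mathcal{C}$.} Now let $\tilde x(t^*)\in\mathcal{C}$.

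\emph{Case $\tilde x(t^*)\in\mathcal{C}\setminus\mathcal{T}$.} Let $p\neq q$ be the maximizers of $u^\1$ and $u^\2$ at this point. Assumption \ref{ass:ass3} excludes coincident lines, so near $t^*$ the sign of $u^\1(\tilde x)-u^\2(\tilde x)$ changes strictly as $\tilde x$ passes. Before $t^*$, Step 1 shows that both $p$ and $q$ have components strictly inside: type $p$'s pair is positive, and type $q$ has $u^\2_q<u^\1_q$ or is not full. After $t^*$, the pair of the newly dominant type is pushed inward by Step 1 again. The delicate sub-case is $\tilde x$ remaining at $\mathcal{C}$ on an interval. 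Then both signs are zero, provided neither active maximizer is on the boundary. Non-boundary status holds at $t^*$ by continuity, and persists because components vary continuously.

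\emph{Case $\tilde x(t^*)=\tau_p\in\mathcal{C}\cap\mathcal{T}$.} Assumption \ref{ass:ass2} says the maximizers are type $p$ itself. The only way the active and unconditional maxima diverge is if $x_{p-\pp}, x_p\in\{0,\theta\}$ simultaneously. This is exactly membership of $p$ in $\mathcal{S}(\x)$. We therefore show that $\mathcal{S}(\x(t))=\varnothing$ is invariant. Components that start in the interior of $[0,\theta]$ decay exponentially toward $0$ or $\theta$ (the right-hand sides are $\theta-x$ or $-x$), so they never reach the endpoints in finite time. Combined with $\mathcal{S}(\x(0))=\varnothing$, this means at least one of the two components of type $p$ stays in the interior for all finite $t$. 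Hence $u^\1(\x)=u^\1(\tilde x)$ and $u^\2(\x)=u^\2(\tilde x)$ at such points.

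\textbf{Main obstacle.} The hardest part is the fine bookkeeping in Step 2: $\tilde x$ may slide along a threshold or along a point of $\mathcal{C}$, where $\V_p$ is set-valued. We must show that no selection drives a relevant component pair to the boundary in finite time. Our first attempt is the exponential-decay bound $x_p(t)\ge x_p(0)e^{-t}$, valid for every selection because $\V_p\subseteq[-x_p,\theta_p-x_p]$, together with the symmetric bound for $\theta_p-x_p$. This shows that interior components stay interior for all finite $t$. That reduces everything to components starting on the boundary, which Step 1 and the emptiness of $\mathcal{S}(\x(0))$ control. Closing the contradiction at $t^*$ then yields \eqref{equal_sgn} for all $t\ge0$.
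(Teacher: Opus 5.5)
The paper omits its proof of this lemma, so there is no argument to compare against line by line. Your ingredients are the right ones and match what the paper uses elsewhere:
\begin{itemize}
\item the push-off-the-boundary mechanism of Lemma~\ref{lemmaAppendix}: a best responder of type $p$ whose own $\1$-utility beats its $\2$-utility is driven toward $\theta_{\pp+p}$;
\item the selection-independent bounds $x_p(t)\ge x_p(0)e^{-t}$ and $\theta_p-x_p(t)\ge(\theta_p-x_p(0))e^{-t}$;
\item Assumption~\ref{ass:ass2} combined with $\mathcal{S}(\x(0))=\varnothing$ at points of $\mathcal{C}\cap\mathcal{T}$.
\end{itemize}
Your treatment of the case $\tilde x\in\mathcal{C}\cap\mathcal{T}$ is correct. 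Two steps, however, do not hold as written.

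\textbf{Step~1.} You fix one maximizer $p$ on all of $[0,t^*)$ and take the sign to be constant there. As $\tilde x$ moves, the index attaining $\max_i u^\1_i(\tilde x)$ changes. A type $p'$ that takes over at time $s$ may have had $x_{p'}+x_{\pp+p'}=0$ up to $s$, and monotonicity of $x_p+x_{\pp+p}$ for the old $p$ says nothing about $p'$.

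\textbf{The case $\tilde x(t^*)\in\mathcal{C}\setminus\mathcal{T}$.} Your argument for the $\2$-maximizer $q$ is wrong. Near $\tilde x(t^*)$ one has $u^\2_q>u^\1_q$, not $u^\2_q<u^\1_q$. Step~1 controls only the maximizer on the dominant side, so it says nothing about $q$ before $t^*$. ``Non-boundary status holds at $t^*$ by continuity'' is not a valid reason either, since a limit of interior points may lie on the boundary. Also, Assumption~\ref{ass:ass3} does not force $u^\1-u^\2$ to change sign at points of $\mathcal{C}$. It only makes $\mathcal{C}$ finite, and the difference can touch zero without crossing. You do not need this claim.

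\textbf{The repair.} Both gaps close with one time-local fact. Take $t>0$ and any type $p\in[\pp]$ with $u^\1_p(\tilde x(t))>u^\2_p(\tilde x(t))$. By continuity of $\tilde x$, this strict inequality holds on some $(t_0,t]$ with $0\le t_0<t$. So $\dot x_{\pp+p}=\theta_{\pp+p}-x_{\pp+p}$ there, and hence $x_{\pp+p}(t)>0$. Symmetrically, $u^\2_p(\tilde x(t))>u^\1_p(\tilde x(t))$ gives $x_{\pp+p}(t)<\theta_{\pp+p}$.

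Now apply this at each $t>0$:
\begin{itemize}
\item If $\tilde x(t)\notin\mathcal{C}$, apply it to a maximizer on the dominant side, exactly as in your ``Handling $u^\2$'' paragraph.
\item If $\tilde x(t)\in\mathcal{C}\setminus\mathcal{T}$, any maximizer $p$ of $u^\1$ satisfies $u^\1_p(\tilde x(t))=u^\2(\tilde x(t))\ge u^\2_p(\tilde x(t))$. The inequality is strict because $\tilde x(t)\neq\tau_p$. The same holds for any maximizer $q$ of $u^\2$. So both maximizers are active and both signs vanish.
\end{itemize}
Together with your $\mathcal{C}\cap\mathcal{T}$ argument, this proves \eqref{equal_sgn} directly at every $t>0$. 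The infimum/contradiction framework is then unnecessary. The hypothesis that \eqref{equal_sgn} holds is used only at $t=0$, and $\mathcal{S}(\x(0))=\varnothing$ is used only at times with $\tilde x(t)\in\mathcal{C}\cap\mathcal{T}$.
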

	The proof is omitted due to space limitations.
	\begin{lemmaAppendix} \label{lem_convergence_of_population_proportion}
		Assume the initial condition of the continuous-time population dynamics satisfies 
		$\bm 1^\top \x(0) \in \mathcal{A}(q^*_k)$ for some $k \in [\mathtt{Q}]$.
		Then,
		under Assumptions \ref{ass:ass1}-\ref{ass:ass3},
		the population proportion of $\1$-players converges to $q^*_k.$
	\end{lemmaAppendix}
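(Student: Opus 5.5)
We will prove \Cref{lem_convergence_of_population_proportion} by comparing the scalar trajectory $\tilde{x}(t)=\bm 1^\top\x(t)$ with solutions of the abstract dynamics \eqref{eq:abstract_het}. By \Cref{lem_appendix_2}, $\tilde{x}$ is a solution of \eqref{eq:populationProportion}. Inspecting \eqref{eq:populationProportion}, whenever the sign equality \eqref{equal_sgn} holds, the active case coincides with the corresponding case of \eqref{eq:abstract_het}, so $\tilde{\mathcal{X}}(\tilde{x},\x)=\mathcal{X}(\tilde{x})$. When it fails, $\tilde{\mathcal{X}}(\tilde{x},\x)$ is the \emph{enlarged} interval $[\eta_i+\eta'_j-\tilde{x},\,\eta_i+\eta'_j+\zeta_0-\tilde{x}]$ (or its threshold analogue). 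The plan therefore has two regimes: a good regime where \eqref{equal_sgn} holds, and a bad regime where it fails.

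\emph{Good regime.} Suppose $\mathcal{S}(\x(0))=\varnothing$ and \eqref{equal_sgn} holds at $t=0$. By \Cref{lem_set_S}, \eqref{equal_sgn} holds for all $t\ge 0$. Then $\tilde{x}(\cdot)$ is a solution of the abstract dynamics starting at $\bm 1^\top\x(0)\in\mathcal{A}(q^*_k)$. By \Cref{basin-of-attraction-of-abstract-dynamics}, every abstract solution from this point converges to $q^*_k$, which settles this case.

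\emph{Bad regime.} In the remaining cases the sign mismatch, or a nonempty $\mathcal{S}$, is caused by boundary components: some type $p$ attains the maximal utility but $x_p+x_{\pp+p}\in\{0,\theta_p+\theta_{\pp+p}\}$. We will argue that this situation is transient. Take the case $u^\1(\tilde{x})>u^\2(\tilde{x})$ with $u^\1(\x)\le u^\2(\x)$, and let $p$ be the maximizing $\1$-type. The argument of \Cref{lemmaAppendix} gives $u^\1_p(\tilde{x})>u^\2_p(\tilde{x})$. Hence, at non-threshold points, $\dot{x}_{\pp+p}=\theta_{\pp+p}-x_{\pp+p}>0$ (using $x_{\pp+p}=0$), so $x_{\pp+p}$ becomes positive instantly. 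This restores $u^\1(\x(t))=u^\1(\tilde{x}(t))$ for all small $t>0$.

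The same reasoning applies to the $\2$ side and to the types in $\mathcal{S}(\x(0))$ whose threshold lies in $\mathcal{C}$. For these types we would use Assumptions \ref{ass:ass2}--\ref{ass:ass3} to show that if $\tilde{x}$ sits at such a threshold with boundary components, then either $\tilde{x}$ leaves the threshold (after which $\mathcal{S}$ is empty) or the configuration is already an equilibrium. The upshot is that there is an arbitrarily small $t_0\ge 0$ with $\mathcal{S}(\x(t_0))=\varnothing$ and \eqref{equal_sgn} holding at $t_0$. Since $\tilde{x}$ is Lipschitz with bounded derivative, $\tilde{x}(t_0)$ stays within the open basin $\mathcal{A}(q^*_k)$. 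Restarting the good-regime argument at $t_0$ then finishes the proof.

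\emph{Main obstacle.} The hardest step is the bad regime when $\tilde{x}(0)$ equals a threshold $\tau_{p-\pp}\in\mathcal{C}$ and the enlarged interval allows $\tilde{x}$ to stay stuck or move in either direction. Here I would first check whether that threshold is itself $q^*_k$ or an unstable point $q^*_{k,k+1}$; the latter is excluded since $\tilde{x}(0)$ lies in an open basin except at the endpoints $0,1$. Otherwise I would show that both endpoints of the enlarged interval point toward $q^*_k$, using the one-dimensional ordering of equilibria (\cite[Lemma A1]{aghaeeyan2023discrete}). Then any selection moves $\tilde{x}$ monotonically toward $q^*_k$ until the good regime is entered.
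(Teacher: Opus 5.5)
\textbf{There is a genuine gap.} Your architecture matches the paper's: compare $\tilde x=\bm 1^\top\x$ with the abstract dynamics, use \Cref{lem_set_S} and \Cref{basin-of-attraction-of-abstract-dynamics} in the good case, push boundary components off instantly, and reset the abstract state. The gap is your ``upshot'' that some arbitrarily small $t_0\ge 0$ has $\mathcal{S}(\x(t_0))=\varnothing$. This is false, and the restart via \Cref{lem_set_S} fails with it.

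Membership $p\in\mathcal{S}(\x)$ only requires $\tau_{p-\pp}\in\mathcal{C}$ and $x_{p-\pp},x_p$ on the boundary. It says nothing about whether $\tilde x$ is at $\tau_{p-\pp}$, and by itself it produces no sign mismatch. For example, take a conformist index $p$ with $\tau_{p-\pp}\in\mathcal{C}$, $x_p(0)=x_{p-\pp}(0)=0$, and $\tilde x(0)<\tau_{p-\pp}<q^*_k$, with $u^\1<u^\2$ (both maximal and maximal active) along the way. Then $\dot x_p=-x_p=0$ and $\dot x_{p-\pp}=-x_{p-\pp}=0$. So $p\in\mathcal{S}(\x(t))$ at least until $\tilde x$ reaches $\tau_{p-\pp}$, which takes time at least $\tau_{p-\pp}-\tilde x(0)$ because $\dot{\tilde x}\in[-\tilde x,1-\tilde x]$. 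Types whose thresholds are never reached can stay in $\mathcal{S}$ forever.

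The mismatch you thought you had removed also reappears later. At and near $\tau_{p-\pp}$ the maximal $\1$-utility is (generically) $u^\1_{p-\pp}$, by Assumption \ref{ass:ass2}. That utility is inactive because $x_p+x_{p-\pp}=0$. So just after $\tilde x$ crosses $\tau_{p-\pp}$, \eqref{equal_sgn} can fail and $\tilde{\mathcal{X}}$ becomes the enlarged interval. Your plan treats a threshold in $\mathcal{C}$ only at $t=0$, not at such later crossing times.

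\textbf{What is missing} is the mechanism of the paper's Case~1, Part~3. Let $\mathcal{I}$ be the set of times at which $\tilde x$ hits $\tau_{p-\pp}$ for some $p\in\mathcal{S}(\x(0))$.
\begin{enumerate}
\item If $\mathcal{I}=\varnothing$, then \eqref{equal_sgn} holds for all time even though $\mathcal{S}\neq\varnothing$, and the abstract comparison applies directly. So you must replace the hypothesis $\mathcal{S}=\varnothing$ by this weaker invariant.
\item Otherwise, between consecutive hitting times $\tilde x$ follows an abstract solution and moves monotonically toward $q^*_k$.
\item At each hitting time of a threshold $\tau\neq q^*_k$, the point $\tau$ is not an abstract equilibrium. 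Hence $\mathcal{X}(\tau)=\tilde{\mathcal{X}}(\tau)$ is bounded away from $0$, with sign pointing toward $q^*_k$. So $\tilde x$ passes through, and the best-responder component $x_p$ leaves the boundary immediately.
\item Then \eqref{equal_sgn} is restored after an arbitrarily short time, and the abstract state is reset there.
\item Monotonicity makes $\mathcal{I}$ finite, so after its last element the comparison holds for all time.
\end{enumerate}
Your ``main obstacle'' argument is exactly the step needed at each hitting time. It has to be applied at every crossing, not only at $t=0$.
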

	
	To prove this, we make a connection between the evolution of the 
	abstract dynamics \eqref{eq:abstract_het} and that of the population proportion of $\1$-players \eqref{eq:populationProportion} when both dynamics share the same initial condition $\bm 1^\top \x_0 \in \mathcal{A}(q^*_k)$ for some $k \in [\mathtt{Q}]$.
	Whether the trajectories of the two scalar dynamics \eqref{eq:abstract_het} and \eqref{eq:populationProportion} coincide for all $t>0$ depends on whether the equality \eqref{equal_sgn} holds at $t=0,$ and if the initial condition lies on the boundary.
	Even if they diverge at some point, we show that the active cases for  \eqref{eq:abstract_het} and \eqref{eq:populationProportion}, at some finite time, will be and remain the same.
	At that point, we reset the abstract state to match the current population proportion of $\1$-players. 
	Since the abstract dynamics is only an auxiliary tool, this reset \tb{is valid and} helps us show that the population proportion converges to $q^*_k$ as long as $\bm 1^\top \x_0 \in \mathcal{A}(q^*_k)$.
	Note that the population dynamics follow some selection of the differential inclusion in \eqref{eq:populationProportion}. 
	% So, when the trajectories align, we implicitly treat the abstract state as tracking the evolution of the population proportion.
	Accordingly, in the following proof, when the sets of trajectories of the two systems align, we implicitly study the selection of the abstract dynamics that matches the one governing the population proportion of $\1$-players.
	In the following proof, the notation $t^-$  (resp. $t^+$) indicates the left-hand (resp. right-hand) limit point of the real line at $t$, or, intuitively, the time instant immediately before (resp. after) time $t$. 
	
%	We prove the case where the point $q^*_k$  is a nonconformist-driven equilibrium point of the abstract dynamics and, hence
%	equals some nonconformist threshold $\tau_i$, $i \in [\p].$
%	Other cases can be handled similarly.
	Consider an initial condition $\x_0$ satisfying $\bm 1^\top \x_0 \in \mathcal{A}(q^*_k)$. 
	We further assume that $k \notin \{1,\mathtt{Q}\}$;
	a similar reasoning can be applied to the case  $k \in \{1,\mathtt{Q}\}$.
	By \Cref{basin-of-attraction-of-abstract-dynamics}, 
	$\mathcal{A}(q^*_k) =(q^*_{k-1,k}, q^*_{k,k+1}).$
	Now, two cases may happen based on the initial condition $\x_0$: \emph{Case 1}, where
	\eqref{equal_sgn}
	holds for $t=0$, and 
	\emph{Case 2} otherwise.
	\emph{Case 1}.
	Here, if the abstract dynamics \eqref{eq:abstract_het} start from the same initial condition 
	as the dynamics \eqref{eq:populationProportion} start, i.e. $\hat{x}(0) = \bm 1^\top \x_0$,  the active cases in both systems are the same.
	Part 1) If $\x_0 \notin \partial \bm{\mathcal{X}}_{ss}$,
	for any finite time $t>0$, we have $\x(t) \notin \partial \bm{\mathcal{X}}_{ss}$ and hence
	 \eqref{equal_sgn} holds.
	This yields the equivalency of the sets of solution trajectories of both systems \eqref{eq:abstract_het} and \eqref{eq:populationProportion}.
	On the other hand, given \Cref{basin-of-attraction-of-abstract-dynamics}, if the initial condition satisfies $\hat{x}(0) \in \mathcal{A}(q^*_k)$, the abstract dynamics approach the point $q^*_k$.
	This results in the convergence of the population proportion of $\1$-players to point $q^*_k.$
	Part 2) \tb{$\x_0 \in \partial \bm{\mathcal{X}}_{ss}$ and} $\mathcal{S}(\x(0)) = \varnothing$.
	Given \Cref{lem_set_S},
	 \eqref{equal_sgn} holds for all time and, accordingly, the reasoning provided in Part 1 is applicable.
	Part 3) \tb{$\x_0 \in \partial \bm{\mathcal{X}}_{ss}$ and}  $\mathcal{S}(\x(0)) \neq  \varnothing$. 
	%  We disregard those elements $p$ of the set $\mathcal{S}$ satisfying
	% $\smash{\tau_{\pp-p} \notin (\min\{\sum_{p=1}^{2\pp}x_p), q^*_k \}, \max \{\sum_{p=1}^{2\pp}x_p), q^*_k\})}$,
	% as the value of $\sum_{p=1}^{2\pp}x_p(t)$ does not reach  the threshold $\tau_{\pp-p}$ and, consequently, the value of $u^\1_{\pp-p}(x)$ (resp. $u^\2_{\pp-p}(x)$) remains less than that of the maximum $\1$-utility (resp. $\2$-utility) if $x_p(0) =0$ (resp. $x_p(0) =\theta_p$) for all time.
	% % the active case for $\V_{\pp + p}(\x)$ in \eqref{eq:type_mixed} remains the same for all time.
	% We put the remaining elements of the set $\mathcal{S}$, i.e., those $p \in \mathcal{S}$ satisfying 
	%  $\smash{\tau_{\pp-p} \in (\min\{\sum_{p=1}^{2\pp}x_p), q^*_k \}, \max \{\sum_{p=1}^{2\pp}x_p), q^*_k\})}$, in a separate set denoted by $\mathcal{S}_1$
	Let 
	% \begin{equation*}
		%$\mathcal{I} \!= \! \left\{\!t\! \mid \! \sum_{k=1}%^{2\pp}\!x_k(t) \!= \!\tau_{p - \pp} \text{ for some  $p\! \in \!\mathcal{S}(\x(0))\!$ and  $t\!<\!%%\infty$\!}
		%\right\},$
        $\mathcal{I} \!= \! \left\{\!t>0\! \mid \! \exists M>0,p\! \in \!\mathcal{S}(\x(0)) \big( \sum_{k=1}^{2\pp}\!x_k(t) \!= \!\tau_{p - \pp} ,  t\!<\!M \big)
		\right\}$
		% \end{equation*}
	where the evolution of $x_k$, $k \in [2\pp]$, is governed by \eqref{eq:type_mixed}. 
	If $\mathcal{I} = \varnothing$, then, for all time, the inactive utilities associated with at-the-boundary elements of the population state are strictly less than the corresponding active utilities. 
	Accordingly, the reasoning provided in Part 1 is valid. 
	Now, let
	$\mathcal{I} \neq \varnothing.$
	Let $t_1 = \min \{t \mid t \in \mathcal{I} \}$ and denote by $p_1$ the 
	member of the set $\mathcal{S}(\x(0))$ associated with $t_1$,
	that is, $ \sum_{k=1}^{2\pp}x_k(t_1) =\tau_{p_1 - \pp}.$
    We assume $\bm 1^\top \x(0) < q^*_k$ and analyze the case where $\tau_{p_1-\pp}$ is the threshold of a conformist agent, i.e.,
	$u^\1(\sum_{k=1}^{2\pp} x_k(t_1-\epsilon))<u^\2(\sum_{k=1}^{2\pp} x_k(t_1-\epsilon))$, for a small enough positive $\epsilon.$
	Other cases can be handled similarly.
	Following the argument provided above, it is immediate that for $t < t_1$, the set of trajectories of the abstract dynamics and population proportion of $\1$-players are equal.
	If $x_{p_1}(t_1^-) \in (0, \theta_{p_1})$ or $x_{{p_1}-\pp}(t_1^-) \in (0, \theta_{{p_1}-\pp})$,  both $u^\1_{{p_1-\pp}}$ and $u^\2_{{p_1 -\pp}}$ are active and \tb{included in computing} the maximum active utilities.
	It follows that the evolution of abstract state \eqref{eq:abstract_het} and the population proportion of $\1$-players  \eqref{eq:populationProportion} remain the same for $t \in (t_1 - \epsilon_1, t_1 + \epsilon_1)$ for some small enough value $\epsilon_1$.
	Now, consider the case where both $x_{p_1}(t_1^-)$ and $x_{{p_1}-\pp}(t_1^-)$ are zero.
	At the state $\x(t_1)$,
	zero is an element of the set-valued maps $\V_{p_1}(\x(t_1))$ and $\V_{ {p_1} - \pp}(\x(t_1))$.
	For all possible values of  $\dot{x}_{p_1}(t_1) \in \V_{p_1}(\x(t_1))$ and  $\dot{x}_{{p_1}-\pp}(t_1) \in \V_{ {p_1} - \pp}(\x(t_1))$ but zero, 
	the evolution of the dynamics \eqref{eq:abstract_het} and \eqref{eq:populationProportion} remain the same, and the arguments provided for previous cases hold valid.
	Otherwise, when  both $\dot{x}_{p_1}$ and $\dot{x}_{p_1-\pp}$ are zero at $t = t_1$,
	the active cases for the two dynamics
	\eqref{eq:abstract_het} and \eqref{eq:populationProportion}
	differ.
    The point $\tau_{p_1-\pp} \neq q^*_k$ is not an equilibrium  of \eqref{eq:abstract_het}--point $q^*_k$ is the only equilibrium 
	on the interval $\mathcal{A}(q^*_k)$)--and, in turn, $0 \notin \mathcal{X}(\tau_{p_1-\pp})$.
	In addition, 
	$\bm 1^\top \x(0) < q^*_k$ results in
	$\min \big(y \in \mathcal{X}(\tau_{p_1-\pp})\big) > 0$ and hence $\min \big(y \in \mathcal{\tilde{X}}(\tau_{p_1-\pp})\big) > 0$.
	It follows that at time $t = t_1^+$ both abstract state and $\sum_{k=1}^{2\pp}x_k$ are greater than $\tau_{{p_1-\pp}}.$
	Accordingly, 
	the dynamics for \eqref{eq:populationProportion} at $t = t_1^+$ will be
	$[\eta_{i} + \eta'_{j} -\sum_{k=1}^{2\pp}x_k, \eta_{i} + \eta'_{j} + \zeta_0 -\sum_{k=1}^{2\pp}x_k]$, whereas for
	\eqref{eq:abstract_het} we have
	$\{\eta_{i} + \eta'_{j} + \zeta_0 -\hat{x}\}$ with $j = 2\pp+1-p_1$ and $i = \max \{k \leq \p \mid \tau_k > \tau'_j\}$.
	In view of (\ref{eq:type_mixed}d),  $\sum_{p=1}^{2\pp}x_p(t_1^+) >\tau_{p_{1}-\pp}$, we have $\dot{x}_{ {p_1}}(t_1^+) = \theta_{ {p_1} } - x_{ {p_1}}(t_1^+)$ which yields $x_{ {p_1}}(t_1^+) >0$.
	This implies that \eqref{equal_sgn} for $t = t_1 + \varepsilon_1$, where $\varepsilon_1$ is small enough, holds.
	Now, we reset the abstract dynamics by setting the abstract state at time point $t_1 + \varepsilon_1$  equal to $\sum_{k=1}^{2\pp} x_k(t_1 + \varepsilon_1)$.
	Then, the evolution of the abstract dynamics and the population proportion of $\1$-players are equivalent (\eqref{eq:abstract_het} and \eqref{eq:populationProportion}).
	It is straightforward to show that 
	$\sum_{k=1}^{2\pp} x_k(t_1 + \varepsilon_1)\in \mathcal{A}(q^*_k)$
	and, accordingly, both move towards $q^*$ up until the
	population proportion of $\1$-players
	reaches the second smallest member of the set $\mathcal{I}$.
	We follow the preceding reasoning for the second smallest member of the set $\mathcal{I}$ and for other members of the set $\mathcal{I}$.
	Let $t^*$ be the greatest member of the set  $\mathcal{I}$.
	By resetting the abstract dynamics and setting its state equal to $\sum_{k=1}^{2\pp}x_k(t^*+\varepsilon)$, for some small enough positive value $\varepsilon,$
	the evolution of these two dynamics remains the same for all finite time, and
	the rest of the proof follows that of the case $\x(0) \notin \bm{\mathcal{X}}_{ss}.$
	Now we study \emph{Case 2},
	where \eqref{equal_sgn} does not hold at $t = 0$.
	Without loss of generality, let $r\in [\pp]$ and $q \in [\pp]$ satisfy
	$u^\1(\sum_{k=1}^{2\pp}x_k(0)) = u^\1_q(\sum_{k=1}^{2\pp}x_k(0))$ and $u^\2(\sum_{k=1}^{2\pp}x_k(0)) = u^\2_r(\sum_{k=1}^{2\pp}x_k(0))$, respectively.
	First, assume $\sum_{k=1}^{2\pp}x_k(0) \notin \mathcal{T}$.
	In this case, while
	$u^\1_q(\sum_{k=1}^{2\pp}x_k(0)) > u^\2_q(\sum_{k=1}^{2\pp}x_k(0))$ and 
	$u^\2_r(\sum_{k=1}^{2\pp}x_k(0)) > u^\1_r(\sum_{k=1}^{2\pp}x_k(0))$ hold,
	either $x_{\pp + q}(0) = 0$ or 
	$x_{\pp + r}(0) = \theta_{\pp + r}$ is true.
	In the former case, 
	the set-valued map $\V_{\pp+q}(\x(0))$ equals $\{\theta_{\pp+q}-x_{\pp+q} \}$ and, in turn,
	$\dot{x}_{\pp + q}(0) >0$. 
	Similarly, in the latter case, we have
	$\dot{x}_{\pp + r}(0) <0$.
	Then, for some small enough positive value $\epsilon$, the elements $x_{\pp + q}(\epsilon)$ and $x_{\pp + r}(\epsilon)$ will not be \tb{equal to $0$ and $\theta_{\pp + r}$, respectively}, and the equation \eqref{equal_sgn} holds at $t=\epsilon.$
	Hence, by resetting the abstract dynamics at $t=\epsilon$ and setting the abstract state equal 
	to  $\sum_{p=1}^{2\pp}x_p(\epsilon)$, the evolution of the abstract state and the population proportion of $\1$-player will be the same for \tb{some} $t > \epsilon$. 
	On the other hand, given that
	\emph{(i)} $\epsilon$ can be arbitrarily small,  \emph{(ii)} the solution to  \eqref{eq:populationProportion} is continuous, and \emph{(iii)}
	$\bm 1^\top \x_0 \in (q^*_{k-1,k}, q^*_{k,k+1});$ \tb{it holds that}
	$\bm 1^\top \x(\epsilon) \in (q^*_{k-1,k}, q^*_{k,k+1}).$
	Thus, the rest of the proof is similar to \emph{Case 1}.
	Now assume $\sum_{k=1}^{2\pp}x_k(0) \in \mathcal{T}$ which implies 
	$\sum_{k=1}^{2\pp}x_k(0) = \tau_p$ for some $p \in [\pp].$
	If $p \notin \{r,q\}$, the arguments provided for $\sum_{k=1}^{2\pp}x_k(0) \notin \mathcal{T}$ is valid.
	Otherwise, the relation $u^\1_p(\tau_p) = u^\2_p(\tau_p)$ implies $q =p$, $r = p$, and
	$\tau_p \in \mathcal{C}$.
	Since the equation \eqref{equal_sgn} does not hold,
	either $x_{p+\pp} =0$ or $x_{p+\pp} =\theta_{p+\pp}$ is true.
	We focus on the case where $x_{p+\pp} =0$, $\bm 1^\top \x(0) < q^*_k$, and  
	$\tau_p$ is the threshold of a conformist, i.e., $p > \pp + \p$; other cases can be handled similarly.
	Accordingly, in view of \eqref{eq:abstract_het} and \eqref{eq:populationProportion}, and the fact that $\tau_p$ is not an equilibrium point for the abstract dynamics, it is easy to conclude that the values in the sets $\mathcal{X}$ and $\tilde{\mathcal{X}}$ at $\tau_p$ are bounded away from zero and are positive.
	This implies that both  abstract state and  population proportion of $\1$-players are greater than $\tau_p$ at $t= 0^+$.
	The active case for \eqref{eq:populationProportion} at $t = 0^+$ will be
	$[\eta_{i} + \eta'_{j} -\sum_{k=1}^{2\pp}x_k, \eta_{i} + \eta'_{j} + \zeta_0 -\sum_{k=1}^{2\pp}x_k]$, whereas the active case for
	\eqref{eq:abstract_het} is
	$\{\eta_{i} + \eta'_{j} + \zeta_0 -\hat{x}\}$ with 
	with $j = \pp+1-p$ and $i = \max \{k \leq \p \mid \tau_k > \tau'_j\}$.
	In view of (\ref{eq:type_mixed}d), $\dot{x}_{ {p} - \pp}(t_1^+) = \theta_{ {p} - \pp } - x_{ {p} - \pp}(t_1^+)$ yielding $x_{ {p} - \pp}(t_1^+) >0$.
	This implies that the relation \eqref{equal_sgn} for $t =  \varepsilon$, where $\varepsilon$ is a small enough positive value, holds.
	Now, we reset the abstract dynamics and set the abstract state at time point $\varepsilon_1$  equal to $\sum_{k=1}^{2\pp} x_k( \varepsilon_1)$, where $\varepsilon_1$ is a small enough positive value.
	Then, the evolution of the abstract dynamics and the population proportion of $\1$-players are equivalent (\eqref{eq:abstract_het} and \eqref{eq:populationProportion}).
	If $\mathcal{S}(\x(\varepsilon_1))$ is empty, then the set of trajectories of the abstract dynamics and that of \eqref{eq:populationProportion} are the same and, accordingly, the population proportion of $\1$-players converges to the point $q^*_k$.
	If $\mathcal{S}(\x(\varepsilon_1))$ is not empty, then a similar approach adopted in Case 1-Part 3 can be used to handle the rest of the proof.
	This completes the proof.
\end{proofref}
\begin{proofref}{\Cref{lem:attraction-population-single-equilibrium}}
	%The proof is based on the analysis of the evolution of the population state $\x$ when the population proportion of strategy-$\1$ players is in the vicinity of  $q^*_k$.
	% In the proof, we refer to \Cref{eq:populationProportion} which captures the evolution of $\sum_{p=1}^{2\pp}x_p(t)$.
	Item 1).
	Let $q^*_k = \bm 1^\top \bm q$ denote the corresponding abstract equilibrium point to $\bm q \in \bm{\mathcal{Q}},$
	the existence of which is guaranteed by \Cref{lemma_correspondece_abstract_population_eq}.
	We prove the case where the point $\bm q$ is a non-mixed nonconformist-driven equilibrium point of \eqref{eq:type_mixed}; other cases can be handled similarly.
	The value of $q^*_k$ is then equal to some nonconformist threshold $\tau_i$, $i \in [\p].$
	Consider an initial condition $\x_0$ satisfying
	$\bm 1^\top \x_0 \in \mathcal{A}(q^*_k)$. 
	We further assume that $k \notin \{1,\mathtt{Q}\}$;
	a similar reasoning can be applied to the case  $k \in \{1,\mathtt{Q}\}$.
	By \Cref{lem_convergence_of_population_proportion}, it can be easily shown that the population proportion of $\1$-player approaches an $\epsilon$-neighborhood of $q^*_k$ in finite time and never leaves it.
	We take $\epsilon$ to be small enough such that in the $\epsilon$-neighborhood of the point $q^*_k$, except for the $(\pp +i)^{\text{th}}$ element of vector $\x$, which corresponds to nonconformist type-$i$, the active cases in \eqref{eq:type_mixed} do not change, i.e.,
	$\big((q^*_k-\epsilon, q^*_k+\epsilon) - \{q^*_k\}\big) \cap \big(\mathcal{C} \cup \mathcal{T} \big) = \varnothing.$ 
	Then, the set-valued map $\V_p$ for $p \in (\{\pp+1,\ldots,2\pp\}-\{\pp+i\})$  is either equal to $\{\theta_p - x_p\}$ or $\{-x_p\}$, and the state $x_p$ converges exponentially to $\theta_p$ or $0$, respectively.
	%Similar to the proof of \Cref{lem_equilibriumPointofContinuous}, it is straightforward to show that
	Also, the final asymptotic value of $x_p$ is the same as the $(\pp + p)^\text{th}$ component of the equilibrium point $\bm q.$
	As for the imitative types, depending on the sign of $u^\1(q^*_k) - u^\2(q^*_k)$, for all $p \in [\pp]$ either $x_p$ converges to $\theta_p$ or $0$, which is the same as the $p^\text{th}$ element of vector $\bm q.
	$
	Recalling that for any $y, z \in \mathbb{R}$ $\vert y + z\vert \leq \vert y \vert + \vert z \vert $ and given $
	\vert x_{\pp + i}(t) - q_{\pp + i} \vert =\vert ( \sum_{p=1}^{2\pp}x_p(t) - q^*_k) -  \sum_{l \neq (\pp +i)} (x_l(t) - q_l) \vert,
	$
	we have 
	%\begin{equation} \label{proof_1}
	%\scalebox{0.9}{$
		$\vert x_{\pp + i}(t)\! -\! q_{\pp + i} \vert \leq \vert ( \sum_{p=1}^{2\pp}x_p(t) - q^*_k) \vert + \vert \sum_{l \neq (\pp + i)} (x_l(t) - q_l) \vert.$
		%$}
	%\end{equation}
	Convergence of both terms on the right-hand side of the preceding relation to zero implies $x_{\pp +i}(t) \to (q^*_k - \eta_{i-1} - \eta_{j}')$ if $u^\1(q^*_k) < u^\2(q^*_k)$
	and
	$x_{\pp + i}(t) \to (q^*_k - \eta_{i-1} - \eta_{j}' - \zeta_0)$ if $u^\1(q^*_k) > u^\2(q^*_k)$.
	This completes the proof for Item 1.
	Item 2)
	Here, $u^{\1}(q^*_k) = u^{\2}(q^*_k)$, and instead of an isolated equilibrium point, the dynamics \eqref{eq:type_mixed} admit a continuum of equilibria.
	% We consider the case where the point $\bm q$ is a mixed nonconformist equilibrium.
	We have $\tau_i = \bm 1^\top \bm q$ for some $i \in [\p]$.
	Similar to the proof of the first part, i.e., $(u^\1(q^*_k) \neq u^\2(q^*_k))$, it can be shown that
	for an $\epsilon$ satisfying  $\big((\tau_i-\epsilon, \tau_i+\epsilon) - \{\tau_i\}\big) \cap \big(\mathcal{C} \cup \mathcal{T} \big) = \varnothing,$ the population proportion of $\1$-players,
	which is governed by 
	\eqref{eq:populationProportion},
	reaches the $\epsilon$-neighborhood of the point $\tau_i$ in finite time and remains there.
	Accordingly, the element $p$ of the state $\x$, except for the first $\pp$ elements (which are associated with the imitative types) and element $\pp + i$ (whose threshold equals $q^*_k$), i.e., $p \in \{\pp+1,\ldots, 2\pp \} - \{\pp + i \},$ converges exponentially to $q_p$, which is the  $p^\text{th}$ component of the equilibrium point $\bm q.$
    \tb{The values of these $\pp - 1$ components of the point $\bm q$ add up to $\eta_{i-1} + \eta'_j$, where $j = \max \{k\leq \p' \mid \tau'_j < \tau_i\}$}.
	With this and the fact that $\sum_{i=1}^{2\pp} x_i$ converges to $q^*_k = \tau_i$, it can be concluded that both terms on the right-hand side of the inequality
	%  \begin{equation*} 
		%  \scalebox{0.9}{$
			% \begin{aligned}
				$\vert (x_{\pp+i}(t) + \sum_{p=1}^{\pp}x_p(t)) - (\tau_i - \eta_{i-1} - \eta'_j) \vert \leq 
				\vert (\sum_{p=1}^{2\pp}x_p(t) - \tau_i) \vert  + \vert \sum_{\substack{\pp<l\leq 2\pp\\ l \neq i+\pp}} (x_l(t) - q_l) \vert$
				%\end{aligned}
				%$}
			%\end{equation*}
			converge to zero, so does its left side.
			Item 3 can be handled similarly to the second item.
			The proof of the last item is similar to that of the second part of Theorem 2 in \cite{aghaeeyan2023discrete} and is omitted.
		\end{proofref}
		\begin{proofref}{\Cref{thm_birkhoff_center_invariant_population}}
			Given \Cref{lemGSAPS}, the mixed population dynamics Markov chain with transition probabilities \eqref{eq:markov}, for sequence $\langle \frac{1}{\N} \rangle_{\N \in \mathcal{N}}$, is a GSAP for the dynamics \eqref{eq:type_mixed}.
			The transition probability \eqref{eq:markov} is homogeneous, and the state space for each population size $\N \in \mathcal{N}$ is finite.
			Hence, the Markov chain admits an invariant measure.
			On the other hand, every solution of the dynamics \eqref{eq:type_mixed} converges to some members of the set $\bm{\mathcal{Q}}$ (\Cref{lem:attraction-population-single-equilibrium}).
			This implies that the set $\bm{\mathcal{Q}}$ is the set of recurrent points of \eqref{eq:type_mixed}. 
			Since the set $\bm{\mathcal{Q}}$ includes some isolated equilibria and/or continua of equilibrium points, it is closed, and hence
			the Birkhoff center of the dynamics \eqref{eq:type_mixed} equals the set $\bm {\mathcal{Q}}.$
			With this and given 
			\Cref{thm:implicationOFSandholm}, the support of the limit of the invariant measure is contained in any open set containing the set $\bm{\mathcal{Q}}.$
		\end{proofref}
\end{document}